\numberwithin{equation}{section}
\numberwithin{algorithm}{section}
\theoremstyle{plain}
\newtheorem{theorem}{Theorem}[section]
\newtheorem{proposition}[theorem]{Proposition}
\newtheorem{lemma}[theorem]{Lemma}
\newtheorem{corollary}[theorem]{Corollary}
\newtheorem{conjecture}[theorem]{Conjecture}
\theoremstyle{definition}
\newtheorem{definition}[theorem]{Definition}
\theoremstyle{remark}
\newcommand{\red}[1]{\textcolor{red}{#1}}
\definecolor{darkgreen}{RGB}{16,122,16}
 \renewcommand{\geq}{\geqslant}
 \renewcommand{\ge}{\geqslant}
 \renewcommand{\le}{\leqslant}
  \newcommand{\y}{\mathbf{y}}
  \newcommand{\x}{\mathbf{x}}
  \newcommand{\z}{\mathbf{z}}
  \newcommand{\1}{\mathbf{1}}
  \newcommand{\0}{\mathbf{0}}
 \newcommand\ba{\mathbf{a}}
 \newcommand\be{{\mathbf e}}
 \newcommand\bF{{\mathbf F}}
 \newcommand\bG{{\mathbf G}}
 \newcommand\bi{{\mathbf i}}
 \newcommand\bj{{\mathbf j}}
 \newcommand\bn{{\mathbf n}}
 \newcommand\bp{{\mathbf p}}
 \newcommand\bq{{\mathbf q}}
 \newcommand\bu{\mathbf{u}}
 \newcommand\bv{\mathbf{v}}
 \newcommand\bV{{\mathbf V}}
 \newcommand\bx{{\mathbf x}}
 \newcommand\by{{\mathbf y}}
\newcommand\cA{{\mathcal A}}
 \newcommand\cB{{\mathcal B}}
 \newcommand\cH{{\mathcal H}}
 \newcommand\cI{{\mathcal I}}
 \newcommand\cP{{\mathcal P}}
 \newcommand\cS{{\mathcal S}}
 \newcommand\cT{{\mathcal T}}
 \newcommand\cU{{\mathcal U}}
 \newcommand\cV{{\mathcal V}}
 \newcommand\cX{{\mathcal X}}
 \newcommand\cY{{\mathcal Y}}
 \newcommand\cZ{{\mathcal Z}}
 \newcommand\rB{{\rm B}}
 \newcommand\rP{{\rm P}}
 \newcommand\rS{{\rm S}}
  \newcommand{\e}{\mathbf{e}}
  \newcommand{\gl}{\mathbf{GL}}
  \newcommand{\m}{\mathbf{m}}
  \newcommand{\n}{\mathbf{n}}
  \newcommand{\trans}{^\top}
\newcommand{\orb}{\operatorname{orb}}
\newcommand{\tr}{\operatorname{Tr}}
\newcommand{\rank}{\operatorname{rank}}
\newcommand{\C}{\mathbb{C}}
\newcommand{\F}{\mathbb{F}}
\newcommand{\R}{\mathbb{R}}
\newcommand{\N}{\mathbb{N}}
\newcommand{\Z}{\mathbb{Z}}
\title{Rank of a tensor and quantum entanglement}
\author{Wojciech Bruzda}
\address{Institute of Theoretical Physics, Jagiellonian University, Krakow, Poland}
\email{w.bruzda@uj.edu.pl}
\author[Shmuel Friedland]{Shmuel~Friedland}
\address{Department of Mathematics, Statistics, and Computer Science, University of Illinois at Chicago,  Chicago, Illinois, 60607-7045, USA }
\email{friedlan@uic.edu}
\author[K.~{\.Z}yczkowski]{Karol~{\.Z}yczkowski}
\address{Institute of Theoretical  Physics, Jagiellonian University, Krakow, Poland}
\address{Center for Theoretical Physics, Polish Academy of Science, Warsaw, Poland}
\email{karol.zyczkowski@uj.edu.pl}
\begin{document}

\maketitle

 \begin{abstract} 
 The rank of a tensor is analyzed in context
 of quantum entanglement.
 A pure quantum state  $\bf v$ of a composite
 system consisting of $d$ subsystems with $n$ levels each
is viewed as a vector in the $d$-fold tensor product of $n$-dimensional Hilbert space and can
be identified with a tensor with $d$ indices, each running from $1$ to $n$.
  We discuss the notions of the generic rank and the maximal rank 
 of a tensor and review results known for the low dimensions. 
 Another variant of this notion, called the border rank of a tensor, 
 is shown to be relevant for characterization of orbits of
 quantum states generated by the group of  special linear transformations.
A quantum state ${\bf v}$ is called {\sl entangled},
if it {\sl cannot} be written in the product form, 
 ${\bf v} \ne {\bf v}_1 \otimes  {\bf v}_2 \otimes \cdots \otimes  {\bf v}_d$,
 what implies correlations between physical subsystems.
 A relation between various ranks and norms of a tensor and the entanglement
 of the corresponding quantum state is  revealed.
  \end{abstract}

\noindent \emph{Keywords}:
	Multipartite quantum systems, nuclear and spectral norms, nuclear rank,  quantum entanglement, quantum states,  symmetric tensors, tensors, tensor rank

 \noindent {\bf 2020 Mathematics Subject Classification}
	14J99, 15A69, 65K10, 81P40, 90C27

\tableofcontents
\section{Introduction}
\label{sec:maxmult}
\subsection{Quantum mechanics and tensors}\label{subsec:qmten}
Quantum mechanics  was born in the beginning of twentieth century.  One way of its description was the ``matrix mechanics''  created by Heisenberg-Born-Jordan in 1925 \cite{Wikmm}.   The matrix formalism of quantum mechanics was created by von Neumann \cite{vNe27} and Landau \cite{Lan27}.
The probabilistic nature of quantum mechanics is puzzling, which led Einstein-Podolsky-Rosen to the notion of entanglement \cite{EPR35}.  Entanglement deals with 
quantum systems, for which there exists a physically distinguished
partition of the entire system into $d$ parts
with $d\ge 2$.  The mathematical description of $d$-partite systems involves the
theory of  tensors.  The common physics notation for tensors was introduced by Dirac in 1939 \cite{Dir39}.  It is very concise,  but it is unfamiliar to a good portion of mathematical community, and
 in several cases it is not adequate.

There are several ways to measure the entanglement of a $d$-partite system described by a $d$-tensor $\cT$, which in quantum mechanics
can be called a \emph{state}.  It is a nonzero tensor, which can be thought as a $d$-array with entries $\cT_{i_1,\ldots,i_d}$.  The bi-partite  case, $d=2$, corresponds to matrices, 
so a $2$-tensor will sometimes be denoted  by $T$. 
The case $d\ge 3$ corresponds to tensors with $d$ indices.  
For simplicity of exposition we will call $\cT$ any $d$-array for $d\ge 2$.

The simplest integer invariant of $\cT$ is its rank,  denoted as $r(\cT)$.  Rank-one tensor corresponds to a tensor product of $d$-vectors, which is called a \emph{product} state.   The rank of $\cT$ is the minimum number of summands in the decomposition of $\cT$ to a sum of rank-one tensor.  Thus the higher the rank of a tensor, the more entangled the corresponding quantum state.
A different way to decompose $\cT$ into a sum of rank-one tensors is to have a decomposition with the minimal norm, 
which in quantum physics might be related to energy.  This brings us to the notion of the  \emph{nuclear norm} \eqref{defnucnrm}.  
The minimal number of rank-one terms in the nuclear norm decomposition is called the nuclear rank, and denoted as $r^{\textrm{nucl}}(\cT)$.  
For matrices the rank and the nuclear rank are equal to the standard rank of matrices.  The nuclear norm of  a matrix is a sum of its singular values.  For tensors $d\ge 3$, while the description of the rank and nuclear rank are relatively easy, the numerical computation can be quite extensive.  In computer science termonology, the computation of all the above mentioned quantites in NP-hard.

The aim of this paper is to give a survey of many results on the rank of tensors, and to emphasize the connection to quantum mechanics and quantum information theory. This work is addressed to a broad audience consisting of computer scientists, mathematicians and  physicists. 
%
\subsection{Quantum information and related fields}\label{subsec:qitrf}
In this subsection we discuss in more details 
{\sl Quantum information theory} that our paper deal with.  Since the  publication of the pioneering paper of Ingarden   \cite{In76}, there was 
 an explosion in the theory of  quantum information
and related fields in the last twenty five years.
On the one hand one could witness enormous success related to 
quantum cryptography \cite{BB14} 
and practical implementations of quantum communication \cite{GT07}.
On the other hand  the progress in quantum computation is still moderate \cite{Pre18} 
and some experts raise doubts \cite{Ka16},
whether an operating quantum computer will be ever constructed.

A quantum computer would allow us to solve some problems that are not known to be solvable in polynomial time on a classical computer,
for instance factorization of large integers \cite{Sho97}.  
The practical impossibility of factoring integers to primes is the basis for the widely used RSA cryptographic scheme.  
One of the key advantages of processing of information by a quantum computer 
relies in the possibility 
of going beyond the standard set of digits $'0'$ and $'1'$,
and using nonclassical states, represented by normalized vectors
in a $n$--dimensional, complex Hilbert space ${\mathcal H}_n$. 
They include superposition of classical states,
which after a measurement yield the result $'0'$ with a certain probability $p$
and $'1'$ with probability $1-p$,
often written as $\sqrt{p}|0\rangle + \sqrt{1-p}|1\rangle$.
We used here the notation of Dirac, explained in details in section \ref{subsec:Diracnot},
in which $|0\rangle$ and $|1\rangle$ denote
any two normalized orthogonal vectors ---
the elements of an orthonormal basis in ${\mathcal H}_n$. 
Assumed normalization of vectors has a simple interpretation:
  The squared modulus of each component represents the probability
   of the state to be measured in the corresponding basis state
   and the sum of probabilities is equal to one.

Non-classical properties characterize also the notable {\sl Bell state}
\cite{Bel64}, written
 $(|0\rangle_A \otimes |0\rangle_B +|1\rangle_A \otimes |1\rangle_B)/\sqrt{2}$,
which exhibits quantum entanglement --- the effect
of non-classical correlations between subsystems,
predicted already by Einstein and Schr{\"o}dinger \cite{EPR35,Sch35,Sch36}.
Note that the outcomes of a measurement performed on
both subsystems are perfectly correlated:
if $'0'$ is registered in the subsystem $A$,
the same result will be obtained in the subsystem $B$.

To define {\sl quantum entanglement}
one requires a physical system
consisting of two (or more) distinguished subsystems \cite{Br02,HHHH09}, 
the state of which is described by
 a vector from a complex Hilbert space with a tensor product structure,
 ${\mathcal H}_{AB}= {\mathcal H}_A \otimes {\mathcal H}_B$.
A special role in such a space is played by the product states,
$|\psi_{AB}\rangle = |\phi_A\rangle \otimes |\phi_B\rangle$,
where  $|\phi_A\rangle  \in {\mathcal H}_A$ and  $|\phi_B\rangle  \in {\mathcal H}_B$.
A state $|\psi\rangle  \in {\mathcal H}_{AB}$
is called {\sl entangled} if it is {\sl not} of the product form,
as it carries some quantum correlations  between subsystems
-- for the definitions of necessary notions used in quantum 
mechanics see Appendix  \ref{AppA}.

The motivation for the definition of entanglement
between two predefined systems
stems from the classical probability theory:
 A product quantum state can be compared to a product probability vector,
 $p(x,y)=p_1(x) p_2(y)$,
 which describes independent variables $x$ and $y$,
 while entangled state corresponds to correlated events. 
 It is thus important to characterize degree of entanglement, 
 interpreted as a degree of quantum correlations between subsystems.

Product states allow one to construct a complete orthonormal product basis,
which reads  $|i\rangle_A\otimes |j\rangle_B$ with $i,j=0,\dots n-1$
 for a system consisting of two subsystems $A$ and $B$ with $n$ levels each.
Any pure state $|\psi_T\rangle$
in the tensor product space ${\cH}_A \otimes {\cH}_B$ 
 of a bipartite system (i.e. consisting of two subsystmes)
can be represented in a product basis by a matrix  $T$ of expansion coefficients,
$|\psi_T\rangle=\sum_{i,j=0}^{n-1} T_{ij}|i\rangle \otimes |j\rangle$.

Note that the Hilbert spaces $\cH_A$ and $\cH_B$ are identified with two vector spaces, while ${\cH}_A \otimes {\cH}_B$ is identified with the space of matrices. 
Similarly, the tensor product of $d\ge 3$ vector spaces $\otimes_{j=1}^d \cH_j$ is identified with the space of $d$-mode multiarrays (tensors).  A rank-one tensor  (matrix) corresponds to a product state.
%
\subsection{Matrices and tensors}\label{subsec:matten}
To characterize  entanglement of the state
$|\psi_T\rangle$ it is sufficient
to perform the singular value decomposition (SVD) of the matrix $T$: 
if there is more than one nonzero singular value then the corresponding quantum state is
not of the product form so it is entangled. 
In other words, the bipartite 
pure state is entangled if the rank of the corresponding
 matrix $T$ is larger than one  \cite{HHHH09}.
%

In a similar way any pure state of a multipartite system, consisting of $d>2$ subsystems,
 can be represented 
by a tensor $\cT$ with $d$ indices.  
A standard definition of a rank of  $\cT\ne 0$ 
is the minimum number of product states whose sum is $\cT$.
However, description of entanglement in such
 a multipartite case is  more difficult than for bipartite systems, 
as algebraic transformations on tensors are much more involved than the operations
 on matrices \cite{WGE16,BZ17}.  Tensor product and entanglement optimization are used in modern quantum chemistry \cite{Sz15}.

For matrices there exists a single notion
of the {\it rank of a matrix}, as various ways to introduce this quantity lead to the same
definition.   This is not the case for tensors with $d\ge 3$ modes,
for which different approaches lead to different definitions
of the {\it rank of a tensor} (with various names).  Furthermore, most interesting definitions of the rank of tensors, as the above definition of the rank, are hard to compute \cite{Has90,HL13,FL16,Shi16,SS18}.  In the present paper by `hard' we mean that the complexity of their computations are suspected to be at least NP-complete. 
  
A special phenomenon occurs for tensors that does not hold for matrices: There are tensors of rank greater than one such that   their rank decomposition is unique (up to permutation of order of rank-one summands).   Such tensors are called \emph{identifiable}.
It is conjectured that general tensors of rank less than the generic rank are identifiable \cite{COV14,COV17}. 

Another example is a generalization 
the notion of SVD for the space of tensors \cite{KB09,LMV00,NL08}.
However in general, these decompositions of a tensor as a sum of rank-one factors do not have the property that in each mode the vectors are orthogonal as in the matrix case.

An analysis of tensors as multiarrays can be traced to the two papers of Hitchcock \cite{Hit27, Hit27a} which introduce the notion of $d$-mode tensor and its rank.   In algebraic geometry the study of symmetric tensors, which is equivalent to homogeneous polynomials, was started by Sylvester \cite{Syl04}.  Sylvester discussed the decomposition of a homogeneous polynomial of degree $d$ in $2$-variables as a minimal sum of linear forms of degree $d$.  In modern terminology, a minimal decomposition of a homogeneous polynomial of degree $d$ as a minimal sum of  $d$-powers of linear terms is called the Waring decomposition \cite{La12}.  Waring  raised the general problem of minimum decomposition of an integer as a sum of $d$-powers of integers  \cite{Nat96}.

A good introduction of tensor decompositions in given by Kolda and Bader in \cite{KB09}.  
A reader  interested in results on tensor rank
is advised to take a look at this survey  \cite{KB09} and its extensive
list of 247 references.  Some of them are not directly related to the scope of this work,
and not quoted here. However, we provide a  few relevant
 references  \cite{Ber13,StA09,TBT09,Van17} 
 related to ranks of tensors that are not  mentioned in \cite{KB09}
 and are not discussed in the present  paper.
We also omitted the topic of rank of tensors 
related to the product of matrices, as discussed in \cite{La12}.


The purpose of this review is twofold: first, we aim to  introduce the problems of pure states entanglement to the community of 
 researchers working in applied and pure mathematics, and computer science.
 Second, we wish to present a survey of recent mathematical results concerning the rank of a tensor and its various generalizations that are relevant in studies
  of quantum entanglement.

Note that the rank of a tensor $r(\cT)$ with $d$ modes over the complex numbers, 
can be viewed as a simple integer quantity characterizing entanglement of a pure state of a quantum system composed out of $d$ subsystems. 
  The rank of a nonzero tensor ranges from $1$, which 
  characterizes a product (separable) state,
   to the maximum rank $r_{\max}$, defined for a given size of the tensor.
It is convenient to use the notion `rank of a pure quantum state' $|\psi\rangle$,  
 which means the rank of the corresponding tensor $\cT$.

 As most of the problems in tensors, the problem of 
 finding the rank of a given tensor is NP-hard \cite{Has90}.  A $d$-mode complex-valued tensor of dimensions $\n=(n_1,\ldots,n_d)$
 with its entries  chosen  independently at random (a generic tensor) 
 has a fixed rank $r_{\textrm{gen}}(\n)$  with probability one.  The value of $r_{\textrm{gen}}(\n)$ can be computed in randomized polynomial
  time using Terracini's lemma \cite{Ter16}. The original
  Terracini's lemma characterizes the generic Waring rank of a
   homogeneous polynomial of degree $d$ of $n$-variables over the complex numbers as discussed above.

There is a general conjecture about the value of the generic rank of $3$-mode tensors \cite{Fri12}, see Conjecture \ref{congenrank3}, which is known to hold in special cases \cite{AOP09}. 
 A generalization of this conjecture to $d$-mode tensors is stated as an open problem below \eqref{genrankdbd}.  A related problem is the value 
of the tensor rank of the tensor product of two tensors,
 linked to the issue of characterization of 
 entanglement and finding an optimal decomposition of
 several copies of a given entangled state \cite{CDS08,CCDJW10,YCGD10}.

 The rank of the tensor product of tensors is submultiplicative (the rank of the tensor product
  is not greater than the product of ranks), and can be strictly submulitplicative \cite{CJZ18}.  
 It is then important to ask whether 
  the rank of the tensor product of two generic tensors of the same dimension 
  is equal to the product of their ranks.
 The rank of the direct sum of two tensors is subadditive.  Strassen's direct sum conjecture stated
 that the rank of the direct sum of two tensors is the sum of the ranks \cite{Str73}.  This conjecture was
 disproved by Shitov \cite{Shi17}.  A special case of Strassen's conjecture can be related to the
 above product rank conjecture for the generic case, i.e., the rank of the product of $k$-copies of
 a generic tensor is the $k$-power of the rank of this tensor \cite{CF18}. 
 Yet another related  notion is the {\it border rank} of a tensor: It is the minimum $r$ such that a given tensor is a limit of tensors of rank $r$.
  This notion is fundamental in algebraic geometry and numerical analysis, in particular for problems of approximation
   of tensors by lower rank tensors \cite{KB09}.

A special interesting case is the case of symmetric tensors, which correspond in
quantum physics to problems involving bosons --- particles with an integer spin 
following the Bose--Einstein statistics. 
Symmetric tensors can be viewed as homogeneous polynomials \cite{La12,FW18}.
The relevant notion of the rank of a symmetric tensor, is the symmetric rank, which is the Waring rank
of the homogeneous polynomial discussed above.  Comon's conjecture claimed that the symmetric 
rank and the rank of a symmetric tensor are equal \cite{CGL08}. 
 Recently Shitov gave a counterexample to this conjecture \cite{Shi17}.  
The generic symmetric rank of a symmetric tensor is given in \cite{AH95},
which is the analog of Conjecture  \ref{congenrank3}.

Another way to measure the entanglement of a quantum pure state is the spectral or nuclear norm (see definitions \eqref{defspecnrm} and \eqref{defnucnrm})
 of the corresponding tensor \cite{FL18,DFLW17}, which are dual norms.  
The spectral norm, also known as the injective norm \cite{Rya02} of a tensor, 
determines  the geometric measure of entanglement \cite{WG03} of the 
corresponding pure quantum state.  The nuclear norm, 
also known as projective norm \cite{Rya02}, 
is the minimum value of the ``energy'' of 
a decomposition of a tensor as sum of rank-one tensors -- see the sentence below \eqref{defnucnrm}.
The minimum number of rank-one tensors in the minimal decomposition 
of a given tensor \eqref{defmindecnn} is called the nuclear rank, and denoted $r^{\textrm{nucl}}(\cT)$.   Hence $r(\cT)\le r^{\textrm{nucl}}(\cT)$.
For matrices the nuclear rank is equal to the rank.  
Nuclear rank of tensors possesses similar properties as matrix rank, unlike the tensor rank \cite{FL16}.   That is, if $\lim_{k\to\infty} \cT_k=\cT$ then $\liminf_{k\to\infty} r^{\textrm{nucl}} \cT_k\ge r^{\textrm{nucl}} (\cT)$.
Assume that there exists $\cT$ such that $r(\cT)$ is greater than the generic rank $r_{\textrm{gen}}$.  Then 
$r^{\textrm{nucl}}(\cT)\ge r(\cT)>r_{\textrm{gen}}$, and the nuclear rank of every tensor in a small enough neighborhood of $\cT$ is not less than  $r^{\textrm{nucl}}(\cT)$.  
Therefore there exists an open set of tensors whose nuclear rank is greater than their ranks.
The nuclear rank of the quantum state provides
 another simple integer measure of quantum entanglement. 

The application part of our paper is how to compute (or estimate)  the various ranks we survey. Apart of computing the generic ranks, which can be
done in randomized polynomial time, all other quantities seem to be NP-hard to compute.  A good way to find or estimate
 from below the rank and symmetric rank is to use polynomial equations, in particular the effective
  Nullstellensatz \cite{AF20} --- If $r$ is less than the rank of a tensor then the corresponding system of polynomials equations is not solvable -- see Subsection \ref{subsec:estrankpoleq}.  
 Hence the Bertini software \cite{BHSW06} is an appropriate tool. 
 To compute the nuclear rank we can use the numerical methods 
 and the software suggested in \cite{DFLW17}.

\subsection{Notation}\label{subsec:notation}
We use several notions of rank in various context, hence we present a table that collects all symbols used in the text
with reference to the page on which the objects appears for the first time:
\begin{center}
\begin{tabular}{ l | c | r }
\bf{description} & \bf{notation} & \bf{defined at page} \\
\hline
rank of a matrix $A$               & $r(A)$                            & \pageref{eq:def_rank} \\
rank of a tensor $\cT$             & $r(\cT)$                          & \pageref{eq:def_Tensorrank} \\
generic rank of $\cT\in\C^\n$      & $r_{\textrm{gen}}(\n)$            & \pageref{eq:def_genericrank}\\
maximum rank of $\cT\in\C^\n$      & $r_{\textrm{max}}(\n)$            & \pageref{eq:def_maximumrank}\\
border rank of $\cT\in\C^\n$                          & $r_{\textrm{b}}(\cT)$                     & \pageref{eq:def_borderrank}\\
Kruskal's rank of vectors $\x_1,\ldots,\x_l\in\C^m$   & $r_{\textrm{K}}(\x_1,\ldots,\x_l)$        & \pageref{eq:def_Kruskalrank}\\
symmetric rank of $\cS$                               & $r_{\textrm{s}}(\cS)$                     & \pageref{eq:def_symmetricrank}\\
generic rank of symmetric tensor                      & $r_{\textrm{gen}}(d,n)$                   & \pageref{eq:def_genericsymmetricrank}\\
maximum rank of symmetric tensor                      & $r_{\textrm{max}}(d,n)$                   & \pageref{eq:def_maximumsymmetricrank}\\
nuclear rank of $\cT$                                & $r^{\textrm{nucl}}(\cT)$                  & \pageref{eq:def_nuclearrank}\\
generic nuclear rank of $\cT\in\C^\n$                & $r_{\textrm{gen}}^{\textrm{nucl}}(\n)$    & \pageref{eq:def_genericnuclearrank}\\
maximum nuclear rank of $\cT\in\C^\n$                & $r_{\textrm{max}}^{\textrm{nucl}}(\n)$    & \pageref{eq:def_maximumnuclearrank}\\
symmetric nuclear rank of $\cS$                       & $r_{\textrm{s}}^{\textrm{nucl}}(\cS)$     & \pageref{eq:def_symmetricnuclearrank}\\
\end{tabular}
\end{center}
\subsection{Summary of the paper}\label{subsec:summary}
Section \ref{sec:prelres} discusses basic notions in quantum information theory (QIT) and tensors that will be used in this paper. Subsection \ref{subsec:Diracnot} discusses briefly the Dirac notation, and the notion of a simple physical system.  A simple system is represented by a vector, which is usually of length one (normalized),  and corresponds to a single party system. Subsection \ref{subsec:entangle} discusses complex systems and the notion of entanglement.  A composite system is represented by tensor product of $d$-simple components.  The simplest composite system consists of two parties ($d=2$), and usually referred as a bipartite system.  It is represented by a matrix, which usually has Frobenius norm one (normalized).  The bipartite state is entangled if and only if the corresponding matrix has rank greater than one.  For $d\ge 3$ a $d$-partite system is represented by a $d$-mode tensor, which usually has Hilbert-Schmidt norm one (normalized).  It is entangled if and only if 
the corresponding tensor has rank greater than one.  Subsection \ref{subsec:Kron} discusses the Kronecker product of $d$-vector spaces and their interpretation in QIT.

Section \ref{sec:tenrank} discusses properties and results on tensor rank of general tensors. Subsection \ref{subsec:matrank} recalls the standard properties of matrix rank.  Subsection \ref{subsec:SVD} discusses the well known properties of Singular Value Decomposition (SVD), which is known in physics community as Schmidt decomposition.  In subsection \ref{subsec:tenrank}  we discuss the notions of tensor rank, generic rank and border rank, and some of their properties.

In section \ref{sec:rank3ten} we discuss in details the rank properties of $3$-mode tensors. Subsection \ref{subsec:basres3ten} gives basic results on the rank of $3$-tensors.  In particular we discuss Kruskal's uniqueness theorem \cite{Kr77}.  
Subsection \ref{subsec:rankmn2ten} reviews the formula for the rank of $m\times n\times 2$ tensors discovered in \cite{Ja79}.  In subsection \ref{subsec:strasscon3} we discuss the validity of Strassen's direct sum conjecture for certain $3$-tensors.
Subsection \ref{subsec:genrank} discusses known results and a conjecture on the formula for the generic rank of $3$-tensors.  In subsection \ref{subsec:numprocrgenmnp} we give a randomized polynomial-time algorithm to compute the generic rank of $3$-tensor.  Subsection \ref{subsec:maxrank3t} discusses known results on maximal ranks of $3$-tensors.  

In section \ref{sec:rankdge4} we discuss ranks of $d$-tensors for $d\ge 4$. In subsection \ref{subsec:gencased} we bring known generalizations of results from subsection \ref{subsec:basres3ten}  to $d$-mode tensors.  In subsection \ref{TerraciniLemma} we recall Terracini's lemma, which is a basic tool in understanding and analyzing the set of tensors of rank $r$.  Subsection \ref{subsec:upbdgr} gives an upper bound on the generic rank of tensors using pure combinatorial methods. In subsection \ref{subsec:grqunit} we discuss the  generic rank of $d$-qunits ($d$-mode tensors whose each mode is $\C^n$).  Subsection
\ref{subsec:estrankpoleq} discusses an algorithmic way to find the rank of a tensor using solvability of system of linear equations with many variables.  In subsection \ref{subsec:genidgt} we discuss the problem of
generic identifiability of tensors.  Namely, assuming an integer $r$ is less than a generic rank, 
when a generic tensor of rank $r$ has a unique decomposition as a sum of $r$ rank one tensors?

In section \ref{sec:symten} we discuss symmetric tensors---bosons in physics.  Subsection \ref{subsec:hompol} discusses  basic properties  of symmetric tensors and their relation to homogeneous polynomials. In subsection \ref{subsec:maxsymrk}
we give a known general upper bound for the maximum symmetric rank in terms of generic symmetric rank, and some known values of maximum symmetric rank.
Subsection \ref{subsec:rankWd} shows that the rank of symmetric tensor $|W_d\rangle$ is $d$, while its border rank is $2$.  Here 
$|W_d\rangle\in\rS^d\C^2$ corresponds to the polynomial
$dx_1^{d-1}x_2$.
For instance, the three-qubit $|W\rangle$ state reads $|W_3\rangle=\frac{1}{\sqrt{3}}\left(|100\rangle +|010\rangle+|001\rangle\right)$.
In subsection \ref{subsec:rprodten} we show explicitly that the rank of Kronecker and tensor products can be strictly submulitplicative by considering the ranks of $|W\rangle\otimes_K|W\rangle$ and $|W\rangle\otimes|W\rangle$, which are $7$ and $8$ respectively, while the square of the rank of $|W\rangle$ is $9$.  In a short subsection \ref{subsec:compsymten} we discuss briefly computational methods for symmetric rank of symmetric tensors.  Subsection \ref{subsec:genidst} gives a short account of the results in \cite{COV}, which show that the generic identifiability property of symmetric tensors holds for a rank less than the symmetric generic rank, except a number of known cases.  

Section \ref{sec:nucrank} is mainly devoted to the notion of the nuclear rank of a tensor.  In subsection \ref{subsec:specnrm} we discuss the spectral norm and 
geometric measure of entanglement.  Subsection \ref{subsec:nucnorm} introduces the  nuclear norm and nuclear rank. In subsection \ref{subsec:face}  we discuss the faces of the unit ball of the nuclear norm.  Subsection \ref{subsec:matnnrm} discusses the exposed faces and facets of the unit balls of matrix nuclear and spectral norms.  In subsection \ref{subsec:GHZ} we show that the nuclear rank of the state 
$|GHZ\rangle=|000\rangle + |111\rangle/\sqrt{2}$
 is $2$.  Subsection \ref{subsec:gn3qub} discusses the  generic and maximum nuclear rank of symmetric $3$-qubits.

In Appendix \ref{AppA} we present definitions of some notions used in quantum theory
and discussed in this work.

\section{Preliminary results}\label{sec:prelres}
This Section is organized as follows: In subsection \ref{subsec:Diracnot} we introduce the Dirac notation and the notion of a simple quantum system aimed at mathematicians.  In subsection \ref{subsec:entangle} we describe a composite systems consisting of $d$-simple systems.  Mathematically this is equivalent to the introduction of $d$-tensor product of the corresponding vector spaces.  Next we introduce the notion of entanglement, which corresponds to a tensor of rank greater than one.  We discuss the differences between the bipartite case that corresponds to matrices and $d$-partite case for $d\ge 3$. In subsection \ref{subsec:Kron}  we discuss Kronecker tensor product of tensor spaces and their quantum interpretation.

 \subsection{Dirac notation and simple systems}\label{subsec:Diracnot}
 Let $\cH$ be a finite-dimensional Hilbert space with an inner product  $\langle \x,\y\rangle$.  (So $\cH$ is a vector space  over the field of complex numbers $\C$. We assume that the inner product is linear in $\x$ and bar linear in $\y$, see below.)  We denote by $\cH_n$ an $n$-dimensional Hilbert space.  Let  $\e_1,\ldots,\e_n$ be an orthonormal basis in $\cH_n$.  We identify $\cH_n$ with the column space $\C^n=\{\x=(x_1,\ldots,x_n)\trans\}$, where $\x=\sum_{i=1}^n x_i\e_i$,  $\y^*=(\bar y_1,\ldots,\bar y_n)$ and $\langle \x,\y\rangle=\y^*\x$. Here $\bar z$ stands for the complex conjugate of $z\in\C$.  
 Thus one can write
 \begin{equation*}
 \langle a\x+b\z,\y\rangle=a \langle\x,\y\rangle+b \langle\z,\y\rangle,\quad 
 \langle \x, a\y+b\z\rangle=\bar a \langle\x,\y\rangle+\bar b \langle\x,\z\rangle.
 \end{equation*}
The notation of Dirac, which is routinely  used in the literature on quantum physics, 
can be summarized as follows. The symbol $|i\rangle$ 
represents a basis vector in $\cH_n$
and usually stands for $\e_{i+1}$ for $i \in \{0,\ldots,n-1\}$, or for $\e_i$, where $i\in[n]=\{1,\ldots,n\}$. For simplicity of our exposition the latter convention will be used. 

  A vector in $\cH_n$ is often  denoted as
 $|\psi\rangle=\sum_{i=1}^n x_i|i\rangle$.
 In quantum theory such a vector describes a physical system
 and it is called a {\sl pure quantum state},
 but for brevity the shorter versions  as 
 a `pure state' or a `state' are also used.  
Thus $|\psi\rangle $ corresponds to $\x$ and 
$\langle \psi|$ corresponds to $\x^*$.  Furthermore $\langle \phi |\psi\rangle$ is the inner 
product $\langle \psi, \phi\rangle$.
 For physical applications one assumes that $|\psi\rangle  \ne 0$
 and the states are normalized, 
$\langle \phi |\phi\rangle=||\phi||^2=1$,
 but  sometimes non-normalized states will also be used here.
A given quantum system is described by a state 
in a $n$ dimensional space, $|\psi \rangle  \in {\cH}_n$,
also denoted as $\C^n$.
In physics parlance the spaces $\C^2$, $\C^3$ and $\C^n$ 
are often called the space of \emph{qubits}, \emph{qutrits} and \emph{qunits}, respectively.
As the overall phase is physically not relevant,
 a quantum pure state refers to entire equivalence class, 
 $|\psi\rangle \sim e^{i \alpha} |\psi\rangle$,
 with $\alpha \in [0, 2\pi)$.
Thus the space of normalized pure quantum states 
forms a complex projective manifold, ${\mathbb C}P^{n-1}$, which is 
obtained from $\C^n$ by identifying each complex line through the origin
 as a point in ${\mathbb C}P^{n-1}$.  It has complex dimension $n-1$ and
  real dimension $2(n-1)$.
In the simple case of a single qubit, $n=2$,
the space of pure states is called the {\sl Bloch sphere}, 
 ${\mathbb C}P^{1}=S^2$.

Observe that $|\psi\rangle\langle \phi |$ stands for the corresponding
matrix $\x\y^*$ of rank one.
In particular  $|\psi\rangle\langle \psi |=P_{\psi}$ represents
a projector onto a normalized state $|\psi\rangle$, and it is easy to see
that $P_{\psi}^2=|\psi\rangle\langle \psi |\psi\rangle\langle \psi|=P_{\psi}$,
 where we used the fact that  $\langle \psi |\psi\rangle=1$.
  Besides pure states, in quantum physics one works
with convex combinations of projectors onto pure states,
$\rho=\sum_i a_i |\psi_i\rangle\langle \psi_i |$
with nonnegative weights,  $a_i \ge 0$, $\sum_i a_i=1$,
called {\sl mixed states}.
They are represented by positive semi-definite density matrices, $\rho^*=\rho\ge 0$
-- see Appendix  \ref{AppA} --
but in this work  they appear only sporadically.

\subsection{Composite systems and quantum entanglement}
\label{subsec:entangle}
A key axiom of quantum theory states that a quantum system
composed of two subsystems $A$ and $B$, of dimension $n_1$ and $n_2$,
respectively, is described in a  tensor product Hilbert space,
${\cH}_{n_1 n_2}={\cH}_{n_1} \otimes {\cH}_{n_2}$,
sometimes written ${\cH}_A \otimes {\cH}_B$.  For instance,
 if the system consists of two particles $A$ and $B$ each represented as vectors $\x\in \cH_A, \y\in \cH_B$ then the joint system $AB$ is 
 represented by their tensor product
 or a matrix $X\in \cH_A\otimes\cH_A$. Note that $X=\x\y^\top$ if and only if both
 subsystems  are not entangled, see below.  
 The composite system $AB$ is said to consists of two parts $A$ and $B$.
In a similar way, a physical system composed of $d$ parts
is represented in a tensor product  Hilbert space with $d$ factors.

The shortness of Dirac notation is transparent when considering 
 tensor product of $d$ Hilbert spaces:
  $\cH=\otimes_{k=1}^d \cH_{n_k}=\cH_{n_1}\otimes\cdots\otimes\cH_{n_d}$
 of product dimension, $\prod_{j=1}^d n_j$.
 This tensor product is viewed in quantum physics as the $d$-partite space.  
 Assume that $|\psi_k\rangle\in \cH_{n_k}$ for $k\in[d]$.
The product vector (state) is denoted as 
\[|\psi_1\rangle|\psi_2\rangle\cdots|\psi_d\rangle=|\psi_1\rangle\otimes\cdots\otimes|\psi_d\rangle=\otimes_{k=1}^d|\psi_k\rangle.\]
Assume that $\e_{1,k},\ldots,\e_{n_k,k}$ is an orthonormal basis of $\cH_{n_k}$.  
Then 
$\otimes_{k=1}^d \e_{i_k,k}$ for $i_1\in[n_1],\ldots,i_d\in[n_d]$ 
is an orthonormal basis in $\cH$. In Dirac notation the vector
 $\otimes_{k=1}^d \e_{i_k,k}$ is denoted as
$|i_1\cdots i_d\rangle$ or $ \vert i_1\rangle \otimes \vert i_2\rangle \otimes 
 \dots \otimes \vert i_d\rangle$.  
   
 Any pure quantum  state $|\psi\rangle$ in $\cH$ (a vector of length one)
corresponding to a physical system composed of  $d$ subsystems 
with $n$ levels each and described in a vector space of dimension $n_1 n_2\cdots n_d$,
can be written in a product basis,
\begin{equation} 
\vert \psi\rangle = \sum_{i_1=1}^{n_1}  \sum_{i_2=1}^{n_2} \dots \sum_{i_d=1}^{n_d}
 \cT_{i_1,i_2,\dots, i_d} \; \vert i_1\rangle \otimes \vert i_2\rangle \otimes 
    \dots \otimes \vert i_d\rangle .
\label{tensor11}
\end{equation}
\noindent
Thus, for a multipartite system,  $d>2$, the state $|\psi\rangle$ corresponds to a tensor 
  $\cT\in \otimes_{k=1}^d \C^{n_k}$, whose coordinates 
  in the standard basis are $\cT_{i_1,\ldots,i_d}\in\C$.  
  In the special cases of simple systems, $d=1$,
  or bipartite systems, $d=2$, the tensor  $\cT$ reduces to a vector or to 
  a  matrix, respectively.
The normalization condition,
$\langle\psi| \psi \rangle = ||\psi||^2 =1$,
implies that the Hilbert-Schmidt norm (also known as Frobenius norm for matrices)
\begin{equation}\label{defHSnorm}
\|\cT\|_2=\sqrt{ \sum_{i_1, \dots, i_d=1}^{n_1,\dots, n_d}     | \cT_{i_1,i_2,\dots, i_d}|^2}
\end{equation}
is fixed to unity,  $\|\cT\|_2=1$.  
   
In many parts of this paper we will restrict ourselves to the most interesting case in quantum setting: $n_1=\cdots=n_d=n$.   This 
means that each subsystem lives in the Hilbert space   $\cH_{n}$ of the same size $n$.
Then $\cH$ is denoted either $\cH_n^{\otimes d}$ (physical notation) 
or $\otimes^d \cH_n$
(mathematical notation).  Furthermore, assume that $|\psi_1\rangle=\cdots=|\psi_d\rangle=|\psi\rangle$ then $\otimes_{k=1}^d|\psi_k\rangle$ is denoted either $|\psi\rangle^{\otimes d}$ or $\otimes^d|\psi\rangle$.

In an experimental setting one may change 
the physical partition of the entire
system into a different composition of subsystems.  This change
corresponds to a reshaping tensor in such a way that the total number
of elements is preserved. For instance, a matrix $6\times 6$ describes a bipartite
$6\times 6$ system, while a four-index tensor $\sum_{i_1,i_2,i_3,i_4}\cT_{i_1,i_2,i_3,i_4}|i_1i_2i_3i_4\rangle$ with $i_1,i_2\in\{1,2\}$
and $i_3,i_4\in\{1,2,3\}$ represents a $2\times 2\times 3\times 3$ system composed of two qubits and two qutrits.

A quantum state  $|\psi\rangle \in {\cH}$ is called {\it  separable}
 (and hence non-entangled)  if the state has a product form,
$|\psi\rangle=|\phi_1\rangle \otimes |\phi_2\rangle \otimes \cdots \otimes |\phi_d\rangle$,
so that the rank of the  corresponding tensor  $\cT$ is equal to one.
In all other cases the state is called {\it entangled},
as it has no product form
-- see  Appendix  \ref{AppA}.  

Note  that the term {\it entanglement} has a meaning if the tensor product structure 
is specified. Then the physical partition of the entire system into $d$ subsystems is fixed,
and the  product basis $|i_1 i_2  \dots i_d\rangle=\vert i_1\rangle \otimes \vert i_2\rangle \otimes 
    \dots \otimes \vert i_d\rangle$
in which the state  (\ref{tensor11}) is represented is well defined. 

Characterization of the degree of entanglement is
easier in the case of  bipartite systems. 
A normalized pure state of an $n \times n$ system
can be represented in a product basis by a complex square matrix, 
$|\psi_{AB}\rangle = \sum_{i,j=1}^n T_{ij}|i,j\rangle$,
where states $|i\rangle$  with $i\in[n]$ form an orthonormal basis
in the first subsystem $A$,
while an analogous basis $|j\rangle$ refers to the second subsystem $B$.
Complex expansion coefficients $T_{ij}$ form a vector of length $n^2$,
but it is often convenient to treat them as a square matrix of size $n$.
The quantity $|T_{ij}|^2$ can be viewed as a probability that the system $AB$ 
is measured in the state $|ij\rangle$.

The state $|\psi_{AB}\rangle$
 is called {\sl separable} if and only if the rank of the matrix $T$ is one,
so the state has the product form,   
$|\psi_{AB}\rangle=|\phi_{A}\rangle \otimes |\phi_{B}\rangle$. 
Separability refers to the fact that the  joint physical system $AB$
can be then divided into two separate parts, $A$ and $B$,
and the results of measurements performed separately
on both of  them are not correlated.
Note that the space of separable pure states 
is equivalent to the Cartesian product of two complex
projective manifolds \cite{BZ17}, which forms a Segre embedding, 
 ${\mathbb C}P^{n-1} \times {\mathbb C}P^{n-1} \subset
 {\mathbb C}P^{n^2-1}$.
In the simplest case of two-qubit system, $n=2$,
the set of separable states forms the Cartesian
 product of two Bloch spheres,
 $S^2 \times S^2 \subset  {\mathbb C}P^{3}$.

A pure state which is not separable is called {\sl entangled}
 and one may introduce several measures of quantum entanglement \cite{HHHH09,BZ17},
 which aim to quantify, to what extend a given bipartite state  $|\psi_{AB}\rangle$ 
is {\sl not} of the product form. 
In general,  entanglement measures are not equivalent, 
in the sense that their maximal values are attained for different states.
The simplest quantity is given by  the rank $r$ of the corresponding matrix $T$ of size $n$,
but it is not a smooth function of the state.
Another  possibility is to deal with various norms of $T$.
Assumed normalization of the state,
 $\langle \psi_{AB}|\psi_{AB}\rangle=1$,
implies the following constraint for the Frobenius (Hilbert--Schmidt) norm, 
$||T||_{2}^2=\tr T^*T=1=\sum_{i=1}^n \lambda_i$.
Here $\lambda_i$ denote
the eigenvalues of the positive semidefinite  matrix $T^*T$, 
while $\sigma_i=\sqrt{\lambda_i}$ represent the singular values of $T$.
They arise by the singular value decomposition,
$T=UDV^*$, where $U$ and $V$ are unitary, while
$D$ is a diagonal matrix with non-negative entries
$\sigma_i$ at the diagonal -- see Section \ref{subsec:SVD}.
Then the corresponding bipartite state can be written
using the {\sl Schmidt decomposition},
$|\psi\rangle =\sum_{j=1}^r \sigma_j |j\rangle \otimes |j'\rangle$, 
equivalent to Eq. (\ref{tensor11}), 
where unitaries $U$ and $V$ determine the basis
$|j\rangle$ and $|j'\rangle$, respectively.

Let  $\lambda_{\rm max}$ denote the largest eigenvalue of  $T^*T$,
 so that the spectral norm of $T$ is given by the largest singular value, 
 $||T||_{\infty}=\sigma_{\rm max}=\sqrt{\lambda_{\max}}$. 
Then the state $|\psi_{AB}\rangle$ is separable 
if and only if  $\lambda_{\max}=\sigma_{\rm max}=1$,
so that, the smaller the norm  $||T||_{\infty}$,
(under the restriction that the Frobenius  norm of $T$ is fixed), 
the larger entanglement. Thus, as a measure of entanglement one 
can take a suitable smooth function of $\lambda_{\max}$ (or $\sigma_{\rm max}$),
for instance  $1-\lambda_{\rm max}$
or  $-\log( \lambda_{\rm max})$
advocated  in \cite{WG03}.
The Schmidt decomposition implies 
that the  maximal overlap of the  analyzed state with any product state reads,
$ \max_{| \psi_{sep}\rangle} 
|\langle \psi_{AB} | \psi_{sep} \rangle   |^2=\lambda_{\rm max}$,
where 
$| \psi_{sep} \rangle =|\phi_A\rangle  \otimes | \phi_B \rangle$.
This scalar product determines the minimal distance
of the analyzed state to the manifold of separable (product) states,
 a quantity  called  {\sl geometric measure of entanglement} \cite{Shi95,ZB02,WG03}.
 
In the case of pure states
the natural geodesic distance on ${\mathbb C}P^{n^2-1}$
is equivalent to the  Fubini--Study distance, $D_{FS}^{\rm min}=\arccos(|  \langle \psi_{AB}|\psi_{CD} \rangle |  )$
 also called the quantum angle, as it corresponds to the angle between 
 $|\psi_{AB}\rangle$ and $|\psi_{CD}\rangle$.
One   may also analyze various distances in the space of mixed states (density matrices)
between the projector on the analyzed state, $\rho_{\psi_{AB}}=|\psi_{AB}\rangle \langle \psi_{AB}|$,
and the projector on the closest product state.
Depending on the distance selected \cite{LS02,ZB02}, 
one obtains $D_{HS}^{\rm min}=\sqrt{2(1-\lambda_{\rm max})}$ for the smallest 
Hilbert--Schmidt distance between two states $\rho$ and $\omega$,
namely $D_{HS}(\rho, \omega)=[\tr (\rho-\omega)^2]^{1/2}$.
The smallest value with respect to the trace distance,
$D_{1}(\rho, \omega)=\tr |\rho-\omega|$ with $|X|=\sqrt{X^*X}$
reads
$D_{1}^{\rm min}=2 \sqrt{1-\lambda_{\rm max}}$,
while the Bures distance, 
$D_{B}(\rho, \omega)=[2-2 \tr|\sqrt{\rho} \sqrt{\omega}|]^{1/2}$,
leads to 
$D_{B}^{\rm min}= [2(1-\sigma_{\rm max})]^{1/2}$
-- see also Section \ref{subsec:specnrm}.

Another, more precise entanglement measure can be obtained
from the entire vector of squared singular values of $T$,
which sum to unity,  $\sum_i\lambda_i=1$.
 Hence one can write down the entropy of this vector, 
$S(|\psi_{AB}\rangle)=-\sum_{i=1}^n \lambda_i \log \lambda_i$,
called {\sl entanglement entropy} of the
bipartite pure state $|\psi_{AB}\rangle$.
This quantity, being a continuous function of $|\psi_{AB}\rangle$
and having several  information-theoretical interpretations,
is often considered as a distinguished measure of
 the bipartite entanglement \cite{HHHH09,BZ17}.
 For instance, in the simplest case of two-qubit system,
 $\cH_4=\cH_2\otimes \cH_2$,
 the entropy is maximized for the {\sl Bell entangled state} \cite{Bel64},
 $|\phi_{AB}^+\rangle = (|00\rangle + |11\rangle)/\sqrt{2}$,
 for which $\lambda_1=\lambda_2=1/2$, so that
$S(|\phi_{AB}^+\rangle)=\log 2$.
Recall that $|00\rangle$ is a useful shorthand for
the product state, often written in various ways,
 $|0\rangle_A \otimes |0\rangle_B =|0\rangle \otimes |0\rangle= |0,0\rangle=
 |00\rangle $.

The vector of singular values of the matrix $T$,
  representing the analyzed state,  allows one to compute its trace norm,
 $||T||_{1}= \tr\sqrt{T^*T}=\sum_{i=1}^n \sigma_i$.
Since the bipartite state $|\psi_{AB}\rangle$ is separable 
if and only if $\sigma_{\rm max}=1$ so that $||T||_{1}=1$,
 to construct an alternative measure of bipartite  entanglement
 one can consider the quantity $||T||_{1}-1$. 
 In short,  
 under the normalization assumption,  $||T||_{2}=1$,
 the larger trace norm of $T$, the more entangled state.
 In the case of  a two-qubit system the maximal value of the trace norm,
 $||T||_{1}^{\max}=\sqrt{2}$, is achieved for the Bell entangled state 
 $|\phi_{AB}^+\rangle=(|00\rangle+|11\rangle)/\sqrt{2}$.

For composite systems  with $d\ge 3$ parts,  a similar strategy does not work
as the singular value decomposition of a matrix
has no direct generalization for a tensor with $d$ indices \cite{LMV00a,CHS00}.
However,  the rank of the tensor
 can still serve as one of the simplest measures of
quantum entanglement, as 
 in this setting the rank  $r(\cT)$ is given by the minimal natural number
$r$ such that the  corresponding state  (\ref{tensor11})
can be represented as a superposition of $r$ product states,
\begin{equation} 
\vert \psi\rangle = \sum_{i=1}^{r}  
a_i  \vert \phi_1^{(i)} \rangle \otimes \vert \phi_2^{(i)} \rangle \otimes 
    \dots \otimes \vert  \phi_d^{(i)}\rangle ,
\label{tensor22}
\end{equation} 
with arbitrary complex  coefficients $a_i$. Note that the states $|\phi_j^{(i)}\rangle$
related to the subsystem number $j$, with $1\le j \le d$, need not be orthogonal.
In physics literature one uses the term
{\sl  rank of a composite pure state}, 
which is equal to the rank of the corresponding  tensor $\cT$.
Furthermore, to quantify entanglement of a multipartite state
one uses the {\sl Schmidt measure} \cite{EB01,WGE16},
equal to the log of the rank of the corresponding tensor,
 $E_S(|\psi\rangle)=\log r(\cT)$.
More precise description of multipartite entanglement
can be obtained by studying the 
spectral norm  $||\cT||_{\infty}$ of a tensor \eqref{defspecnrm}, under the assumption that the Hilbert-Schmidt norm \eqref{defHSnorm} is fixed to unity.
An alternative approach is based on the nuclear norm $||\cT||_1$ of a tensor \eqref{defnucnrm},
which can be considered as a generalization of the
matrix trace norm for tensors.

\subsection{Kronecker tensor product}\label{subsec:Kron}
As discussed in \S\ref{subsec:Diracnot}, the tensor product $\cH_A\otimes \cH_B$ corresponds to a
bipartite space, while $\cH=\otimes_{k=1}^d\cH_{n_k}$ corresponds to a $d$-partite space.  Denote by $\cH_B^{\vee}=\{|\psi\rangle^{\vee}, |\psi\rangle\in\cH_B\}$ the dual space to $\cH_B$.  That is, $|\psi\rangle^\vee$ is viewed as a linear function on $\cH_B$: $|\psi\rangle^{\vee}(|\phi\rangle)= \langle \psi| \phi\rangle$.
Then bipartite product $\cH_A\otimes \cH_B^\vee$,  is identified in a classical way with the space of linear operators $L(\cH_A,\cH_B)=\{L,\;L:\cH_B\to \cH_A\}$.  Namely,
$|\theta\rangle\otimes|\psi\rangle^{\vee}(|\phi\rangle)=\langle \psi|\phi\rangle |\theta\rangle$.  It is customary to abuse the notation by identifying $\cH_m\otimes\cH_n$ with the space of $m\times n$ matrices $\C^{m\times n}$.

 Let $A=[A_{i,j}]\in \C^{m\times n}$ and $B=[B_{k,l}]\in \C^{p\times q}$.  
Then the Kronecker tensor product $C=A\otimes_K B$ is the block matrix $C=[A_{i,j}B]\in \C^{(mp)\times (nq)}$.  (Note that $A\otimes B$ is a $4$-tensor in $\C^m\otimes\C^n\otimes\C^p\otimes\C^q$ with $(A\otimes B)_{i,j,k,l}=A_{i,j} B_{k,l}$.)  This is equivalent to
$(\cH_{m}\otimes \cH_{n})\otimes_K(\cH_p\otimes \cH_q)=\cH_{mp}\otimes \cH_{nq},$
where $\cH_m\otimes\cH_p$ and $\cH_n\otimes\cH_q$ are viewed as Hilbert spaces $\cH_{mp}$ and $\cH_{nq}$ respectively.  More generally, given $l$ tensor product spaces $\cH_{n_{1,i},\cdots, n_{d,i}}=\otimes_{j=1}^d \cH_{n_{j,i}}$ for $i\in[l]$, the Kronecker tensor product is defined as the following $d$-tensor product
\begin{eqnarray}\label{defkronprod}
\otimes_K^{i\in[l]} \cH_{n_{1,i},\cdots, n_{d,i}}=\otimes_K^{i\in[l]}(\otimes_{j=1}^d \cH_{n_{j,i}})=\otimes_{j=1}^d (\otimes_{i=1}^l \cH_{n_{j,i}}).
\end{eqnarray}
Note that for $d=2$, the above definition reduces to the standard fact that Kronecker product of $l$ matrices is a matrix.
In \cite{CJZ18} the Kronecker product $\otimes_K$ is denoted by $\boxtimes$.

The Kronecker product \eqref{defkronprod} has the following quantum interpretation \cite{CF18}.
Consider a group of $d$ people that share 
 $l$ states with $d$-subsystems each.
 This means that  the person $j$ controls $l$ subsystems, 
 corresponding to the subspaces $\cH_{n_{j,1}},\ldots,\cH_{n_{j,l}}$. 
 These subsystems  
 are described by vectors in the product Hilbert 
 space 
 $\cH_{n_{j,1}\cdots n_{j,l}}=\otimes_{i=1}^l \cH_{n_{j,i}}$.  
Thus we can interpret the total system shared by $d$ of users,
each  controls $l$ subsystems,  
as a $d$-partite system on the corresponding Hilbert spaces.

It is possible to define the Kronecker tensor product  of $l$  tensor product spaces $\cH_{n_{1,i},\cdots, n_{d_i,i}}$ for $i\in[l]$, where $d_1,\ldots, d_l$ are different.  Set $d=\max\{d_1,\ldots,d_l\}$. For $d_i<d$ set $\cH_{d_i+k,i}=\C, n_{d_i+k,i}=1$ for $k\in[d-d_i]$.    Define $\cH_{n_{1,i},\ldots,n_{d,i}}=\otimes_{j=1}^d \cH_{n_{j,i}}$.  Let  $\cH_i'$ be a tensor product obtained from $\cH_{n_{1,i},\ldots,n_{d,i}}$ by permuting the factors $\cH_{d_i+1,i},\ldots,\cH_{d,i}$ with other factors.  For $d_i=d$ let $\cH_i'=\cH_{n_{1,i},\ldots,n_{d,i}}$.  Then 
$\otimes_K^{i\in[l]} \cH_{n_{1,i},\cdots, n_{d_i,i}}=\otimes_K^{i\in[l]}\cH_i'$.

\section{Tensor rank}\label{sec:tenrank} 
In this section we discuss basic notions and results on matrices and tensors, with the emphasis on the notion of tensor rank.  In subsection \ref{subsec:matrank} we discuss the well known characterizations and properties of matrix rank.  In subsection 
\ref{subsec:SVD} we recall the properties of Singular Value Decomposition (SVD), which is known in the  physics community as Schmidt decomposition.  We also discuss the geometric measure of entanglement of a bipartite state, which has a simple formula in terms of 
the operator norm ($\sigma_{\max}$) of the corresponding matrix.  
The maximally entangled bi-partite Bell state has the minimum $\sigma_{\max}(A)$.  
In subsection \ref{subsec:tenrank}  we discuss the notion of the rank of tensor.  We point out the submultiplicativity of  the rank of tensors under the tensor and Kronecker tensor product.   An example of strict submulitplicativity is given in Lemma \ref{UVtenrin}.  A similar result holds for the subadditivity of tensor rank under the direct sum.  In some cases the strict inequality holds, that is the Strassen's direct sum conjecture is false.   The notions of generic, maximum and border ranks of tensors are also reviewed.
\subsection{Matrix rank}\label{subsec:matrank}
Let $A\in\C^{m\times n}$ be a nonzero matrix.  Then $\rank A$ also written $r(A)$\label{eq:def_rank} is the minimum $k\in\N$ such that 
$A=\sum_{i=1}^k\x_i\y_i^*$, where $\x_i\in\C^m, \y_i\in\C^n$ for $i\in[k]$. 
  Equivalently, for a bipartite state $A\in\cH_m\otimes\cH_n$ the rank of $A$ is the minimum number of summands in the decomposition of $A$ as a sum of the product states.  The rank of zero matrix is $0$.  The rank of a product state is $1$, and the matrix $\x\y^*$, for $\x\in\C^m\setminus\{\0\},\y\in\C^n\setminus\{\0\}$, is called rank-one matrix. 
  
It is quite simple to find the rank of a matrix $A$ using Gauss elimination.  Hence the complexity of finding the rank of $A$ is $O(\min(m,n)^2\max(m,n))$ in exact arithmetic.  Better complexity results can be found in \cite{BCS97}.
There are many equivalent ways to define the rank of a matrix.  We bring together a few of the equivalent definitions and some related inequalities \cite{Fribook,HiL13}:
\begin{lemma}\label{eqderrankmat}
Let $A\in\C^{m\times n}$.   Then each of the integers below is $r(A)$:
\begin{enumerate}
\item The dimension of the row space (subspace spanned by the rows of $A$).
\item The dimension of the column space,  (subspace spanned by the columns of $A$).
\item The dimension of a maximal nonzero minor of $A$. (Minor of $A$: a determinant of a square submatrix of $A$.)
\item The rank of $PAQ$ for two invertible matrices $P$ and $Q$ of dimension $m$ and $n$ respectively.  
\end{enumerate}
Furthermore,
\begin{enumerate}
\item Assume that $B$ is a submatrix of $A$ (obtained by deleting some rows and columns of $A$). Then $r(B)\le r(A)$.
\item Assume that $P\in \C^{m\times m}, Q\in\C^{n\times n}$.  Then $r(PAQ)\le r(A)$.
\item Assume that $A_k\in \C^{m\times n}$ for $k\in\N$, and $\lim_{k\to\infty} A_k=A$.

Then $\liminf_{k\to\infty} r(A_k)\ge r(A)$.
\end{enumerate}
\end{lemma}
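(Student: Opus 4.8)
The plan is to reduce everything to two facts: that the rank defined by the minimal rank-one decomposition equals the column-space dimension, and that multiplying by (possibly rectangular) matrices cannot increase this quantity. I would begin with the factorization observation: writing $A=\sum_{i=1}^k\x_i\y_i^*$ is the same as $A=XY^*$ with $X=[\x_1,\dots,\x_k]\in\C^{m\times k}$ and $Y=[\y_1,\dots,\y_k]\in\C^{n\times k}$, and in this form every column of $A$ is a linear combination of $\x_1,\dots,\x_k$. Hence any decomposition with $k$ terms forces the column space to have dimension at most $k$; conversely, taking $\x_1,\dots,\x_c$ to be a basis of the column space (with $c$ its dimension) and reading off the coordinates of each column of $A$ produces a decomposition with exactly $c$ terms. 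This shows $r(A)$ equals item (2), the column-space dimension.

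For item (1), I would apply the same statement to $A^*$: since $A=\sum_i\x_i\y_i^*$ if and only if $A^*=\sum_i\y_i\x_i^*$, the minimal number of terms for $A$ and for $A^*$ coincide, so $r(A)=r(A^*)$ equals the column-space dimension of $A^*$, which is the row-space dimension of $A$ (conjugation preserves dimension). This is the usual ``row rank equals column rank,'' and it is the conceptual heart of the lemma, even though it collapses to one line once the factorization characterization is in hand.

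Next I would establish the two multiplicative inequalities before treating items (3) and (4). For any $P$, the column space of $PA$ is the image $P\cdot\mathrm{Col}(A)$ of the column space under $P$, and a linear map cannot raise dimension, so $r(PA)\le r(A)$; applying this to $A^*$ and taking adjoints gives $r(AQ)\le r(A)$, and combining yields ``Furthermore (2),'' $r(PAQ)\le r(A)$ for arbitrary $P,Q$. The submatrix inequality ``Furthermore (1)'' is then the special case $B=RAC$, where $R$ and $C$ are row- and column-selection matrices. Item (4), equality for invertible $P,Q$, follows by applying ``Furthermore (2)'' both to $A\mapsto PAQ$ and to the inverse substitution $PAQ\mapsto P^{-1}(PAQ)Q^{-1}=A$.

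Finally, for the minor characterization (3) I would argue in two directions: a nonzero $k\times k$ minor means some $k\times k$ submatrix is invertible, hence has rank $k$, so ``Furthermore (1)'' forces $r(A)\ge k$; conversely, choosing $r(A)$ independent columns and then $r(A)$ independent rows among them isolates an invertible $r(A)\times r(A)$ submatrix, exhibiting a nonzero minor of that size. The last inequality ``Furthermore (3)'' is then immediate: by (3) the limit $A$ has a nonzero minor of size $r(A)$, and since a minor is a polynomial, hence continuous, function of the entries, this minor stays nonzero for all large $k$, so $r(A_k)\ge r(A)$ eventually and $\liminf_k r(A_k)\ge r(A)$. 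The only step demanding genuine care is the row-rank/column-rank identity underlying items (1)--(2); every remaining assertion is a short deduction from that identity, the minor characterization, and the multiplicative bound.
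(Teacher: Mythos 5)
Your proof is correct and complete. Note that the paper itself offers no proof of Lemma \ref{eqderrankmat} at all --- it simply points the reader to \cite{Fribook} --- so there is no argument in the text to compare against; what you have written is the standard self-contained treatment, and your logical organization is sound: the factorization $A=XY^*$ identifies $r(A)$ with the column-space dimension, the adjoint identity $(\x\y^*)^*=\y\x^*$ gives $r(A)=r(A^*)$ and hence item (1) (your remark that entrywise conjugation preserves dimension is the right thing to check, since it is only semilinear), and everything else --- items (3), (4) and all three ``Furthermore'' claims --- follows from the monotonicity $r(PAQ)\le r(A)$, the minor characterization, and continuity of minors. Two small observations: your argument for ``Furthermore (2)'' actually proves more than the lemma states, since it allows rectangular $P$ and $Q$ (which is exactly what makes the submatrix inequality a special case via selection matrices); and in the semicontinuity step one should tacitly dispose of the trivial case $r(A)=0$, where there is no minor to track but the inequality is vacuous. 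Neither point is a gap.
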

The last statement of Lemma \ref{eqderrankmat} is the lower semicontinuity of the matrix rank. 

Let $A\in\C^{m\times n}, B\in \C^{p\times q}$.  Then $A\oplus B$ is the block diagonal matrix $\left[\begin{array}{cc}A&0\\0&B\end{array}\right]\in\C^{(m+p)\times (n+q)}$.
Hence the dimension of the column space of $A\oplus B$ is the sum of the dimensions of the column space of $A$ and $B$, which yields the equality,
 $r(A\oplus B)=r(A)+r(B)$.

A decomposition of $A$ as a sum of $r(A)$ rank-one matrices  is called a rank decomposition. If $r=r(A)>1$ then this decomposition is not unique.  (We ignore the order of the summands.)  For example, if we choose a basis $\x_1,\ldots,\x_r\in\C^m$ in the column space of $A$, then there exists unique basis $\y_1,\ldots,\y_r\in\C^n$  of the column space of $A^*=\bar A\trans$ such that $A=\sum_{i=1}^r \x_i\y_i^*$.  

Let $\gl_n\subset \C^{n\times n}$ be the group of invertible matrices.  Denote by 
\begin{equation}\textrm{orb}(A)=\{B=PAQ, P\in\gl_m,Q\in\gl_n\},\end{equation}
the orbit of $A$ under the action of $\gl_m\times\gl_n$.  Since any basis $\x_1,\ldots,\x_m$ in $\C^m$ is of the form $P\e_1,\ldots,P\e_m$ for a unique $P\in\gl_m$ it follows that orb$(A)$ is the set (quasi (algebraic) variety) of all matrices of rank $r(A)$.  Furthermore, the closure of orb$(A)$ is the (algebraic) variety of all matrices of at most $r(A)$.  In terms of quantum physics this statement
 is written  that the SLOCC transformations of a given bipartite state 
 of rank $r$ do not increase its rank.
The above abbreviation stands for
 {\sl stochastic local operations and classical communication},
 as these experimentally realizable transformations
  play a key role in the theory of quantum information  \cite{BZ17}. 

Observe next that if we choose at random a matrix $A$ in $\C^{m\times n}$ (where each entry has a standard Gaussian distribution) then $r(A)$ is $\min(m,n)$ with probability $1$.  That is, the value of the maximal minor of $A$ of order $\min(m,n)$ is nonzero with probability $1$.  In the language of algebraic geometry, the generic rank of $\C^{m\times n}$ is $\min(m,n)$.  Note that $\min(m,n)$ is also the maximal possible rank of matrices in $\C^{m\times n}$.   That is a generic bipartite state is maximally entangled, if the rank of the tensor is considered as a simple discrete measure of quantum entanglement.
 There exist also continuous measures of entanglement including
 the geometric measure  -- see the next subsection.

The rank of a matrix behaves nicely under the Kronecker tensor product \cite{Fribook}:
\begin{eqnarray}
\label{ranktneprodmat}
r\left(\otimes_{K}^{i\in[l]} A_i\right)=\prod_{i=1}^l r(A_i).
\end{eqnarray}
The reason for that is very simple.  Observe that column space of $B=\otimes_{K}^{i\in[l]} A_i$ is the tensor product of the column spaces of $A_1,\ldots,A_l$.  Hence the dimension of the column space of $B$ is the product of the dimension of the column spaces of $A_1,\ldots,A_l$.

\subsection{SVD or Schmidt decomposition}\label{subsec:SVD}
There is a standard way to make a minimal rank decomposition unique in a generic case.  This is the Singular Value Decomposition (SVD) (in mathematics) or Schmidt decomposition (in physics): For a given $A\in\C^{m\times n}$ there exists a decomposition
\begin{eqnarray}\label{SVDdec}
A=\sum_{i=1}^{r(A)} \sigma_i(A)\bu_i\bv_i^*, \quad \bu_i^*\bu_j=\bv_i^*\bv_j=\delta_{ij}, \quad i,j\in[r(A)].
\end{eqnarray}
Note that $\bu_1,\ldots,\bu_{r(A)}$ and $\bv_1,\ldots,\bv_{r(A)}$ are orthonormal bases of the column spaces of $A$ and $A^*$ respectively. Furthermore $\sigma_1(A)\ge \cdots \ge\sigma_{r(A)}(A)>0$ are the positive singular values of $A$.  Note that
\begin{equation}A A^*\bu_i=\sigma_i(A)^2\bu_i, \quad A^*A\bv_i=\sigma_i(A)^2\bv_i, \quad i\in[r(A)].\end{equation}
That is, the square of the positives singular values of $A$ are the positive eigenvalues of $AA^*$ and $A^*A$.  In particular, the decomposition \eqref{SVDdec} is unique if and only if $\sigma_1(A)> \cdots> \sigma_{r(A)}(A)$ \cite{Fribook}. 
By uniqueness we mean here that each rank-one matrix $\bu_i\bv_i^*$ is unique, 
but $\bu_i$ and $\bv_i$ need not be unique.

We now recall various approximation properties of the SVD decomposition of $A$.  Recall that $\C^{m\times n}$ is \red{a} Hilbert space with the inner product $\langle A, B\rangle=\tr B^*A$, where the trace of a square matrix $C=[C_{i,j}]\in\C^{m\times m}$ is given as $\tr C=\sum_{i=1}^m C_{i,i}$.  Then the Frobenius norm 
(also called Hilbert--Schmidt norm)
of $A$ is given by $\|A\|_F=\sqrt{\tr A^*A}=\sqrt{\sum_{i=1}^{r(A)}\sigma_i(A)^2}$.  The operator norm of $A$ is given by
\begin{align}
\|A\|&=\sigma_1(A)=\max\{\|A\x\|, \|\x\|=1\}\\
&=\max\{|\y^* A\x|, \|\x\|=\|\y\|=1\}=\max\{\Re(\y^* A\x), \|\x\|=\|\y\|=1\}\nonumber.
\end{align}
The real part of a complex number $z$ is denoted by $\Re z$.

Denote by $\Pi(m,n)\subset \cH_m\otimes\cH_n$ all normalized  product states
with the norm set to $1$.
  Assume that $|\psi\rangle\in \cH_m\otimes\cH_n$ is a normalized state.  The geometric measure of entanglement 
can be described \cite{ZB02,WG03} by
the Hilbert--Schmidt distance of $|\psi\rangle$ to $\Pi(m,n)$:
\begin{align}
&\min\{\||\psi\rangle-|\xi\rangle\otimes|\eta\rangle\|,\||\xi\rangle\|=\||\eta\rangle\|=1\}=\\
&\min\{\sqrt{2-2\Re\langle\xi|\psi\rangle|\eta\rangle},\||\xi\rangle\|=\||\eta\rangle\|=1\}=\sqrt{2(1-\sigma_1(|\psi\rangle))\nonumber}.
\end{align}
Hence the maximally entangled states with respect to the geometric measure of entanglement are the Bell states $|\psi\rangle$, which are characterized by $\sigma_i(|\psi\rangle)=\frac{1}{\sqrt{\min(m,n)}}$ for $i\in [\min(m,n)]$.

Furthermore, for each $k\in[r(A)]$ let $B_k\in\C^{m\times n}$ be any element such that $r(B_k)=k$.  Then
the distance of $A$ from the orbit orb$(B_k)$, or its closure, is $\sigma_{k+1}(A)$, and is achieved at $A_k:=\sum_{i=1}^k \sigma_i(A) \bu_i\bv_i^*$. 
Recall that $\sigma_j(A)=0$ for $j>r(A)$. See for example \cite{Fribook}.
\subsection{Definition of a rank of a tensor}\label{subsec:tenrank} 
Let $d>2$ be a positive integer.  Assume that $n_1,\ldots,n_d$ are positive integers.
Let $\n=(n_1,\ldots,n_d)$ and denote $\cH_\n=\otimes_{i=1}^d \cH_{n_i}$ and $\C^{\n}=\otimes_{i=1}^d \C^{n_i}$. The dimension of these vector spaces is $N(\n)=\prod_{i=1}^d n_i$.  A non-normalized $d$-product state $\otimes_{i=1}^d|\psi_i\rangle\in\cH_\n$ corresponds to a rank-one tensor $\otimes_{i=1}^d\x_i\in\C^{\n}\setminus\{0\}$.  Note that $\otimes_{i=1}^d\x_i\in\C^{\n}$ is the zero tensor if and only if at least one of $\x_i$ is a zero vector.  Assume that  $\cT\in\C^{\n}$ is the pure state $|\psi\rangle\in\cH_{\n}$.
Then $\cT$ has a representation \eqref{tensor11}.  The rank of a nonzero tensor $\cT\in\C^{\n}$, denoted as $r(\cT)$\label{eq:def_Tensorrank}, is the minimal number of summands in the representation of $\cT$ as a sum of rank-one tensors.  Equivalently, the rank of the state $|\psi\rangle\in\cH_{\n}$ is the minimal dimension of a subspace spanned by normalized product states that contains $|\psi\rangle$.  The equality \eqref{tensor11} yields that $r(\cT)\le N(\n)$.   Actually, a stronger inequality is known -- see \S\ref{subsec:gencased}:
\begin{eqnarray}\label{upbundrank}
r(\cT)\le \frac{N(\n)}{\max(n_1,\ldots,n_d)}, \quad \cT\in\C^{\n}.
\end{eqnarray}
While the definition of the rank of the tensor is in principle the same as for matrices, the calculation of the rank of a given tensor can be NP-hard even for $3$-tensors \cite{Has90}.   

Let $\bp=(p_1,\ldots,p_d)\in\N^d$.  Assume that $\cU=[\cU_{j_1,\ldots,j_d}]\in\C^{\bp}$.  Recall that $\cT\otimes\cU\in\C^{\bq}$, where $\bq=(\n,\bp)$.
On the other hand the tensor $\cT\otimes_K\cU\in \C^{\n\cdot\bp}$, where $\n\cdot\bp=(n_1p_1,\ldots,n_dp_d)$.  From the rank minimal decomposition of $\cT$ and $\cU$ we deduce the obvious inequalities \cite{CF18}
\begin{eqnarray}\label{rnaktenKrnin}
r(\cT\otimes_K\cU)\le r(\cT\otimes \cU)\le r(\cT)r(\cU).
\end{eqnarray} 
Recall that $\cV=\cT\oplus\cU$ is a tensor in $\C^{\n+\bp}$, such that
$(\cT\oplus\cU)_{i_1,\ldots,i_d}=\cT_{i_1,\ldots,i_d} \textrm{ for } i_k\in[n_k], k\in[d]$,
$(\cT\oplus\cU)_{n_1+j_1,\ldots,n_d+j_d}=\cU_{j_1,\ldots,j_d} \textrm{ for } j_k\in[p_k], k\in[d]$,
and all other entries are zero.  It is straightforward to show that $r(\cT\oplus \cU)\le r(\cT)+r(\cU)$.  Recall that for $d=2$ we have equality in the above inequality.
In \cite{Str73}  Strassen asked if $r(\cT\oplus \cU)= r(\cT)+r(\cU)$ for $3$-tensors.
For general $d$ this problem is sometimes called \emph{Strassen's direct sum conjecture}.  For $d=3$ this is true if $\min(n_1,n_2,n_3,p_1,p_2,p_3)\le 2$, see \S\ref{subsec:strasscon3}.
Some additional cases where Strassen's direct sum conjecture holds are discussed in \cite{BPR19,Tei15}. For border rank this conjecture
 was known to be wrong \cite{Sch81}, see below.
Recently Shitov showed that  even for $d=3$   the direct sum conjecture of 
Strassen  is in general false \cite{ShiS17}.
Let $\oplus^k\cT$ be the direct sum of $k$ copies of $\cT$. 
By definition
 $r(\oplus^k\cT)\le k r(\cT)$
 and \emph{the restricted Strassen's conjecture}  \cite{CF18} is asking,
 whether equality holds,
  $r(\oplus^k\cT)=kr(\cT)$?   
  
  It was shown in \cite{CF18, Sch81} that this equality can be stated in the following form.  Let $\cI(k,d)\in (\C^k)^{\otimes d}$ be the identity tensor: $\cI(k,d)=\sum_{i=1}^k |i\rangle^{\otimes d}$.  One can show that 
$\cI(k,d)\otimes_K\cT$ is $\oplus^k\cT$, if we use the lexicographical order on the standard bases $(\C^k)^{\otimes d}\otimes_K \C^{\n}$
and $r(\cI(k,d))=k$. (It follows from the observation that if we view $\cI(k,d)$ as a matrix in $\C^{k \times k^{d-1}}$, then this matrix has rank $k$.)
Hence the restricted Strassen conjecture (which is still open) is equivalent to
\begin{eqnarray}\label{rStrascon}
r\left(\cI(k,d)\otimes_K\cT\right)=r(\cI(k,d)) r(\cT)=kr(\cT).
\end{eqnarray}  
Assume that the above equality holds for some $\cT$.  Observe that \eqref{rnaktenKrnin} and
 \eqref{rStrascon} imply that equalities in \eqref{rnaktenKrnin}  hold.  This implies \cite{CF18}
 that  $r(\cI(k,d)\otimes \cT)=kr(\cT)$.

  Consider the following simple example.  Let $\cT$ be $2\times 2\times 2$ with entries $\cT_{i,j,k}$ where $i,j,k\in[2]$.  Then the tensor $\cU=\cT\oplus \cT$ is $4\times 4\times 4$ tensor. To give the exact formula for the entires of $\cU$  we relabel $\{1,2,3,4\}$ as the pairs as  $\{(1,1),(1,2),(2,1), (2,2)\}$.  Then the entries of $\cU$ are $\cU_{(i_1,i_2),(j_1,j_2),(k_1,k_2)}$.  These entries are zero unless $i_1=j_1=k_1=l\in[2]$.  The formula for possible nonzero entries of $\cU$ is $\cU_{(l,i_2),(l,j_2),(l,k_2)}=\cT_{i_2,j_2,k_2}$.    The nonzero entries of $2\times 2\times 2$ tensor $\cI(2,3)$  are the entries $(1,1,1)$ and $(2,2,2)$ which are equal to $1$.  It is straightforward to see that $\cI(2,3)\otimes \cT=\cU$. Corollary \ref{varestStrcon} claims that in this example the restricted Strassen conjecture holds.

The \emph{generic} rank of a tensor in $\C^\n$, denoted as $r_{\textrm{gen}}(\n)$\label{eq:def_genericrank}, is the rank of a tensor $\cT\in\C^{\n}$ whose entries are chosen at random, assuming that the entries of tensors in $\C^\n$ are $N(\n)$ independent Gaussian random variables.  We will justify later (\S\ref{subsec:genrank}) the existence  of generic rank, and discuss briefly how to compute efficiently this rank using Terracini's lemma \cite{Ter16}. 
For example the generic rank of an $m\times n$ matrix is $\min(m,n)=r_{\textrm{gen}}(m,n)$.
It is well known that $r_{\textrm{gen}}(2,2,2)=2$, and the state
 $|GHZ\rangle=\frac{1}{\sqrt{2}}(|000\rangle+|111\rangle)=\frac{1}{\sqrt{2}}\cI(2,3)$ serves as an
example of a state with such a rank \cite{Fri12} -- see the discussion after Lemma \ref{ranks222ten}.
The \emph{maximum} rank of a tensor in $\C^\n$, denoted as $r_{\textrm{max}}(\n)$\label{eq:def_maximumrank}, is the maximum possible rank of tensors in $\cT\in\C^{\n}$. 
By definition, for tensors
$r_{\textrm{gen}}(\n)\le r_{\textrm{max}}(\n)$,
while for matrices  equality holds.
Furthermore, the maximal rank for a three-qubit state reads $r_{\textrm{max}}(2,2,2)=3$ and the state $|W\rangle=\frac{1}{\sqrt{3}}(|100\rangle+|010\rangle+|001\rangle)$ saturates the bound -- see discussion following Lemma \ref{ranks222ten}.
Nice results in \cite{BHMT17} state that 
\begin{eqnarray}\label{maxgenrin}
r_{\textrm{max}}(\n)\le 2r_{\textrm{gen}}(\n)-1.
\end{eqnarray}  
See example 16 (2) on page 7 in \cite{BHMT17}. For  $\n=(2,2,2)$ the generic rank is $2$ and $\sigma_{r_{\textrm{gen}}-1}$ is the variety of rank-one tensors.  This variety has projective dimension $3$ in the space of projective dimension $7$.
Note that for $\n=(2,2,2)$ this inequality boils down to $3<4$.  We will outline a short proof of the weaker inequality $r_{\textrm{max}}(\n)\le 2r_{\textrm{gen}}(\n)$ \cite{BT15} later. 

We now discuss the \emph{border} rank of $\cT\in\C^{\n}$, denoted as $r_{\textrm{b}}(\cT)$\label{eq:def_borderrank}, which was discovered in \cite{BLR}.  It is the smallest $k\in\N$ with the following properties:  There exists a sequence $\cT_j, j\in\N$ such that $r(\cT_j)=k$ for all $j\in\N$, and $\lim_{j\to\infty}\cT_j=\cT$.  
By definition, inequality
 $r_{\textrm{b}}(\cT)\le r(\cT)$ holds, which is always saturated for  matrices. 
Rank-one tensor satisfies the equality $r_{\textrm{b}}(\cT)= r(\cT)$.  That is, the set of all tensors of rank one and norm one is closed.
We will show later that $r_{\textrm{b}}(\cT)\le r_{\textrm{gen}}(\n)$ for any $\cT\in\C^\n$.
It is known that $r_{\textrm{b}}|W\rangle =2$, see \S\ref{subsec:basres3ten}.   Actually, this result follows from the above remarks.  
The border rank is subadditive:
$r_{\textrm{b}}(\cT\oplus \cU)\le r_{\textrm{b}}(\cT)+r_{\textrm{b}}(\cU)$, and the inequality  may be strict \cite{BCS97, Sch81}.
Thus the conjecture of Strassen for border rank is false.   More about algebraical methods and criteria of determining the border rank of tensors with
border rank not greater than two, can be found in \cite{Ra11}.

Denote by $\gl(\n)$ the product group  $\gl_{n_1}\times \cdots\times \gl_{n_d}$.  
Then $\gl(\n)$ acts on $\C^{\n}$ as follows. 
Let $(A_1,\ldots,A_d)\in\gl(\n)$.  Assume that $\cT=\sum_{i=1}^{r'} \otimes_{j=1}^d \x_{j.i}$.  Then $(A_1,\ldots,A_d)(\cT)=\sum_{i=1}^{r'} \otimes_{j=1}^d (A_j\x_{j.i})$.  The orbit of $\cT$ under this action 
reads
\begin{equation}\orb(\cT,\gl(n))=\Big\{\sum_{i=1}^{r'} \otimes_{j=1}^d (A_j\x_{j.i}),(A_1,\ldots,A_d)\in\gl(\n)\Big\}.\end{equation} 

This orbit corresponds to all states equivalent to $\cT$ under the SLOCC operations \cite{BZ17}.
Note that each tensor in $\orb(\cT,\gl(n))$ has rank $r(\cT)$.   The closure of the orbit of $\cT$ is denoted by Closure$(\orb(\cT,\gl(n)))$, it contains the set of the states that can be obtained from $\cT$ by SLOCC.  It can happen that this closed set may contain tensors of rank greater than $r(\cT)$. 
 For instance, 
the closure of the set of $2\times 2 \times 2$ tensors obtained from the  $|GHZ\rangle$ state
by changing bases in each copy of $\C^2$ forms  the entire
set of all $2\times 2\times 2$ tensors,
 see \S\ref{subsec:basres3ten}.

To illustrate the challenges 
of finding rank of $d$-mode tensors and other ranks of tensors we present a small survey on the ranks of $3$-tensors.

\section{Ranks of $3$-tensors}
\label{sec:rank3ten}
In this section we survey several known results on the rank of $3$-tensors of dimension $(m,n,p)$ related to quantum entanglement.
 In subsection \ref{subsec:basres3ten} we bring the celebrated Kruskal's theorem that gives a necessary condition
 that a decomposition of a tensor into a sum of rank-one tensors 
 is  unique up to a permutation of the summands.  Subsection \ref{subsec:rankmn2ten} covers the results of J\'aJ\'a \cite{Ja78,Ja79}, which completely characterize the rank of $m\times n\times 2$ tensors.  
 Similar results were obtained by Grigoriev \cite{Gri78,Gri78a}.
 In Subsection \ref{subsec:strasscon3} we recall the results of 
 J\'aJ\'a-Takche  \cite{Ja86} and Buczy\'nski-Postinghel-Rupniewski \cite{BPR19},
 which give sufficient conditions under which the direct sum conjecture of Strassen holds. 
   Subsection \ref{subsec:genrank} discusses the values of generic rank for $3$-tensors.   Theorem \ref{genmnpbig} combined with \eqref{genmnpvbig} implies
    that the generic rank of  such a three-tensor is equal  to $p$ if $(m-1)(n-1)+1\le p\le mn$.
 For $2\le m\le n\le p\le (m-1)(n-1)$ we state a well known conjecture 
 concerning  the value of generic rank.  This conjecture holds in some cases. 
  In subsection \ref{subsec:numprocrgenmnp} we describe a known algorithm to find the generic rank.  Subsection \ref{subsec:maxrank3t} surveys briefly some known results on maximal ranks of $3$-tensors.

\subsection{Basic results on rank of $3$-tensors}\label{subsec:basres3ten}
Assume that $d=3$ and $\n=(m,n,p)$.  Since $3$-tensors represent three-partite system, the order of the parties: Alice, Bob and Charlie is arbitrary.  In some cases we are going to assume 
\begin{eqnarray}\label{mnpord}
2\le m \le n \le p.
\end{eqnarray}
(The reason for the assumption that $m\ge 2$ is that for $m=1$ a $3$-tensor is a matrix.)
Given a $3$-tensor $\cT=[\cT_{i_1,i_2,i_3}]\in\C^{\n}$ we can associate with it four kind of ranks.  The first rank is $r(\cT)$, while the other three ranks  $r_A(\cT)$, $r_B(\cT)$ and $r_C(\cT)$ are corresponding matrix ranks.
Let us first consider $r_C(\cT)$.
View the two parties $\{A,B\}$ (Alice and Bob) as one party, which corresponds to the Hilbert space $\cH_{mn}$.  Then $\cT$ is viewed as a bipartite state $\cT_C\in\cH_{mn}\otimes\cH_p$.  It has $p$ columns $T_k=[\cT_{i,j,k}]_{i=j=1}^{m,n}\in\C^{m\times n}$ for $k\in [p]$.  Each column is a matrix, and $T_k$ is called a frontal slice.  The collection of the $p$ columns $\{T_1,\ldots,T_p\}$ can be viewed as an album of $p$ photos, where the matrix $T_k$ is $k$-th photo.  Then $r_C(\cT)$ is the dimension of the subspace in $\C^{m\times n}$ spanned by $T_1,\ldots,T_p$.  We next observe that $r_C(\cT)\le r(\cT)$.  Indeed, a singular value decomposition  of $T_C$ is
\begin{equation}\cT_C=\sum_{k=1}^{r_C(\cT)}\sigma_k(\cT_C)U_k\otimes \z_k, \quad \tr U_j^*U_k=\z_j^*\z_k=\delta_{jk}, \quad j,k\in[r_C(\cT_C)].\end{equation}
Note that here $U_j$ does not have to be a rank-one matrix.  Observe next that a rank decomposition $\cT=\sum_{i=1}^{r(\cT)} \x_i\otimes\y_i\otimes \z_i$ induces a decomposition $\cT_C=\sum_{i=1}^{r(\cT)} (\x_i\otimes\y_i)\otimes \z_i$ to rank-one vectors in $\cH_{mn}\otimes \cH_p$.  

The ranks $r_A(\cT)$ and $r_B(\cT)$ are defined similarly.
Hence
\begin{equation}\max(r_A(\cT), r_B(\cT), r_C(\cT)) \le r(\cT).\end{equation}

Assume that
\begin{eqnarray}\label{rankone3tendec}
\cT=\sum_{i=1}^r \x_i\otimes\y_i\otimes\z_i.
\end{eqnarray}
Under what conditions $r=r(\cT)$?  A simple sufficient condition is: the set of the matrices $\x_1\otimes\y_1,\ldots,\x_r\otimes\y_r$ and the set of vectors $\z_1,\ldots,\z_r$ are linearly independent.  Indeed, this condition insures that $r_C(\cT)=r\le r(\cT)$.  

Kruskal's conditions \cite{Kr77} gives sufficient conditions for $r=r(\cT)$, and that the above rank decomposition of $\cT$ is unique: 
Any rank decomposition of $\cT$ is a sum of rank-one tensors $\x_1\otimes\y_1\otimes\z_1, \ldots,\x_r\otimes\y_r\otimes\z_r $ in any order.
To state Kruskal's condition we need to define Kruskal's rank of $l$ nonzero
vectors $\x_1,\ldots,\x_l\in\C^m$, denoted as $r_{\textrm{K}}(\x_1,\ldots,\x_l)$\label{eq:def_Kruskalrank}.  Namely, $r=r_{\textrm{K}}(\x_1,\ldots,\x_l)$ if and only if any $r$ vectors in $\{\x_1,\ldots,\x_l\}$ are linearly independent, and there are $r+1$ vectors in  $\{\x_1,\ldots,\x_l\}$ that are linearly dependent.  For example, if $\x_1,\ldots,\x_l\in\C^m$ are chosen at random then $r_{\textrm{K}}(\x_1,\ldots,\x_l)=\min(l,m)$.  We call a set $\{\x_1,\ldots,\x_l\}\subset\C^m$ \emph{generic}, or in \emph{general position}, if $r_{\textrm{K}}(\x_1,\ldots,\x_l)=\min(l,m)$.
We call a decomposition \eqref{rankone3tendec} generic if the three sets of vectors 
$\{\x_1,\ldots,\x_r\},\{\y_1,\ldots,\y_r\},\{\z_1,\ldots,\z_r\}$ are generic.

Theorem of Kruskal \cite{Kr77} yields:
\begin{theorem}\label{Kruskalcond} Let $\cT\in\C^{m\times n\times p}$ have a decomposition \eqref{rankone3tendec}, where each $\x_i,\y_i,\z_i$ is nonzero.
If
\begin{eqnarray*}\label{Kruskalcond1} 
r_{\textrm{K}}(\x_1,\ldots,\x_{r}) + r_{\textrm{K}}(\y_1,\ldots,\y_{r}) +r_{\textrm{K}}(\z_1,\ldots,\z_{r})\ge 2r+2
\end{eqnarray*}
then $r=r(\cT)$ and the decomposition \eqref{rankone3tendec} is unique.
\end{theorem}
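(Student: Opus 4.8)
The plan is to prove the one substantive assertion hidden in the statement: any decomposition of $\cT$ into at most $r$ rank one tensors must coincide, after reordering the summands and rescaling the factors, with the given one. This delivers uniqueness and minimality $r=r(\cT)$ at once — applied to a minimal decomposition it forces that decomposition to have exactly $r$ terms. Throughout I would collect the factors into matrices $X=[\x_1\;\cdots\;\x_r]$, $Y=[\y_1\;\cdots\;\y_r]$, $Z=[\z_1\;\cdots\;\z_r]$, write $k_1,k_2,k_3$ for the three Kruskal ranks, and record that $k_1+k_2+k_3\ge 2r+2$ together with $k_i\le r$ forces $k_i\ge 2$ for each $i$ (so in particular no two of the $\x_i$, no two of the $\y_i$, and no two of the $\z_i$ are proportional). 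So I would fix a second decomposition $\cT=\sum_{j=1}^{r'}\x'_j\otimes\y'_j\otimes\z'_j$ with $r'\le r$ and nonzero factors $X',Y',Z'$, and aim to match it against the original.

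First I would pass to matrix slices. Contracting the third mode against an arbitrary $\bw\in\C^p$ gives $\cT(\bw):=\sum_i(\z_i\trans\bw)\,\x_i\y_i\trans=X\,\diag(Z\trans\bw)\,Y\trans$, and equally $\cT(\bw)=X'\diag((Z')\trans\bw)(Y')\trans$, turning the tensor identity into a family of matrix identities. The quantitative engine is the support–rank estimate: for every $\bc\in\C^r$, $\rank\!\big(X\diag(\bc)Y\trans\big)\ge\min\{\omega(\bc),\,k_1+k_2-r\}$, where $\omega(\bc)$ denotes the number of nonzero entries of $\bc$. I would prove this by restricting to $S=\operatorname{supp}(\bc)$, where $\diag(\bc)$ becomes invertible, and applying Sylvester's rank inequality to $X_S\diag(\bc_S)$ and $(Y_S)\trans$ together with the defining property of Kruskal rank (any $k_1$ columns of $X$, resp. $k_2$ columns of $Y$, are independent). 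Read contrapositively, this says a slice of small rank can only be supported on few indices — exactly the support control needed below.

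The combinatorial heart is Kruskal's permutation lemma: if $F,\bar F$ are matrices with $r$ columns and no zero columns, $F$ has Kruskal rank $k_F$, and for every $\bv$ the condition $\omega(\bar F\trans\bv)\le r-k_F$ forces $\omega(F\trans\bv)\le\omega(\bar F\trans\bv)$, then $\bar F=F\Pi\Lambda$ for a permutation $\Pi$ and an invertible diagonal $\Lambda$. I would apply it with $F=Z$, $\bar F=Z'$, verifying the hypothesis through the slices and the estimate above: if $(Z')\trans\bv$ has at most $r-k_3$ nonzeros then $\cT(\bv)=X'\diag((Z')\trans\bv)(Y')\trans$ has rank at most $\omega((Z')\trans\bv)$, while the estimate applied to the original factors gives $\rank\cT(\bv)\ge\min\{\omega(Z\trans\bv),\,k_1+k_2-r\}$; since $k_1+k_2-r\ge r+2-k_3>r-k_3\ge\omega((Z')\trans\bv)$ the minimum is attained by $\omega(Z\trans\bv)$, yielding $\omega(Z\trans\bv)\le\omega((Z')\trans\bv)$. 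Running the same argument in the other two modes shows each $\z'_j$, $\y'_j$, $\x'_j$ is a scalar multiple of a column of $Z$, $Y$, $X$ respectively, along injective index maps; because $k_i\ge 2$ makes these maps well defined and the Kruskal ranks force them surjective, they are bijections, so $r'=r$.

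It remains to check that the three permutations coincide and the scalings multiply to one, so that the rank one terms genuinely match: once the factors are matched up to permutation in each mode, substituting back into $\cT=\sum_i\x_i\otimes\y_i\otimes\z_i$ and using that the $\z_i$ (say) are pairwise non-proportional pins down a single common permutation and the product of the diagonal scalings as the identity. The hard part will be the verification of the permutation lemma's hypothesis: translating ``few nonzeros in $(Z')\trans\bv$'' into a rank bound and back into ``few nonzeros in $Z\trans\bv$'' is where the entire strength of $k_1+k_2+k_3\ge 2r+2$ is consumed, and it is precisely the $+2$ that produces the strict inequality $k_1+k_2-r>r-k_3$ above, so getting the counting tight rather than off by one is delicate. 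Establishing the permutation lemma itself — a nontrivial combinatorial induction on supports — is the other place where the real work is hidden.
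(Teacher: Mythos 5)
The paper never proves Theorem \ref{Kruskalcond} --- it quotes the result from \cite{Kr77} and refers to \cite{Rho10} for a short proof --- so your proposal can only be measured against that cited literature. Your route is exactly the classical one: contract the third mode to get the slices $\cT(\bw)=X\diag(Z\trans\bw)Y\trans$, prove the support--rank estimate $\rank\big(X\diag(\bc)Y\trans\big)\ge\min\{\omega(\bc),\,k_1+k_2-r\}$ via Sylvester's inequality (your proof of this step is correct, as is the deduction $k_i\ge 2$), feed it into Kruskal's permutation lemma in each mode, and then match permutations and scalings. This is Kruskal's original argument in its streamlined modern form (Stegeman--Sidiropoulos; Rhodes's cited proof is of the same character), so there is no divergence of method to report --- but there is one genuine error in your formulation of the key lemma.

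The permutation lemma as you state it, with threshold $\omega(\bar F\trans\bv)\le r-k_F$, is false. Take $F$ and $\bar F$ to be any two invertible $r\times r$ matrices whose columns are not matched up to scaling: then $k_F=r$, the threshold is $0$, so your hypothesis constrains only those $\bv$ with $\bar F\trans\bv=0$, i.e.\ $\bv=0$, where it holds trivially --- yet the conclusion $\bar F=F\Pi\Lambda$ fails. The correct threshold is $\omega(\bar F\trans\bv)\le r-k_F+1$ (Kruskal's original version uses $r-\rank F+1$, which is covered a fortiori since $\rank F\ge k_F$), and that extra boundary case is precisely what consumes the full hypothesis $k_1+k_2+k_3\ge 2r+2$. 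Indeed, your verification chain $k_1+k_2-r\ge r+2-k_3>r-k_3\ge\omega$ has one unit of slack --- it would already succeed under $k_1+k_2+k_3\ge 2r+1$, which the sharpness example discussed immediately after the theorem in the paper (and \cite{Der13}) shows cannot imply uniqueness --- whereas with the corrected threshold the chain $k_1+k_2-r\ge r+2-k_3>r-k_3+1\ge\omega\big((Z')\trans\bv\big)$ is tight. The repair is local: your slice computation verifies the hypothesis of the correct lemma verbatim, so nothing else in the mode-by-mode matching changes. A second, smaller gap: you apply the lemma to a competing decomposition with $r'\le r$ terms, but the lemma compares matrices with equally many columns and (in the role of $\bar F$) no zero columns, so the case $r'<r$ requires a separate padding/counting step as in \cite{Rho10}; your one-line appeal to injectivity plus surjectivity of the index maps does not yet supply it.
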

Note that for $m=n>1$, $p=2$ and $r=m$ this result is sharp.  Indeed, assume that 
$r_{\textrm{K}}(\x_1,\ldots,\x_m)=r_{\textrm{K}}(\y_1,\ldots,\y_m)=m$.  Note that $r_{\textrm{K}}(\z_1,\ldots,\z_m)=l$ where $l\in[2]$.  If $l=2$, that is any pair $\z_i,\z_j$ is linearly independent then Kruskal's theorem claims that $r(\cT)=m$ and the decomposition \eqref{rankone3tendec} is unique.  Assume now that $\z_1,\ldots,\z_l$ are nonzero colinear vectors.  Then $\cT$ is a matrix of the form $T=\sum_{i=1}^m a_i\x_i\otimes\y_i$ where each $a_i\ne 0$.  Thus $r(\cT)=r(T)=m$ but the rank decomposition of $T$ is not unique, since the rank decompositon of a matrix of rank greater than one is not unique.  Hence the rank decomposition of $\cT$ is not unique.  See \cite{Der13} for more examples showing that Kruskal's theorem is sharp.
See  \cite{Rho10} for a simple proof of Kruskal's theorem.  We will discuss Kruskal's theorem for $d$-mode tensors, where $d>3$, later.  

The following corollary follows from Kruskal's theorem:
\begin{corollary}\label{gendec3ten}  Let $\cT\in \C^{m\times n\times p}$.  
Assume that a decomposition of $\cT$ is generic.  If 
\[\min(r,m)+\min(r,n)+\min(r,p)\ge 2r+2\] 
then $r=r(\cT)$ and the rank decomposition of $\cT$ is unique.
\end{corollary}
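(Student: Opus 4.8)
The plan is to derive the corollary as a direct specialization of Kruskal's theorem (Theorem~\ref{Kruskalcond}), the only real work being to translate the genericity hypothesis into the exact numerical form of Kruskal's inequality.

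First I would unwind the definition of a generic decomposition. By the definition given just before the corollary, saying that the decomposition \eqref{rankone3tendec} is generic means that each of the three sets $\{\x_1,\ldots,\x_r\}$, $\{\y_1,\ldots,\y_r\}$, $\{\z_1,\ldots,\z_r\}$ is in general position, i.e.
\[r_{\textrm{K}}(\x_1,\ldots,\x_r)=\min(r,m),\quad r_{\textrm{K}}(\y_1,\ldots,\y_r)=\min(r,n),\quad r_{\textrm{K}}(\z_1,\ldots,\z_r)=\min(r,p).\]

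Next I would verify that the hypotheses of Theorem~\ref{Kruskalcond} are met. The theorem requires each $\x_i,\y_i,\z_i$ to be nonzero; this is automatic under genericity, since a single zero vector is linearly dependent and would force the corresponding Kruskal rank to be $0$, contradicting $\min(r,m)\ge 1$ (and likewise for the other two factors) whenever $r\ge 1$. Substituting the three equalities above into the left-hand side of Kruskal's inequality turns it into exactly $\min(r,m)+\min(r,n)+\min(r,p)\ge 2r+2$, which is precisely the hypothesis of the corollary. Theorem~\ref{Kruskalcond} then yields both conclusions simultaneously: $r=r(\cT)$ and the uniqueness of the decomposition \eqref{rankone3tendec}.

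Since the corollary already exhibits $\cT$ as a sum of $r$ rank-one terms, the bound $r(\cT)\le r$ is immediate; the content of the argument is the matching lower bound $r(\cT)\ge r$ together with uniqueness, both delivered by Theorem~\ref{Kruskalcond}. There is no genuine obstacle here: the corollary is a clean substitution into the hypothesis of Kruskal's theorem, and the only point deserving a line of care is checking that genericity indeed forces the Kruskal ranks to equal the stated minima and simultaneously rules out zero factor vectors, so that Theorem~\ref{Kruskalcond} applies verbatim.
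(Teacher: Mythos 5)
Your proposal is correct and follows exactly the route the paper intends: the paper states the corollary as an immediate consequence of Theorem~\ref{Kruskalcond}, with genericity of the three factor sets meaning by definition that $r_{\textrm{K}}(\x_1,\ldots,\x_r)=\min(r,m)$, $r_{\textrm{K}}(\y_1,\ldots,\y_r)=\min(r,n)$, $r_{\textrm{K}}(\z_1,\ldots,\z_r)=\min(r,p)$, so that Kruskal's inequality specializes to the stated hypothesis. Your extra remark that genericity rules out zero factor vectors (since a zero vector forces the Kruskal rank to $0$, while the hypothesis forces $r\ge 2$) is a small point the paper leaves implicit, and it is handled correctly.
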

We call a $3$-tensor $\cT$ that satisfies the conditions of the above corollary a rank-$r$ tensor with the generic decomposition.

The following theorem \cite{Fri12} gives a characterization of the rank of $3$-tensor:
\begin{theorem}\label{charrank3ten}  Let $\n=(m,n,p)$.  Assume that $\cT\in \C^\n$, and let $T_1,\ldots,T_p\in\C^{m\times n}$ be the $p$ frontal slices of $\cT$.  Then $r(\cT)$ is the minimum dimension of a subspace of $\C^{m\times n}$ spanned by rank-one matrices that contains the subspace spanned by $T_1,\ldots,T_p$.
\end{theorem}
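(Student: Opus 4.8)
The plan is to establish the two inequalities $\rho \le r(\cT)$ and $r(\cT)\le\rho$ separately, where I write $\rho$ for the minimum dimension appearing in the statement and $W=\operatorname{span}(T_1,\ldots,T_p)\subseteq\C^{m\times n}$ for the subspace spanned by the frontal slices. The device that links tensor rank to subspaces of $\C^{m\times n}$ is an elementary dictionary between rank decompositions of $\cT$ and simultaneous expansions of its slices: a decomposition $\cT=\sum_{i=1}^{s}\x_i\otimes\y_i\otimes\z_i$ into $s$ rank one tensors is exactly the same data as a family of rank one matrices $M_i:=\x_i\otimes\y_i\in\C^{m\times n}$ together with the identities $T_k=\sum_{i=1}^{s}(\z_i)_k\,M_i$ for every $k\in[p]$. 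This equivalence is checked by comparing entries, since $\cT_{a,b,k}=\sum_i (\x_i)_a(\y_i)_b(\z_i)_k$ is precisely $(T_k)_{a,b}=\sum_i (\z_i)_k (M_i)_{a,b}$.

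For the inequality $\rho\le r(\cT)$, I would start from a rank decomposition with $s=r(\cT)$ terms. The slice identities above show that each $T_k$ lies in $\operatorname{span}(M_1,\ldots,M_{r(\cT)})$, hence $W\subseteq\operatorname{span}(M_1,\ldots,M_{r(\cT)})=:U$. Since $U$ is a subspace spanned by $r(\cT)$ rank one matrices and contains $W$, it is admissible in the minimization, so $\rho\le\dim U\le r(\cT)$.

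For the reverse inequality $r(\cT)\le\rho$, I would take an admissible subspace $U$ of dimension $\rho$, that is, one spanned by rank one matrices with $W\subseteq U$. Because $U$ is spanned by rank one matrices, thinning a spanning set down to a basis produces a basis $M_1,\ldots,M_\rho$ of $U$ consisting of rank one matrices, say $M_i=\x_i\otimes\y_i$. As $T_k\in W\subseteq U$ for each $k$, there are coefficients with $T_k=\sum_{i=1}^{\rho}c_{i,k}M_i$; collecting them into vectors $\z_i\in\C^p$ via $(\z_i)_k:=c_{i,k}$ and reading the entrywise identity of the first paragraph backwards yields $\cT=\sum_{i=1}^{\rho}\x_i\otimes\y_i\otimes\z_i$, a decomposition into at most $\rho$ rank one tensors (terms with $\z_i=\0$ simply drop out). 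Hence $r(\cT)\le\rho$, and together with the first direction this gives $r(\cT)=\rho$.

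The argument is short, and the only points needing a line of care are the claim that a subspace spanned by rank one matrices admits a \emph{basis} of rank one matrices, which is immediate since a basis can always be extracted from a spanning set in a finite dimensional space, and the bookkeeping that repackages the slice coefficients $c_{i,k}$ into the third mode vectors $\z_i$. I do not anticipate a genuine obstacle: the content of the theorem is really the reformulation of $r(\cT)$ as the smallest dimension of a subspace of $\C^{m\times n}$ that is spanned by rank one matrices and contains $W$, and once the slice/decomposition dictionary is in place both inequalities are forced.
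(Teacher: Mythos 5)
Your proof is correct and is exactly the canonical argument: the entrywise dictionary between a decomposition $\cT=\sum_i\x_i\otimes\y_i\otimes\z_i$ and the slice identities $T_k=\sum_i(\z_i)_k\,\x_i\otimes\y_i$ gives both inequalities, with the two points needing care (extracting a rank one basis from a rank one spanning set, and discarding terms with $\z_i=\0$) handled properly. The paper itself states Theorem \ref{charrank3ten} without proof, citing \cite{Fri12}, and your argument coincides with the proof given there, so there is nothing to flag beyond noting that the minimization is nonempty because $\C^{m\times n}$ is itself spanned by the $mn$ rank one matrices $\e_i\e_j\trans$ --- a fact the paper uses immediately after the theorem.
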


A composite space
$\C^{m\times n}$ is spanned by $mn$ linearly independent tensors of rank one.
Hence $r(\cT)\le mn$.  This yields the inequality \eqref{upbundrank} for $d=3$, as we can assume \eqref{mnpord}.

Denote by $r_{\textrm{max}}(m,n,p)$ the maximum possible rank of tensors in $\C^{m\times n\times p}$.  The inequality \eqref{upbundrank} yields that  $r_{\textrm{max}}(m,n,p)\le \frac{mnp}{\max(m,n,p)}$.
\begin{proposition}\label{valranks}   Let $m,n,p\in\N$. For each $k\in\{1,\ldots,r_{\textrm{max}}(m,n,p)\}$  there exists a tensor $\cT\in\C^{m\times n\times p}$ such
that $r(\cT)=k$.
\end{proposition}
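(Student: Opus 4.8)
The plan is to exhibit, from a single tensor of maximal rank, one tensor of every intermediate rank by truncating a minimal rank-one decomposition. First I would fix a tensor $\cS\in\C^{m\times n\times p}$ attaining the maximal rank, i.e. $r(\cS)=r_{\textrm{max}}(m,n,p)$; such $\cS$ exists because the rank of a tensor is a well-defined integer bounded above (for instance by the bound $r(\cT)\le mn$ just recorded). Writing $r=r_{\textrm{max}}(m,n,p)$, I would take a rank decomposition $\cS=\sum_{i=1}^{r}\x_i\otimes\y_i\otimes\z_i$ into exactly $r$ rank-one tensors, which exists by the very definition of $r(\cS)$.

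For each $k\in[r]$ define the truncated tensor $\cT_k=\sum_{i=1}^{k}\x_i\otimes\y_i\otimes\z_i$. By construction $\cT_k$ is a sum of $k$ rank-one tensors, so $r(\cT_k)\le k$. The claim is that equality $r(\cT_k)=k$ holds, and I would prove this by downward induction on $k$. The base case $k=r$ is $\cT_r=\cS$, whose rank is $r$ by choice. For the inductive step, assume $r(\cT_k)=k$. Since $\cT_k=\cT_{k-1}+\x_k\otimes\y_k\otimes\z_k$ and the rank is subadditive under tensor addition (one simply concatenates rank-one decompositions of the two summands), we get $k=r(\cT_k)\le r(\cT_{k-1})+1$, which forces $r(\cT_{k-1})\ge k-1$. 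Combined with the trivial upper bound $r(\cT_{k-1})\le k-1$, this yields $r(\cT_{k-1})=k-1$. Hence $r(\cT_k)=k$ for every $k\in[r]$, so each $k\in[r_{\textrm{max}}(m,n,p)]$ is realized as the rank of some tensor, namely $\cT_k$.

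The main point to flag is that there is no genuine analytic difficulty here: no continuity or semicontinuity argument is needed, which is fortunate because tensor rank is \emph{not} lower semicontinuous (this is precisely what makes border rank a separate notion). The only step requiring care is the exactness in the induction, since truncating a minimal decomposition a priori delivers only the inequality $r(\cT_k)\le k$. One must use the maximality of $r(\cS)$ together with subadditivity to pin the rank down exactly, propagating the equality \emph{downward} from $k=r$ rather than attempting to establish $r(\cT_k)=k$ directly for a fixed intermediate $k$.
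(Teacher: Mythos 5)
Your proof is correct and is essentially the paper's own argument: both truncate a minimal rank decomposition of a maximal-rank tensor and use subadditivity against the discarded rank-one terms to force $r(\cT_k)\ge k$, with maximality of $r(\cS)$ serving only to make every $k\in[r_{\textrm{max}}(m,n,p)]$ reachable. The sole cosmetic difference is that you peel off one term at a time by downward induction, whereas the paper applies subadditivity in one shot via $r=r(\cS)\le r(\cT_k)+(r-k)$ --- your induction is just this estimate unrolled (and your side remark that no semicontinuity is needed, indeed that tensor rank fails to be lower semicontinuous, is accurate).
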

\begin{proof}  Assume that $\cA\in\C^{m\times n\times p}$, and $r(\cA)=r=r_{\textrm{max}}(m,n,p)>1$. 
Write $\cA=x_1\otimes y_1\otimes z_1+\dots+x_r\otimes y_r\otimes z_r$ and $\cB_k=x_1\otimes y_1\otimes z_1+\dots+x_{k}\otimes y_{k}\otimes z_{k}$ for $k\in[r-1]$. 
Clearly, $r(\cB_k)\le k$.  Use the rank decomposition of $\cB_k$, and the fact that $\cA=\cB_k+\sum_{j=k+1}^r x_j\otimes y_r\otimes z_j$
to deduce that $r=r(\cA)\le r(\cB_k)+r-k$ which yields that $r(\cB_k)\ge k$.  Hence $r(\cB_k)=k$.
\end{proof}

 Let us assume that $p=2$.  So $\cT$ has two frontal slices $T_1,T_2\in\C^{m\times n}$.  Let us first examine all possible nonzero ranks of $2\times 2\times 2$ tensors in terms of the matrices in the subspace spanned by the frontal sections:  
 \begin{lemma}\label{ranks222ten} Let $\n=(2,2,2)$ and assume that $\cT\in\C^{\n}\setminus\{0\}$.  Suppose that $T_1,T_2\in\C^{2\times 2}$ are the two frontal slices of $\cT$.  Then
 \begin{enumerate}
 \item $r(\cT)=1$ if and only if $T_1$ and $T_2$ are linearly dependent, and  one of the slices is rank-one matrix.
 \item $r(\cT)=2$ if and only if one of the following conditions hold: 
 \begin{enumerate}
 \item The matrices $T_1$ and $T_2$ are linearly dependent, and  one of the slices is rank-two matrix.
  \item The matrices $T_1$ and $T_2$ are linearly independent, and each matrix in span$(T_1,T_2)$ is singular.
  \item The subspace span$(T_1,T_2)$ contains two linearly independent matrices $X,Y$ such that $X$ is invertible and $X^{-1}Y$ is diagonalizable.
 \end{enumerate}
 \item $r(\cT)=3$ if and only if $T_1,T_2$ are linearly independent, and the span$(T_1,T_2)$ contains two matrices $X,Y$such that $X$ is invertible and $X^{-1}Y$ is not diagonalizable.
 \end{enumerate}
 \end{lemma}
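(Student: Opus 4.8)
The whole argument will run through Theorem~\ref{charrank3ten}, which for $\n=(2,2,2)$ says that $r(\cT)$ equals the minimal dimension of a subspace of $\C^{2\times 2}$ spanned by rank one matrices and containing $U:=\operatorname{span}(T_1,T_2)$. Since $\cT\neq 0$, at least one slice is nonzero, so $\dim U=r_C(\cT)\in\{1,2\}$; moreover $r_C(\cT)\le r(\cT)$ supplies the lower bound $r(\cT)\ge\dim U$, while $r(\cT)\le r_{\textrm{max}}(2,2,2)=3$ supplies the ceiling. Thus everything reduces to deciding, for the line or plane $U\subset\C^{2\times 2}$, the smallest rank one spanned subspace containing it, and I would organize the proof by the value of $\dim U$.

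First I would treat $\dim U=1$, i.e.\ $T_1,T_2$ linearly dependent, so $U=\operatorname{span}(M)$ for the (up to scalar unique) nonzero slice $M$. If $M$ has rank one then $U$ is itself spanned by a rank one matrix and $r(\cT)=1$, which is part (1). If $M$ has rank two then no one-dimensional rank one spanned subspace can contain $M$, so $r(\cT)\ge2$, while writing $M=R_1+R_2$ as a sum of two rank one matrices exhibits the rank one spanned plane $\operatorname{span}(R_1,R_2)\ni M$, giving $r(\cT)=2$; this is part (2a).

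The substantive case is $\dim U=2$ ($T_1,T_2$ linearly independent), where $r(\cT)\in\{2,3\}$ and $r(\cT)=2$ exactly when $U$ is itself spanned by rank one matrices. Here I would analyze the binary form $q(s,t):=\det(sT_1+tT_2)$, whose zero locus in $\bbP(U)\cong\bbP^1$ is precisely the set of rank one (singular, nonzero) matrices in $U$. There are three mutually exclusive possibilities for $q$: it vanishes identically, it has two distinct roots, or it has a double root, matching conditions (2b), (2c) and (3) respectively. In the identically vanishing case every matrix in $U$ is singular; writing $T_1=\bu_1\bv_1^*$ and $T_2=\bu_2\bv_2^*$ (both necessarily rank one) and factoring $sT_1+tT_2=[\bu_1\ \bu_2]\,\diag(s,t)\,[\bv_1\ \bv_2]^*$ shows that $q\equiv0$ forces $\bu_1\parallel\bu_2$ or $\bv_1\parallel\bv_2$, so $U$ has a common column or row space and is spanned by two rank one matrices, whence $r(\cT)=2$. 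When $q$ has two distinct roots, the two corresponding singular matrices are rank one and span $U$, so again $r(\cT)=2$. When $q$ has a double root, all rank one matrices of $U$ are proportional to a single one, so $U$ is not spanned by rank one matrices; then $r(\cT)>2$, and the ceiling $r_{\textrm{max}}(2,2,2)=3$ forces $r(\cT)=3$.

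It remains to translate the trichotomy on $q$ into the stated conditions involving an invertible $X$ and the matrix $X^{-1}Y$, and this is where I expect the main bookkeeping. The case $q\not\equiv0$ is exactly the case that $U$ contains an invertible matrix $X$; choosing any $Y$ with $U=\operatorname{span}(X,Y)$ and factoring $\det(sX+tY)=\det(X)\det(sI+tX^{-1}Y)$ shows that the roots of $q$ are governed by the eigenvalues of $M:=X^{-1}Y$. Since $X,Y$ are independent, $M\neq\lambda I$, so $M$ is diagonalizable iff it has two distinct eigenvalues iff $q$ has distinct roots (condition (2c)), and $M$ is non-diagonalizable iff it is a single Jordan block iff $q$ has a double root (part (3)); I would also remark that this dichotomy is independent of the choice of $X$, being an intrinsic property of the form $q$. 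Assembling the three subcases of $\dim U=2$ with the two subcases of $\dim U=1$ yields all the stated equivalences. The one genuinely non-formal point is the all-singular case (2b): the clean determinant factorization above (equivalently, the recognition that $\bbP(U)$ is then a ruling line of the smooth quadric $\{\det=0\}$) is the step I would be most careful to verify.
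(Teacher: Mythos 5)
Your proof is correct, and it takes a more self-contained route than the paper does. The paper never proves Lemma~\ref{ranks222ten} directly: it is a specialization of the pencil analysis of \S\ref{subsec:rankmn2ten}, where the regular case (an invertible element of the span) is settled by Lemma~\ref{rankcTregpair} --- diagonalizable $X^{-1}Y$ gives rank $m$, nondiagonalizable gives $m+1$ via the companion-matrix perturbation $C(p)-\x\y\trans=C(q)$ --- and the singular case is delegated to Kronecker's canonical form and J\'aJ\'a's Theorem~\ref{Jjajathm}. Your trichotomy on the binary form $q(s,t)=\det(sT_1+tT_2)$ recovers the regular-case dichotomy, and your observation that diagonalizability of $X^{-1}Y$ is intrinsic to $U=\mathrm{span}(T_1,T_2)$ (it is the statement that $q$ has two distinct projective roots) is exactly what makes conditions (2c) and (3) mutually exclusive, so keep that remark explicit. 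Your handling of the all-singular case (2b) via the factorization $sT_1+tT_2=[\bu_1\ \bu_2]\,\diag(s,t)\,[\bv_1\ \bv_2]^*$ replaces the J\'aJ\'a machinery with a two-line computation; for $\n=(2,2,2)$ this is a genuine simplification, and all the individual steps (rank-one elements of $U$ at distinct roots being linearly independent and spanning $U$; the double-root case forcing any two-dimensional rank-one-spanned cover of $U$ to equal $U$, hence $r(\cT)\ge 3$) check out against Theorem~\ref{charrank3ten}.

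One caveat: in the double-root case you close with $r(\cT)\le r_{\textrm{max}}(2,2,2)=3$. Within this paper that equality is itself sourced to this very lemma (and to \eqref{maxrankmn2}, whose proof runs through the same pencil machinery), while the general bound \eqref{upbundrank} only gives $4$, so as written your ceiling is circular. The fix is one line: in the double-root case pick invertible $X\in U$ and $Y$ spanning $U$ with $X$, replace $Y$ by $Z=Y-\lambda X$ where $\lambda$ is the double eigenvalue of $X^{-1}Y$, and use the $\gl(\n)$-action (which preserves rank, as the paper does in its $\orb(|W\rangle,\gl)$ normalization immediately after the lemma) to bring the pair to $(I_2,J)$ with $J$ the nilpotent Jordan block; then $U$ lies inside the upper-triangular matrices, which are spanned by the three rank one matrices $\e_1\e_1\trans,\e_1\e_2\trans,\e_2\e_2\trans$, so Theorem~\ref{charrank3ten} gives $r(\cT)\le 3$ directly. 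With that substitution your argument is complete and independent of any prior knowledge of $r_{\textrm{max}}(2,2,2)$.
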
  
 
 Consider a rank-two tensor $\cT=\x_1\otimes\y_1\otimes \z_1+\x_2\otimes\y_2\otimes \z_2$ with the generic decomposition.    Corollary \ref{gendec3ten} yields that the rank decomposition of $\cT$ is unique.  
 The orbit of the tensor $\cT$ with respect to 
  the general linear transformations, written
 $\orb(\cT,\gl)$, consists of all rank-two  tensors with the generic decomposition.
  In particular, the GHZ (non-normalized) state $|GHZ\rangle=|111\rangle +|222\rangle$ is a rank-two state with the generic decomposition.  Furthemore the closure of $\orb(|GHZ\rangle,\gl)$ is $\C^\n$. Let $W$
  be the (non-normalized) state
 \begin{equation}|W\rangle=|112\rangle+|121\rangle+|211\rangle=M_1\otimes \e_1+M_2\otimes \e_2\end{equation}
 with
 \begin{equation}M_1=\left[\begin{array}{cc}0&1\\1&0\end{array}\right], M_2=\left[\begin{array}{cc}1&0\\0&0\end{array}\right].\end{equation}
 As $M_1$ is invertible and $M_1^{-1}M_2$ is not diagonalizable, it follows that $r(|W\rangle)=3$. This is essentially \cite[Corollary 3.2.1]{Ja79} after a change of variables in the first factor $\C^2$.  Note that $r_{\textrm{K}}(|1\rangle,|1\rangle,|2\rangle)=1$.  Hence
  the above decomposition of $|W\rangle$ fails to satisfy the conditions of Kruskal's theorem: The sum of Kruskal ranks reads $1+1+1=3$ and $2\times 3+2=8$.   It is easy to show that the above rank decomposition of $|W\rangle$ is not unique.
 It is also known that $r(\cT)=3$ if and only if $\cT\in\orb(|W\rangle,\gl)$. 
(It is enough to show that if $r(\cT)=3$ then $(A_1,A_2,A_3)(\cT)=|W\rangle$ for some $(A_1,A_2,A_3)\in\gl(\n)$.)
 
 We give a short proof of this claim.  Assume that $\cT=T_1\otimes \e_1+T_2\otimes \e_2$ for some $T_1,T_2\in\C^{2\times 2}$.  First note that the action of $A_3$ on $\cT$ is equivalent to choose a different basis $T_1',T_2'$ in span$(T_1,T_2)$.  Choose $X=aT_1+bT_2$ to be invertible, $Y=cT_1+cT_2$ be such that $X^{-1}Y$ is not diagonalizable.  Then $X^{-1}Y$ has a double eigenvalue $\lambda$.  Set $Z=Y-\lambda X$.  Then $X^{-1}Z$ is a rank-one nondiagonalizable matrix.  Hence $\cT_1=(I_2,I_2,A_3)(\cT)=X\otimes \e_1+Z\otimes \e_2$ for some $A_3\in\gl_2$.  Observe next that for $B_1,B_2\in\gl_2$ we obtain $\cT_2=(B_1,B_2,I_2)\cT'=B_1 XB_2\trans\otimes\e_1+B_1ZB_2\trans\otimes \e_2$.  
 Choose $B_1,B_2$ such that $B_1XB_2\trans=I_2$.  Then $\cT_2=I_2\otimes \e_1+C\otimes\e_2$.  Observe that $C$ is similar to $X^{-1}Z$.  As $X^{-1}Z$ is a rank-one nondiagonalizable matrix it follows that $C$ is similar to the Jordan block $M_1^{-1}M_2$.  That is $C=Q^{-1}(M_1^{-1}M_2)Q$.  Let $\cT_3=(Q^{-1},Q\trans,I_2)(\cT_2)=I_2\otimes\e_1+(M_1^{-1}M_2)\otimes\e_2$.
 Finally, $(M_1,I_2,I_2)(\cT_3)=|W\rangle$.
 
 The equality
 \begin{equation}|W\rangle=\lim_{t\to 0}\frac{1}{t} \Big(\big(|1\rangle +t|2\rangle\big)^{\otimes 3}-|1\rangle ^{\otimes 3}\Big)\end{equation}
 shows that the border rank of $|W\rangle$ is $2$.
 \subsection{The rank of $m\times n\times 2$ tensors}\label{subsec:rankmn2ten}
 In this subsection we describe the complete solution to the problem of computing the rank of $\cT=[\cT_{i,j,k}]\in \C^{m\times n\times 2}$.  Denote by
$T_1=[\cT_{i,j,1}]$, $T_2=[\cT_{i,j,2}]\in\C^{m\times n}$ the frontal slices of $\cT$.
It was shown by J\'aJ\'a \cite{Ja79} how to apply the Kronecker theory of the canonical form of a pencil of matrices \cite{Kro90} to determine the rank of a $m\times n\times 2$  tensor.  We present some results of \cite{Ja79} using the
 notions and the results discussed above.

Theorem \ref{charrank3ten} states that $r(\cT)$  is the minimal dimension of  a subspace spanned by rank-one matrices in $\C^{m\times n}$, which contains the subspace $\bV=$ span$(T_1,T_2)$.
We can assume that $T_1$ and $T_2$ are linearly independent, otherwise the rank of $\cT$ is 
$\max(r(T_1),r(T_2))$.  Then one can change a basis in span$(T_1,T_2)$ to $T_1'$, $T_2'$.  This is equivalent to  considering $\cT'=(I_m,I_n, A_3)(\cT)$ corresponding to some $A_3\in\gl(2)$.  Next we consider $\cT_1=(P,Q\trans,I_2)(\cT')$, where $P\in\gl(m)$, $Q\in\gl(n)$.  This corresponds to replacing the pair $(T_1',T_2')$ by $(PT_1'Q,PT_2'Q)$.  It is a classical problem to find the canonical form of a pair of matrices $(A,B)\in \C^{m\times n}\times \C^{m\times n}$ under the simultaneous equivalence: $(A,B)\mapsto P(A,B)Q$, where $P\in\gl(m)$, $Q\in\gl(n)$.  This problem was solved completely by Kronecker \cite{Kro90}.   See the classical exposition in \cite{Gan59}, or a short exposition in \cite[Problems, \S2.1] {Fribook}.
It is common to consider the matrix $A+tB$ with a complex parameter $t$, which is usually called a pencil \cite{Fribook}.

Let us first consider the case where $m=n$ and span$(T_1,T_2)$ contains an invertible matrix.  In this case $(T_1,T_2)$ is called a regular pair.  Equivalently, the pencil $T_2+tT_1$ is called a regular pencil.
So we can assume that $T_1'\in$ span$(T_1,T_2)$ is invertible.  Now choose $P=T_1'^{-1}$, $Q=I_m$ to obtain that the pair $(T_1',T_2')$ is equivalent to the pair $(I_m,A)$.   Note that $r_A(\cT)=m$.  Hence $r(\cT)\ge m$.

All other pairs of  the form $(I_m,B)$ are equivalent to $(I_m,A)$ if $B=QAQ^{-1}$ for some $Q\in\gl(m)$.
We can choose $B$ to be the Jordan canonical form of $A$, or to be the rational canonical form of $A$ \cite{Fribook}. 

We first discuss the case where $A$ is a diagonalizable matrix.  That is, we can choose $B$ to be the diagonal matrix diag$(\lambda_1,\ldots,\lambda_n)$.  Then span$(I_m,B)$ is contained in the span of $m$  linearly independent  rank-one diagonal matrices.  Hence $r(\cT)\le m$. Whence $r(\cT)=m$.  
Vice versa, assume that $r(\cT)=m$.  Then span$(\x_1\y_1\trans,\ldots,\x_m\y_m\trans)$ that contains $T_1$, $T_2$ must contain an invertible matrix.  So $\x_1,\ldots,\x_m$ and $\y_1,\ldots,\y_m$ are linearly independent.  Hence there exist unique $P,Q\trans\in\gl(m)$ such that $P\x_i=Q\trans\y_i=\e_i$ for $i\in[m]$.  Thus $PT_1Q$ and $PT_2Q$ are diagonal matrices.  In particular, $B$ is a diagonal matrix, hence $A$ is diagonalizable.

Assume now that $A$ is not diagonalizable. Hence $r(\cT)>m$.   We now discuss the case where $r(\cT)=m+1$.  

Recall the notion of the companion matrix \cite{Fribook}
which corresponds to monic polynomial $p(t)=t^m-p_1t^{m-1}-\ldots-p_m$:
\begin{equation}C(p)=\left[\begin{array}{cccccc}0&1&0&\cdots&0&0\\0&0&1&\cdots&0&0\\
\vdots&\vdots&\vdots&\vdots&\vdots&\vdots
\\0&0&0&\cdots&0&1\\p_m&p_{m-1}&p_{m-2}&\cdots&p_2&p_1
\end{array}\right].\end{equation}
Then det$(t I_m -C(p))=p(t)$.  Assume that $p(t)=\prod_{i=1}^k(t-\lambda_i)^{m_i}$, where $k\in[m]$, each $m_i$ is a positive integer, $\sum_{i=1}^k m_i =m$, and $\lambda_i\ne \lambda_j$ for $i\ne j$.
Then the Jordan canonical form of $C(p)$ has exactly one Jordan block of order $m_i$ corresponding to the eigenvalue $\lambda_i$ for $i\in[k]$.  Assume that $k<m$.
Hence $C(p)$ is not diagonalizable.  

Suppose that $B=C(p)$.  Let $\x=\e_n, \y=(-1+p_m,p_{m-1},\ldots,p_1)\trans$.
Then $C(p)-\x\y\trans=C(q)$ where $q(t)=t^m-1$.  Hence $C(q)$ is diagonalized, and 
there exists $m$ rank-one matrices $\x_1\y_1\trans,\ldots,\x_m\y_m\trans$ whose span contains $I_m,C(q)$.  Therefore the span of $\x\y\trans,\x_1\y_1\trans,\ldots,\x_m\y_m\trans$ contains $I,C(p)$.  Hence $r(\cT)\le m+1$ and therefore $r(\cT)=m+1$.

Recall next that a matrix $A\in\C^{m\times m}$ is similar to the unique matrix $B=\oplus_{i=1}^l C(p_i)$ where $p_{i+1}(t)$ divides $p_i(t)$ for $i\in [l-1]$. $B$ is called the rational canonical form of $A$ \cite{Fribook}.  ($A$ is similar to $C(p)$ if and only if $l=1$.)  The polynomials $p_1(t),\ldots,p_l(t)$ are called the invariant polynomials of $tI_m-A$ (or simply of $A$.)  Thus $(I_m,B)=\oplus_{i=1}^l (I_{n_i},C(p_i)$, where $n_i$ is the degree of $p_i$ for $i\in[l]$.

The result of J\'aJ\'a \cite{Ja79} can be summarized in the following lemma:
\begin{lemma}\label{rankcTregpair} Let $\cT\in\C^{m\times m\times 2}$.   Let $T_1,T_2$ be two frontal slices of $\cT$.  Suppose that span$(T_1,T_2)$ has dimension $2$ and contains an invertible matrix $X$.  Let $X,Y$ be a basis in span$(T_1,T_2)$, and assume that $X^{-1}Y$ has the rational canonical form $\oplus_{i=1}^lC(p_i)$.  If $p_1$ has simple roots then $r(\cT)=m$.  Suppose that $p_1,\ldots,p_k$ have multiple roots, and $p_{k+1}$ has simple roots if $k<l$. Then $r(\cT)=m+k$.
\end{lemma}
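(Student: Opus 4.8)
The plan is to combine invariance of the rank under the $\gl(\n)$-action with the covering characterization of Theorem~\ref{charrank3ten}, reducing the whole statement to a question about how economically the pencil built on the rational canonical form can be covered by rank one matrices. First I would normalize $\cT$. Since the rank is constant on $\orb(\cT,\gl(\n))$, I may act by $\gl(m)\times\gl(m)\times\gl(2)$ exactly as at the start of this subsection: change the basis of $\mathrm{span}(T_1,T_2)$ to $(X,Y)$, left/right multiply to replace the pair by $(I_m,X^{-1}Y)$, and conjugate $X^{-1}Y$ to its rational canonical form $B=\oplus_{i=1}^{l}C(p_i)$. Thus I may assume $\cT=I_m\otimes\e_1+B\otimes\e_2$. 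As $I_m=\oplus_i I_{n_i}$ and $B=\oplus_i C(p_i)$ are simultaneously block diagonal in the first two modes, $\cT$ is the direct sum, in modes $1$ and $2$ with the common mode $3$ of dimension $2$, of the blocks $\cT_i=I_{n_i}\otimes\e_1+C(p_i)\otimes\e_2\in\C^{n_i\times n_i\times 2}$, where $\sum_i n_i=m$ and $k=\#\{i:p_i\text{ is not squarefree}\}$.

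For the upper bound I would treat each block with the tools already developed here. If $p_i$ has simple roots then $C(p_i)$ is diagonalizable, so the diagonalizable-pair argument above gives $r(\cT_i)=n_i$. If $p_i$ has a multiple root, the companion construction above, writing $C(p_i)=C(q_i)+\x\y\trans$ with $q_i(t)=t^{n_i}-1$ (so $C(q_i)$ is diagonalizable), places $I_{n_i}$ and $C(p_i)$ in the span of $n_i+1$ rank one matrices, whence $r(\cT_i)\le n_i+1$. Since the blocks occupy disjoint index ranges in the first two modes and share the third, concatenating their rank one decompositions yields $r(\cT)\le\sum_i r(\cT_i)\le m+k$.

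The lower bound $r(\cT)\ge m+k$ is the crux, and I would attack it through Theorem~\ref{charrank3ten}: $r(\cT)$ is the least $s$ for which there exist rank one matrices $\x_j\y_j\trans$ ($j\in[s]$) whose span contains $\mathrm{span}(I_m,B)$, say $I_m=\sum_j c_j\x_j\y_j\trans$ and $B=\sum_j d_j\x_j\y_j\trans$. Replacing the reference slice by a generic invertible element $X'$ of the pencil, I may assume $X'=\sum_j c_j\x_j\y_j\trans$ with all $c_j\ne0$. With $U=[\x_1\cdots\x_s]$, $G=\diag(c_j)[\y_1\cdots\y_s]\trans$ and $E=\diag(d_j/c_j)$ one gets $UG=X'$ and $B=UEG$, hence $X'^{-1}B=(X'^{-1}U)EG$ with $(X'^{-1}U)G=I_m$. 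Thus $C:=X'^{-1}B$—a Möbius image of $B$, so with the same Jordan block sizes and hence the same $k$—is the compression $PE|_M$ of the diagonalizable matrix $E$ to the $m$-dimensional subspace $M=\mathrm{range}(G)$, where $P$ is the rank $m$ idempotent $G(X'^{-1}U)$ and $\cN:=\ker P$ has dimension $s-m$. The lemma therefore reduces to the linear-algebraic claim that a compression of a diagonalizable $E\in\C^{s\times s}$ to an $m$-dimensional subspace must satisfy $s-m\ge k$.

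This last inequality is where the real difficulty lies. One first records that $k=\max_\lambda h_\lambda$, $h_\lambda$ being the number of Jordan blocks of $C$ at $\lambda$ of size at least two, since the non-squarefree invariant factors form the prefix $p_1,\dots,p_k$ and $(t-\lambda)^2\mid p_i$ exactly for $i\le h_\lambda$. So it is enough to prove $\dim\cN\ge h_\lambda$ for each eigenvalue $\lambda$. Viewing $C$ as $x\mapsto PEx$ on $M$ and setting $E_\lambda=E-\lambda I_s$, one has $(C-\lambda)x=PE_\lambda x$ for $x\in M$; for each size-$\ge2$ Jordan chain I pick a top pair $\x_1=(C-\lambda)\x_2$, so that $PE_\lambda\x_1=0$ and hence $E_\lambda\x_1\in\cN$. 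The assignment $\x_1\mapsto E_\lambda\x_1$ then carries the $h_\lambda$-dimensional span of these eigenvectors into $\cN$, and the whole argument hinges on its injectivity: no nonzero eigenvector of $C$ at $\lambda$ of the form $(C-\lambda)\x_2$ may be a genuine $E$-eigenvector at $\lambda$. This is precisely the statement that the non-diagonalizable part of $C$ must be absorbed by the extra coordinates $\cN$, and proving it rigorously—by tracking the $E$-eigencomponents of $\x_2$ through $PE_\lambda\x_2=\x_1$ and $PE_\lambda\x_1=0$—is the main obstacle. Should this direct approach become too delicate, the safe fallback is the Kronecker--Weierstrass canonical form of the regular pencil $(I_m,B)$ from \cite{Gan59}, whose classical rank formula for pencils equals $m+k$ and matches the upper bound, completing the proof.
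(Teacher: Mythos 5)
Your normalization and your upper bound coincide with the paper's argument: reduce to $\cT=I_m\otimes\e_1+B\otimes\e_2$ with $B=\oplus_{i=1}^l C(p_i)$, split into blocks, handle diagonalizable blocks by the diagonal-pair argument and non-squarefree blocks by the perturbation $C(p_i)-\x\y\trans=C(q_i)$, and concatenate via Theorem~\ref{charrank3ten} to get $r(\cT)\le m+k$. That half is correct. The gap is exactly where you place it, in the lower bound, and it is worse than an unproven step: the pivotal injectivity claim is false. Take $s=3$, $m=2$, $E=\diag(0,1,2)$, $M=\mathrm{span}\{u_1,u_2\}$ with $u_1=(1,0,0)\trans$, $u_2=(0,1,1)\trans$, and $P$ the projection onto $M$ along $\cN=\mathrm{span}\{(-1,1,2)\trans\}$. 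Then $PEu_1=0$ and $Eu_2=(0,1,2)\trans=u_1+(-1,1,2)\trans$, so $PEu_2=u_1$: the compression $C=PE|_M$ is a single nilpotent Jordan block, its eigenvector $u_1=Cu_2$ lies in the range of $C-\lambda$ (with $\lambda=0$), and yet $Eu_1=0$, i.e.\ $u_1$ \emph{is} a genuine $E$-eigenvector at $\lambda$. Your map $\x_1\mapsto E_\lambda\x_1$ is identically zero on the relevant span, and the dimension count into $\cN$ collapses. Moreover this configuration arises from an actual optimal decomposition of the $W$-pencil: with $\x_1=(1,0)\trans,\x_2=(-1,2)\trans,\x_3=(1,-1)\trans$, $\y_1=(1,0)\trans$, $\y_2=\y_3=(0,1)\trans$, $c=(1,1,1)$, $d=(0,1,2)$, one checks $\sum_j c_j\x_j\y_j\trans=I_2$ and $\sum_j d_j\x_j\y_j\trans$ is the nilpotent Jordan block. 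Since $r(|W\rangle)=3=m+k$ is attained, the failure occurs precisely at the optimal decompositions your argument must handle, so no small repair of the eigenvector-tracking idea can succeed. (Your reduction itself — genericity giving all $c_j\ne0$, the Möbius invariance of $k$, and the identification of $C$ with a compression of a diagonalizable $E$, with $k=\max_\lambda h_\lambda$ — is sound; the compression inequality $s-m\ge k$ is true, but it is equivalent to J\'aJ\'a's lower bound and needs his genuinely harder argument.)

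Your fallback does not close the gap either. The Kronecker--Weierstrass theory in \cite{Gan59} classifies pencils up to strict equivalence; it contains no formula for the \emph{tensor} rank of a pencil. The statement that the regular pencil $(I_m,B)$ has tensor rank $m+k$, with $k$ the number of invariant polynomials with multiple roots, is precisely J\'aJ\'a's theorem — the lemma you are asked to prove — so invoking a ``classical rank formula'' is circular. For comparison, the paper does not prove the lower bound internally at all: it establishes $r(\cT)\le m+k$ exactly as you do and then cites \cite{Ja79} for the equality. The honest completion of your writeup is therefore either to cite \cite{Ja79} for $r(\cT)\ge m+k$, or to reproduce J\'aJ\'a's lower-bound argument in full.
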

Indeed, observe first that if $p_1$ has simple roots, then all other $p_i$ have also simple roots, as each $p_i$ divides $p_1$.  Hence each $C(p_i)$ is diagonalizable, and whence $X^{-1}Y$ is diagonalizable.
Therefore $r(\cT)=m$.  Suppose now that $p_i$ does have multiple roots.  Then $C(p_i)$ is not diagonalizable.   Hence there exists $(A_1,A_2,A_3)\in\gl(m,m,2)$ such that $(A_1,A_2,A_3)(\cT)=\oplus_{i=1}^l\cT_i$, where $\cT_i\in \C^{m_i\times m_i\times 2}$, where the two frontal slices of $\cT_i$ are $I_{m_i}, C(p_i)$.  If $p_i$ have simple roots then $r(\cT_i)=m_i$.  Otherwise $r(\cT_i)=m_i+1$.  Therefore $r(\cT)\le m+k$.
It is shown in \cite{Ja79} that $r(\cT)=m+k$.

The rank of tensors $\cT\in\C^{m\times n\times 2}$ which do not satisfy conditions of Lemma \ref{rankcTregpair} is determined by the following theorem \cite{Ja79}:
\begin{theorem}\label{Jjajathm}
Assume that $\cT\in \C^{m\times n\times 2}$ where the two frontal slices $T_1,T_2$ are linearly independent.  Suppose furthermore that either $m\ne n$ or $m=n$ and the span$(T_1,T_2)$ does not contain an invertible matrix.  Then there exists $(A_1,A_2,A_3)\in\gl(m,n,2)$ such that $(A_1,A_2,A_3)(\cT)=\oplus_{i=1}^p\cT_i$, $\cT_i\in\C^{m_i\times n_i\times 2}$ for $i\in[p]$.   Either $p>1$, $m_p=n_p$ and the subspace spanned by the two frontal slices of $\cT_p$ contains an invertible matrix, or
$|m_p-n_p|=1$.  If $p>1$ then for all other $i\in [p-1]$ one has the equality $|m_i-n_i|=1$.  
\begin{enumerate}
\item Suppose that $n_j=m_j+1$.  Then the two frontal slices of $\cT_j$ are $[I_{m_j}\0]$, $[\0 I_{m_j} ]\in\C^{m_j\times n_j}$.  In this case $r(\cT_j)=n_j$.
\item
 Suppose that $m_j=n_j+1$.  Then the two frontal slices of $\cT_j$ are $[I_{n_j}\0]\trans$, $[\0 I_{n_j} ]\trans\in\C^{m_j\times n_j}$.  In this case $r(\cT_j)=m_j$.  
 \end{enumerate}
 Finally, $r(\cT)=\sum_{i=1}^p r(\cT_i)$.
 \end{theorem}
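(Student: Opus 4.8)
The plan is to treat $\cT$ as the matrix pencil $T_1+tT_2$ and to reduce it to Kronecker canonical form. First I would unwind the $\gl(m,n,2)$ action: the factor $A_3\in\gl_2$ merely replaces the basis $(T_1,T_2)$ of the pencil space by another basis, i.e.\ reparametrizes the pencil, while $A_1=P\in\gl_m$ and $A_2=Q\trans\in\gl_n$ send $(T_1,T_2)$ to $(PT_1Q,PT_2Q)$, which is strict equivalence of pencils. Hence $\orb(\cT,\gl(m,n,2))$ is exactly the strict-equivalence class of the pencil, and Kronecker's canonical form \cite{Kro90,Gan59,Fribook} furnishes $P,Q$ realizing the block decomposition $\oplus_{i=1}^p\cT_i$. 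Every minimal-index (singular) block has size $m_j\times(m_j+1)$ or $(n_j+1)\times n_j$, so $|m_i-n_i|=1$ for it, and the one regular block that may occur is square and contains an invertible matrix in the span of its slices. The hypothesis that $m\ne n$, or that $\mathrm{span}(T_1,T_2)$ contains no invertible matrix, forces at least one singular block to be present, which is exactly the structural dichotomy asserted for $\cT_p$.

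Next I would pin down the rank of each block. For the regular block this is handed to me verbatim by Lemma~\ref{rankcTregpair}. For a wide singular block $\cT_j$ with frontal slices $[I_{m_j}\,\0]$ and $[\0\;I_{m_j}]$ I would read off the lower bound from a flattening: grouping the first and third modes, the $B$-flattening is an $(m_j+1)\times 2m_j$ matrix whose rows form a staircase pattern and are linearly independent, so $r_B(\cT_j)=m_j+1=n_j$ and therefore $r(\cT_j)\ge n_j$. The matching upper bound $r(\cT_j)\le n_j$ I would obtain by writing down an explicit decomposition of $\cT_j$ into $n_j$ rank one tensors (equivalently, by Theorem~\ref{charrank3ten}, an $n_j$-dimensional space of rank one matrices containing $\mathrm{span}(T_1,T_2)$); hence $r(\cT_j)=n_j$. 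Interchanging the roles of the two larger modes (transposing the slices) gives $r(\cT_j)=m_j$ for a tall block, establishing items~(1) and~(2).

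It remains to assemble the global identity $r(\cT)=\sum_{i=1}^p r(\cT_i)$. The inequality $r(\cT)\le\sum_i r(\cT_i)$ is immediate from subadditivity of rank under direct sums together with the per-block values just computed. The reverse inequality is the crux and the one genuinely hard step: it is precisely Strassen additivity for direct sums, which can fail for general $3$-tensors but holds in the $m\times n\times 2$ format. I would prove it through Theorem~\ref{charrank3ten}, which identifies $r(\cT)$ with the minimal dimension of a rank-one-spanned subspace $\bW\subseteq\C^{m\times n}$ containing the block-diagonal pencil space $\bV=\oplus_i\bV_i$, and then argue, following \cite{Ja79}, that the rigidity of the pencil invariants (minimal indices and elementary divisors form a complete, block-additive system of strict-equivalence invariants when the third mode has dimension $2$) forces every such $\bW$ to split compatibly with the blocks, so that $\dim\bW\ge\sum_i r(\cT_i)$. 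Controlling how an optimal rank-one cover may fail to be block diagonal, and showing it can be rearranged without increasing its dimension, is where the whole difficulty of the theorem is concentrated.
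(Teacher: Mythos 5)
Your proposal is correct and follows essentially the same route as the paper: reduce the $\gl(m,n,2)$-action to strict equivalence of the pencil $T_1+tT_2$, invoke Kronecker's canonical form \cite{Kro90,Gan59} to obtain the block decomposition, compute the rank of each singular block, and defer the additivity $r(\cT)=\sum_{i=1}^p r(\cT_i)$ to J\'aJ\'a \cite{Ja79} --- which is precisely what the paper does, since it remarks that ``the main result of this theorem is its last part'' and cites \cite{Ja79} for it rather than proving it. The only substantive divergences are in the per-block ranks. Your flattening argument for the lower bound ($r_B(\cT_j)=m_j+1=n_j$ from the staircase rows, hence $r(\cT_j)\ge n_j$) is sound, and in fact supplies a step the paper dismisses with ``it is straightforward to show that $r(\cT_j)>m_j$.'' Conversely, your upper bound is left as a promissory note (``an explicit decomposition''), whereas the paper actually constructs it: complete the wide block to the square pair $(I_{n_j},J_{n_j})$, use the companion-matrix identity $C(p)-\e_{n_j}\y\trans=C(q)$ with $q(t)=t^{n_j}-1$ diagonalizable to cover $(I_{n_j},J_{n_j})$ by $n_j+1$ rank-one matrices, one of which is supported in the added last row, and then delete that row. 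This construction is worth internalizing, because the naive one-parameter rank-one families one might first try (Hankel or Toeplitz curves $\bu(t)\bv(t)\trans$) do not have the two slices in their span, so ``writing down an explicit decomposition'' is less immediate than it sounds. Finally, your closing caveat is apt: an optimal rank-one cover of a direct sum need not be block diagonal, so the additivity genuinely requires J\'aJ\'a's argument via the Kronecker invariants and not the literal ``splitting'' heuristic; acknowledging that this is where the difficulty is concentrated puts your sketch at the same level of completeness as the paper's own treatment.
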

 
 To see that the $r(\cT_j)$ in the case (1) is $n_j$ we do as follows:  We extend $\cT_j\in \C^{m_j\times n_j\times 2}$ to $\hat\cT_j\in\C^{n_j\times n_j\times 2}$ by adding a row $n_j$ to the two frontal sections.  To the first section  we add the row $\e_{n_j}\trans$  to obtain $I_{n_j}$,  and to the second section we add the zero row to obtain the Jordan block $J_{n_j}$ corresponding to the eigenvalue $0$. $J_{n_j}$ is the companion matrix of $p(t)=t^{n_j}$.
Hence $r(\hat\cT_j)=n_j+1$.  We showed above that there are $n_j+1$ rank-one matrices whose linear combinations span $I_{n_j}$ and $J_{n_j}$.  As in the case
$p=1, m_p=n_p $ and $\rank \cT= n_1+1$ discussed in the beginning of this section, we can assume that  one of these rank-one matrices is of the form $\e_{n_j}\e_1\trans$.  Hence, if we delete the last row of the other $n_j$ rank-one matrices, they will span the two frontal slices of $\cT_j$.  Thus $r(\cT_j)\le n_j$.  It is straightforward to show that $r(\cT_j)>m_j$.   The case (2) can be shown similarly.
The main result of this theorem is its last part.

We now bring one application of this theorem \cite{AS79}:
\begin{eqnarray}\label{maxrankmn2}
r_{\textrm{max}}(m,n,2)=
\begin{cases}
m+\lfloor \frac{n}{2}\rfloor \textrm{   for } 2\le m\le n\le 2m,\\
2m \textrm{ for } 2\le m,\; 2m< n.
\end{cases}
\end{eqnarray}
First observe that the second case and the first case with $n=2m$ is a simple consequence of Theorem \ref{charrank3ten} when applied to horizontal sections $H_1,\ldots, H_n\in\C^{m\times 2}$ of $\cT$.  In that case $r(\cT)\le 2m$ because the whole space $\C^{m\times 2}$ is spanned by $2m$ rank-one matrices.  Assume now that $A_1,\ldots,A_{2m}$ are linearly independent.  Then these matrices span $\C^{m\times 2}$ and $r(\cT)\ge 2m$.

We now discuss the first case of \eqref{maxrankmn2}.  Let us consider first the case $m=n$.  For $m=2$ we know that the maximal rank is $3=2+\lfloor 2/2\rfloor$.  Furthermore, equality holds if and only if $\cT\in\orb(W)$.

Let us consider the case $m=3$.  Then we have two choices in Theorem  \ref{Jjajathm}.
First, $\cT$ is a direct sum of two singular pencils of dimensions $1\times 2$ and $2\times 1$.  In this case the rank of $\cT$ is $4$.  The other choice is that the two frontal sections form a regular pair.  Then Lemma \ref{rankcTregpair} yields that the maximal rank is $4$. 

Next consider the case $m=4$. If the two sections form a nonsingular pencil then $r(\cT)\le 6$.  Equality is achieved if the Jordan canonical form $X^{-1}Y$ in Lemma 
\ref{rankcTregpair} forms two nilpotent Jordan blocks of order $2$.  Other choices have smaller rank.  

We now deduce the general formula for the case $m=n$ as follows.  For $m$ even, we have $r_{\textrm{max}}(m,m,2)=3m/2$ which is achieved for a nonsingular two frontal slices, which are equaivalent to $(I,C)$, where $C$ is a sum of $m/2$ nilpotent Jordan blocks.  If $m\ge 3$ is odd, we have two possible ways to achieve the maximum rank $(3m-1)/2$.  First, a nonsingular pair $(I,C)$ where $C$ is a sum of $(m-1)/2$ nilpotent Jordan blocks and one Jordan block.  Second, a direct sum of $3\times 3$ singular pair of rank $4$, and a regular pair $(I,C)$ of order $m-3$ with the maximal rank $3(m-3)/2$.

For the case $m<n\le 2m$ the maximum possible rank is obtained as follows.  First, we consider the sum of $n-m$ copies of singular pairs of $1\times 2$.  This part contributes $2(n-m)$  to the rank of $\cT$.  If $n=2m$ we are done.  Otherwise we are left with a regular pencil of order $m-(n-m)=2m-n$ with the maximal rank 
$2m-n +\lfloor (2m-n)/2\rfloor$.  Hence the maximal rank is
\begin{equation}2(n-m)+2m-n+\lfloor (2m-2n)/2+n/2\rfloor=m+\lfloor n/2\rfloor.\end{equation}
\subsection{Validity of Strassen's direct sum conjecture for certain $3$-tensors}\label{subsec:strasscon3}
The results of J\'aJ\'a and Takche  \cite{Ja86} yield:
\begin{theorem}\label{valstrascon}  Let $\cT\in\C^{\n}, \cU\in\C^{\bp}$, where $\n=(n_1,n_2,n_3),\bp=(p_1,p_2,p_3)$.  Then $r(\cT\oplus\cU)=r(\cT) + r(\cU)$ if one of the following conditions holds:
\begin{enumerate}
\item $2\in \{n_1,n_2,n_3,p_1,p_2,p_3\}$.
\item $2\in \{n_i n_j -n_k,p_i p_j-p_k\}$ for some $i,j,k$ satisfying $\{i,j,k\}=\{1,2,3\}$.
\end{enumerate}
\end{theorem}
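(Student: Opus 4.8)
The plan is to establish the nontrivial inequality $r(\cT\oplus\cU)\ge r(\cT)+r(\cU)$, since the reverse inequality $r(\cT\oplus\cU)\le r(\cT)+r(\cU)$ is the subadditivity noted in \S\ref{subsec:tenrank}. First I would reduce to the case where both $\cT$ and $\cU$ are concise, i.e.\ their slices span the whole ambient space in each mode; a non-concise tensor can be restricted to the subspace it genuinely occupies without changing its rank, and this restriction preserves both hypotheses up to decreasing the relevant dimensions. The key observation unifying the two cases is that each hypothesis forces a \emph{matrix pencil} (a two-dimensional space of matrices) to control one of the tensors: under hypothesis (1), after permuting the three modes and possibly swapping $\cT$ and $\cU$, one tensor -- say $\cT$ -- has a mode of dimension $2$, so its slices along that mode form a pencil; under hypothesis (2), writing $V\subseteq\C^{n_i\times n_j}$ for the span of the mode-$k$ slices of $\cT$, conciseness gives $\dim V=n_k$ and hence $\dim V^{\perp}=n_in_j-n_k=2$, so the annihilator $V^{\perp}$ is a pencil. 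In either case Theorem~\ref{charrank3ten} lets me translate $r(\cT)$ into a covering problem for a two-dimensional space of matrices.

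Next I would invoke the Kronecker canonical form of a matrix pencil \cite{Kro90} to split the governing pencil into its indecomposable blocks: the regular part (controlled by the Weierstrass/Jordan structure) and the singular minimal-index blocks $L_\epsilon$ and $L_\eta\trans$. Lemma~\ref{rankcTregpair} and J\'aJ\'a's structure theorem (Theorem~\ref{Jjajathm}) give the exact rank contribution of each block, and in particular express $r(\cT)$ as a sum over these blocks. Applying the same analysis to $\cT\oplus\cU$ shows that a Kronecker reduction of its governing pencil refines the block decompositions of $\cT$ and of $\cU$ separately, so that the \emph{upper} bound is recovered blockwise and consistently; the genuine content is the matching lower bound.

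The crux, and the step I expect to be hardest, is that matching \emph{lower} bound: showing that a minimal rank decomposition of $\cT\oplus\cU$ cannot do better than handling $\cT$ and $\cU$ separately. I would argue this by a substitution/specialization argument -- projecting a hypothetical decomposition $\cT\oplus\cU=\sum_{s=1}^{r}a_s\otimes b_s\otimes c_s$ onto the coordinate subspaces carrying $\cT$ and those carrying $\cU$, and showing that the number of terms effectively used on each side is at least the corresponding rank. This rigidity is exactly what the completeness of the Kronecker classification buys us: the exact rank formulas of \cite{Ja79,Ja86} leave no room for a single rank-one term to contribute efficiently to both summands at once. This is precisely the point where generality cannot be expected, since Strassen's direct sum conjecture is false for general $3$-tensors \cite{ShiS17}; hence the argument must exploit the pencil structure throughout, and the delicate part is controlling the terms that straddle the two summands in the presence of the singular $L_\epsilon$ and $L_\eta\trans$ components, where a naive separation can fail and one must fall back on the exact minimal-index bookkeeping.
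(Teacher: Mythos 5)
There is a genuine gap, and it sits exactly where you flagged it. Note first that the paper does not prove this theorem at all: it is quoted directly from J\'aJ\'a and Takche \cite{Ja86}, so your proposal must be measured against that cited proof. Your outline correctly assembles the toolkit for case (1) — Theorem \ref{charrank3ten}, the Kronecker canonical form of pencils, Lemma \ref{rankcTregpair} and Theorem \ref{Jjajathm} — but the entire content of the theorem is the lower bound $r(\cT\oplus\cU)\ge r(\cT)+r(\cU)$, and your treatment of it is an assertion, not an argument. Projecting a decomposition $\cT\oplus\cU=\sum_{s=1}^{r}a_s\otimes b_s\otimes c_s$ onto the coordinate subspaces carrying $\cT$ and $\cU$ only yields $r\ge\max(r(\cT),r(\cU))$: each rank-one term projects to a (possibly zero) rank-one term in \emph{each} block, so a single term can a priori serve both summands, and the claim that ``the exact rank formulas leave no room for a single rank-one term to contribute efficiently to both summands at once'' is precisely what must be proved. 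Nothing in the completeness of the Kronecker classification delivers this for free — Shitov's counterexample \cite{ShiS17} shows that straddling terms genuinely can beat the blockwise count for general $3$-tensors, so the pencil hypothesis must enter through a concrete substitution/exchange argument (replacing straddling terms while controlling the minimal-index blocks), and carrying that out is the bulk of the roughly seventeen pages of \cite{Ja86}. Your sketch names this step as ``the crux'' but does not execute it.

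Case (2) has a second, independent gap: the reduction to a pencil is not legitimate as stated. When $\dim V=n_k$ and $\dim V^{\perp}=n_in_j-n_k=2$, the annihilator is indeed a two-dimensional space of matrices, but Lemma \ref{rankcTregpair} and Theorem \ref{Jjajathm} compute the rank of a tensor \emph{whose third mode has dimension two} — i.e.\ of the pencil itself — whereas here the covering problem of Theorem \ref{charrank3ten} concerns a subspace $V$ of \emph{codimension} two in $\C^{n_i\times n_j}$, and one must count rank-one matrices spanning a space containing $V$, not spanning $V^{\perp}$. There is no duality that converts the latter problem into the former: the rank of $\cT$ is not determined by the Kronecker invariants of $V^{\perp}$ in the way your sketch assumes. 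This is why J\'aJ\'a and Takche treat the class $n_in_j-n_k=2$ as a separate case with its own analysis rather than as a corollary of the pencil case. To repair the proposal you would need either to reproduce their argument for both classes or to cite \cite{Ja86} (or the alternative treatments in \cite{BPR19,Tei15}) for the lower bound, as the paper itself does.
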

Use induction on $k$ to deduce the following result:
\begin{corollary}\label{varestStrcon}
Let $\cT\in\C^{\n}$, where $\n=(n_1,n_2,n_3)$.  Then
\begin{eqnarray}\label{varestStrcon1}
r(\cI(k,3)\otimes_K \cT)=r(\cI(k,3)\otimes \cT)=kr(\cT)
\end{eqnarray}
if one of the following conditions holds:
\begin{enumerate}
\item $2\in\{n_1,n_2,n_2\}$.
\item $2\in \{n_i n_j -n_k\}$ for some $i,j,k$ satisfying $\{i,j,k\}=\{1,2,3\}$.
\end{enumerate}
\end{corollary}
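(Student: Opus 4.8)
The plan is to reduce everything to the single equality $r(\oplus^k\cT)=kr(\cT)$, prove that by induction on $k$, and then recover the full chain in \eqref{varestStrcon1} from the general bounds already recorded in the excerpt. Recall from the discussion preceding \eqref{rStrascon} that $\cI(k,3)\otimes_K\cT=\oplus^k\cT$ after reordering the Kronecker basis lexicographically, so once $r(\oplus^k\cT)=kr(\cT)$ is established we immediately get $r(\cI(k,3)\otimes_K\cT)=kr(\cT)$. The inequalities \eqref{rnaktenKrnin} then read $r(\cI(k,3)\otimes_K\cT)\le r(\cI(k,3)\otimes\cT)\le r(\cI(k,3))r(\cT)=kr(\cT)$, using $r(\cI(k,3))=k$; since the two outer terms now both equal $kr(\cT)$, the middle term is squeezed to $kr(\cT)$ as well. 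This is exactly the reduction sketched just before the statement of the corollary, so it needs no new work.

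For the core claim I would induct on $k$. The base case $k=1$ is trivial, as $\oplus^1\cT=\cT$. For the inductive step I write $\oplus^k\cT=(\oplus^{k-1}\cT)\oplus\cT$ and apply Theorem \ref{valstrascon} to the pair $\cA:=\oplus^{k-1}\cT\in\C^{\mathbf{a}}$, with $\mathbf{a}=((k-1)n_1,(k-1)n_2,(k-1)n_3)$, and $\cB:=\cT\in\C^{\n}$. The crucial observation is that the hypotheses of the corollary are conditions on $\n$ alone, and they are precisely what Theorem \ref{valstrascon} asks of the summand $\cB$: if $2\in\{n_1,n_2,n_3\}$ then $2$ lies in the combined dimension list, so hypothesis (1) of Theorem \ref{valstrascon} holds; and if $2\in\{n_in_j-n_k\}$ for some $\{i,j,k\}=\{1,2,3\}$, then the same value $2$ appears in the set tested by hypothesis (2). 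In either case Theorem \ref{valstrascon} yields $r(\cA\oplus\cB)=r(\cA)+r(\cB)$, and combining with the inductive hypothesis $r(\cA)=(k-1)r(\cT)$ gives $r(\oplus^k\cT)=(k-1)r(\cT)+r(\cT)=kr(\cT)$, completing the induction.

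I expect no serious obstacle, since Theorem \ref{valstrascon} does the heavy lifting and the subadditive bound $r(\cA\oplus\cB)\le r(\cA)+r(\cB)$ is not even needed separately, the theorem supplying exact equality. The one point requiring care is the bookkeeping that the dimension hypothesis is inherited along the induction: one must check that the value $2$ supplied by the corollary's hypothesis really lands in the correct slot of the J\'aJ\'a--Takche criterion for the pair $(\oplus^{k-1}\cT,\cT)$. Here it is essential that the criterion only requires the condition to be met by \emph{one} of the two summands, namely the unscaled copy $\cT$, whose dimension vector remains equal to $\n$ at every stage, while the dimensions of $\oplus^{k-1}\cT$ grow and need not satisfy any condition. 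A secondary routine detail is confirming the identification $\cI(k,3)\otimes_K\cT=\oplus^k\cT$ under the lexicographic ordering of the bases, but this is already asserted in the text around \eqref{rStrascon}.
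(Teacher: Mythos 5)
Your proof is correct and matches the paper's intended argument: the text introduces Corollary \ref{varestStrcon} precisely with ``Use induction on $k$'' applied to Theorem \ref{valstrascon}, exploiting exactly as you do that the J\'aJ\'a--Takche criterion needs only the summand $\cT$ (with fixed dimension vector $\n$) to satisfy the hypothesis, and then squeezing via \eqref{rnaktenKrnin} and the identification $\cI(k,3)\otimes_K\cT=\oplus^k\cT$. No gaps; your bookkeeping remark about the unscaled summand is the right point of care.
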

A recent paper \cite{BPR19} gives additional conditions where Strassen's additivity conjecture holds.   Namely, Theorem \ref{valstrascon} holds if one of the following conditions is satisfied: 
\begin{enumerate}
\item $(p_1,p_2, p_3)=(p_1, 3,3)$.
\item $r(\cU)\le 6$.
\item $\max(r_A(\cU),r_B(\cU),r_C(\cU))+2\ge r(\cU)$.
\end{enumerate}
\subsection{Generic rank of $3$-tensors}\label{subsec:genrank}
We first observe that $r_{\textrm{gen}}(m,n,p)$ is a symmetric function in the positive integer variables $m,n,p$.  Let us fix $m,n\in\N$ and assume that $2\le m\le n$.  We first observe the simple equality
\begin{eqnarray}\label{genmnpvbig}
r_{\textrm{gen}}(m,n,p)=r_{\textrm{max}}(m,n,p)=mn \textrm{ for } p\ge mn.
\end{eqnarray}
Indeed, let $T_1,\ldots,T_p\in\C^{m\times n}$ be the $p$ frontal sections of $\cT\in \C^{m\times n\times p}$.  Hence $T_1,\ldots,T_p$ are chosen at random, where each entry of each $T_k, k\in[p]$ is independent Gaussian random variable.
 As $p\ge mn$, every set of $mn$ matrices out of  $T_1,\ldots,T_p\in\C^{m\times n}$ is linearly independent.  Hence the subspace spanned by $T_1,\ldots,T_p$ is $\C^{m\times n}$.
Theorem \ref{charrank3ten}  yields that $r(\cT)=mn$.  Apply Theorem \ref{charrank3ten} to deduce that for $p\ge mn$ one has $r_{\textrm{max}}(m,n,p)=mn$.

It is left to discuss the case where $p<mn$.  We now bring the following well known result, see \cite{Fri12} and references therein:
\begin{theorem}\label{genmnpbig} Assume that $2\le m\le n$ and $(m-1)(n-1)+1\le p\le mn-1$. Then
$r_{\textrm{gen}}(m,n,p)=p.$
\end{theorem}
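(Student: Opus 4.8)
The plan is to establish the two inequalities $r_{\textrm{gen}}(m,n,p)\ge p$ and $r_{\textrm{gen}}(m,n,p)\le p$ separately, in each case reading off the rank from the frontal slices $T_1,\dots,T_p\in\C^{m\times n}$ of a generic $\cT\in\C^{m\times n\times p}$ via Theorem \ref{charrank3ten}. For the lower bound I first note that, since $p\le mn-1<mn=\dim\C^{m\times n}$, a generic choice of slices is linearly independent, so the subspace $\bV=\mathrm{span}(T_1,\dots,T_p)$ has dimension $p$ and $r_C(\cT)=\dim\bV=p$. The inequality $r_C(\cT)\le r(\cT)$ recorded in \S\ref{subsec:basres3ten} then gives $r(\cT)\ge p$, hence $r_{\textrm{gen}}(m,n,p)\ge p$.

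For the upper bound, Theorem \ref{charrank3ten} says that $r(\cT)$ equals the least dimension of a subspace of $\C^{m\times n}$ spanned by rank one matrices and containing $\bV$. Since $\dim\bV=p$ generically, it suffices to prove that $\bV$ is \emph{itself} spanned by rank one matrices: one may then take that covering subspace to be $\bV$, giving $r(\cT)\le p$ and hence equality. Because the map sending $(T_1,\dots,T_p)$ to its span is a dominant map onto the Grassmannian $\mathrm{Gr}(p,\C^{m\times n})$, it is enough to prove the following \emph{Key Lemma}: for $(m-1)(n-1)+1\le p\le mn-1$, a generic $p$-dimensional subspace $\bV\subseteq\C^{m\times n}$ is spanned by rank one matrices.

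I would prove the Key Lemma by intersecting with the Segre variety. Let $\Sigma\subset\mathbb{P}^{mn-1}=\mathbb{P}(\C^{m\times n})$ be the image of the Segre embedding of $\mathbb{P}^{m-1}\times\mathbb{P}^{n-1}$, that is, the projectivized locus of rank one matrices; it is irreducible, nondegenerate, and of dimension $m+n-2$. A generic $\bV$ corresponds to a generic linear subspace $L=\mathbb{P}(\bV)$ of dimension $p-1$, i.e. of codimension $mn-p$. The hypothesis is exactly the statement that this codimension does not exceed $\dim\Sigma$, since $mn-p\le m+n-2\iff p\ge (m-1)(n-1)+1$. Under this condition $\Sigma\cap L\neq\varnothing$, and the assertion that $\bV$ is spanned by rank one matrices is precisely the assertion that the linear section $\Sigma\cap L$ is nondegenerate in $L$ (a nondegenerate, possibly zero-dimensional, subvariety of $\mathbb{P}^{p-1}$ automatically consists of points spanning it).

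The main work, therefore, is the classical fact that a general linear section of an irreducible nondegenerate projective variety $X$ by a linear space of codimension at most $\dim X$ is again nondegenerate. I would prove it by induction on the codimension, applying at each step the lemma that a general hyperplane section of an irreducible nondegenerate variety of dimension $\ge 1$ is nondegenerate (and irreducible once the dimension is $\ge 2$), starting from $X=\Sigma$. The delicate point — and the place where genericity of $\bV$ is essential — is the boundary case $p=(m-1)(n-1)+1$, where $\mathrm{codim}\,L=\dim\Sigma$ and $\Sigma\cap L$ is zero-dimensional: here one must ensure that the $\deg\Sigma=\binom{m+n-2}{m-1}$ intersection points lie in linearly general position in $L\cong\mathbb{P}^{p-1}$, so that $p$ of them span. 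Maintaining irreducibility and nondegeneracy through the successive sections, so that the inductive hypothesis keeps applying, is the crux; once that is secured, the remaining bookkeeping is routine.
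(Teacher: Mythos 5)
Your proposal follows essentially the same route as the paper's outlined proof: both reduce via Theorem \ref{charrank3ten} to showing that a generic $p$-dimensional subspace of $\C^{m\times n}$ is spanned by rank one matrices, and both establish this by intersecting with the Segre variety, using the dimension count $mn-p\le m+n-2\iff p\ge(m-1)(n-1)+1$ together with a degree/general-position argument in the boundary case (the paper cites $\deg\mathrm{Seg}\ge(m-1)(n-1)+1$ where you invoke $\deg\Sigma={m+n-2\choose m-1}$ and linear general position of the section points). The classical fact you prove by induction on hyperplane sections is precisely the ``basic result'' on generic linear sections that the paper quotes, so the two arguments coincide in substance.
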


We now outline briefly the proof of this theorem which will need some basic notions and results in algebraic geometry that we will be using in this paper. 
A good reference on a basic algebraic geometry is  \cite{Har}.
A set $V\subset \C^N$ is called a \emph{variety} if it is zero of a finite number of polynomials in $N$ complex variables.   The algebra of polynomials in $N$ complex variables is denoted by $\C[\x], \x=(x_1,\ldots,x_N)\trans\in\C^N$.  $V$ is called \emph{irreducible} if it is not a union of two varieties, each strictly contained in $V$.  Assume that $V$ is an irreducible variety. There is a strict subvariety  of V, called Sing $V$, which consists of singular points of $V$, such that $M=V\setminus$Sing $V$ is a connected complex manifold.  The complex dimension of $M$ is called the dimension of $V$, and denoted by dim $V$.  In general, a point $\z\in\C^N$ is called \emph{generic}, or in \emph{general position} if $\z\in\C^N\setminus V$.  Usually, $V$ will depend on the property
that one studies.
  
A variety $V\ne \{\0\}$ is called a \emph{projective}  if for each $t\in\C\setminus\{\0\}$ we have that $tV=V$.  Note that a projective irreducible variety $V$ satisfies $\dim V\ge 1$.  A simplest irreducible projective variety of dimension $d$ will be a subspace $L\subset \C^N$ of dimension $d$.  A basic result in algebraic geometry says that given a projective irreducible variety $V$, $d=\dim V\in [N-1]$ then for each subspace $L$ of dimension $N-d+1$ the intersection $V\cap L$ contains at least one line, i.e., a subspace of dimension $1$.  Furthermore, there exist a subvariety $W(V)$ on the ``space'' of all vector spaces in $\C^N$ of dimension $N-d+1$, such that  $V\cap L$ has a constant number of lines for each $L\not\in W(V)$, which is denoted as deg $V$.  Moreover, for $L\not\in W(V)$, each set of $\min(N,\deg V)$ lines in $V\cap L$ is linearly independent.  Note that if $V$ is also a subspace then $\deg V=1$.

 We now consider the variety of rank-one matrices plus zero matrix in $\C^{m\times n}$.  We view  $\C^{m\times n}$ as $\C^{mn}$.  This variety is called the Segre variety Seg$(\C^{m\times n})$.  This variety has dimension $m+n-1$ and has one singular point $A=0$.  Let us take a vector space $L$ of dimension $mn-(m+n-1)+1=(m-1)(n-1)+1$.  The above results yield that each such subspace contains a rank-one matrix.   One can compute the degree of Seg$(\C^{m\times n})$ and it is not less than $(m-1)(n-1)+1$ \cite{Fri12}.  Hence a generic $(m-1)(n-1)+1$  dimensional subspace of $\C^{m\times n}$ is spanned by $(m-1)(n-1)+1$ rank-one matrices.  Consider now a generic tensor $\cT\in \C^{m\times n\times ((m-1)(n-1)+1)}$.  Let $T_1,\ldots,T_{(m-1)(n-1)+1}$ be its frontal sections.  Hence span$(T_1,\ldots,T_{(m-1)(n-1)+1})$ is a generic subspace of dimension $(m-1)(n-1)+1$, which has a basis consisting of rank-one matrices.  Theorem \ref{charrank3ten}  yields that $r(\cT)=(m-1)(n-1)+1$.
Similar arguments yield Theorem \ref{genmnpbig} for $p$ that satisfies the inequalities $(m-1)(n-1)+1<p<mn$.
 
 Letting $m=2$ in Theorem \ref{genmnpbig} we deduce that
   \begin{eqnarray}\label{2mpcase}
   r_{\textrm{gen}}(2,n,p)=p \textrm{ for } n\le p\le 2n. 
   \end{eqnarray}  
 Thus it is left to determine the generic rank in the critical range
 \begin{eqnarray}\label{critrange}
 3\le m\le n\le p\le (m-1)(n-1).
 \end{eqnarray}
 
For any subset $X$ of $\C^{n_1}\times\cdots \times \C^{n_d}$  we denote  the closure of $X$ in the standard Euclidean topology on $\C^{n_1}\times\cdots \times \C^{n_d}$
  by  $\mathrm{Closure}(X)$.  
  Let us now recall the lemma of Terracini's \cite{Ter16} --  see \cite{Fri12}. 
 \begin{lemma}\label{Frmap} Let $m,n,p$ be positive integer greater than $1$.  
 Fix $r\in\N$ and consider the polynomial map $\bF_r:(\C^m\times\C^n\times\C^p)^r\to \C^{m\times n\times p}$ given as follows:
 \begin{eqnarray*}
 \bF_r(\x_1,\y_1,\z_1,\ldots,\x_r,\y_r,\z_r)=\sum_{i=1}^r \x_i\otimes\y_i\otimes\z_i. 
 \end{eqnarray*}
 Then 
 \begin{enumerate}
 \item The set $\bF_r((\C^m\times\C^n\times\C^p)^r)$ is the set of all tensors in $\C^{m\times n\times p}$ of rank at most $r$.
 \item The set $\mathrm{Closure}(\bF_r((\C^m\times\C^n\times\C^p)^r))$ is the set of all tensors in $\C^{m\times n\times p}$ of border rank at most $r$.
 \item The set $\mathrm{Closure}(\bF_r((\C^m\times\C^n\times\C^p)^r))$ is an irreducible variety $V_r$ in $\C^{m\times n\times p}$.
 \item There exists a subvariety $W_r\subset V_r$ such that $\bF_r((\C^m\times\C^n\times\C^p)^r)\supset V_r\setminus W_r$.
 \item The dimension of $V_r$ is the maximal rank of the Jacobian of $\bF_r$.
 \item There exists a subvariety $U_r\subset (\C^m\times\C^n\times\C^p)^r$ such that the rank of the Jacobian of $\bF_r$ for each point not in $U_r$ is $\dim V_r$.
 \item The generic rank $r_{\textrm{gen}}(m,n,p)$ is the minimal $r$ such that $\dim V_r=mnp$.
 \item $\bF_r((\C^m\times\C^n\times\C^p)^r)\subseteq \bF_{r+1}((\C^m\times\C^n\times\C^p)^{r+1})$.  Equality holds if and only if $r\ge r_{\textrm{max}}(m,n,p)$.
 In particular $\bF_r((\C^m\times\C^n\times\C^p)^r)=\C^{m\times n\times p}$ if and only if $r\ge r_{\textrm{max}}(m,n,p)$.
 \end{enumerate}
 \end{lemma}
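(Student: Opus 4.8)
The plan is to dispatch the eight assertions in turn, since most are instances of standard facts about polynomial maps of affine varieties and only (5)--(6) carry the genuine content of Terracini's lemma. I would first note that (1) and (2) are essentially restatements of definitions: the image $\bF_r((\C^m\times\C^n\times\C^p)^r)$ consists exactly of the tensors expressible as a sum of $r$ rank one tensors, and allowing zero summands shows this is precisely the set of tensors of rank at most $r$, which is (1). For (2), $\mathrm{Closure}(\bF_r(\cdots))$ is the set of limits of rank-at-most-$r$ tensors, and I would argue by pigeonhole that any convergent such sequence has a subsequence of constant rank $k\le r$, so each limit point has border rank at most $r$; the converse inclusion is immediate from the definition of border rank.

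Parts (3) and (4) are general algebraic geometry. The domain is an affine space, hence an irreducible variety, and $\bF_r$ is a polynomial map, so its image is irreducible and its closure $V_r$ is irreducible, giving (3). Chevalley's theorem guarantees the image is constructible, hence contains a dense open subset of $V_r$; taking $W_r$ to be the complementary proper subvariety gives $\bF_r(\cdots)\supseteq V_r\setminus W_r$, which is (4).

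The crux is (5)--(6). I would invoke the classical statement that for a polynomial map of affine varieties over $\C$ the dimension of the closure of the image equals the generic rank of the differential, which coincides with the maximal rank $\rho$ of the Jacobian over the domain; this is (5). For (6) I would cut out the locus where the Jacobian fails to attain rank $\rho$ by the simultaneous vanishing of all $\rho\times\rho$ minors of the Jacobian, a closed subvariety $U_r$ that is proper precisely because rank $\rho$ is attained somewhere. \textbf{The main obstacle} in a self-contained account is exactly this identification of $\dim V_r$ with the maximal Jacobian rank, since this is where Terracini's lemma does real work, describing the tangent space to the secant variety at a generic point as the span of the tangent spaces of the Segre variety at the chosen rank one points; it is also what makes $r_{\textrm{gen}}$ computable by a single randomized Jacobian-rank evaluation.

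Finally, I would deduce (7) by combining the previous parts: $V_r$ is irreducible and contained in $\C^{m\times n\times p}$, so $\dim V_r=mnp$ forces $V_r=\C^{m\times n\times p}$, whereupon (4) places a generic tensor in $\bF_r(\cdots)$ and gives rank at most $r$; whereas if $\dim V_r<mnp$ then $V_r$ is proper and a generic tensor has border rank, hence rank, exceeding $r$, so $r_{\textrm{gen}}(m,n,p)$ is the least $r$ with $\dim V_r=mnp$. For (8) the inclusion $\bF_r(\cdots)\subseteq\bF_{r+1}(\cdots)$ comes from appending a zero summand; if $r\ge r_{\textrm{max}}(m,n,p)$ both sides fill $\C^{m\times n\times p}$, while if $r<r_{\textrm{max}}(m,n,p)$ then Proposition \ref{valranks} supplies a tensor of rank exactly $r+1$ lying in $\bF_{r+1}(\cdots)$ but not $\bF_r(\cdots)$, so the inclusion is strict and $\bF_r(\cdots)\ne\C^{m\times n\times p}$; this yields both the equality criterion and the final ``in particular'' clause.
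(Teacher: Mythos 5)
Your proposal is correct, but note that the paper does not actually prove Lemma \ref{Frmap}: it states it as a compilation of known results, citing \cite{Fri12} and Terracini \cite{Ter16}, so you are supplying a proof where the paper supplies only a pointer. Your route is the standard one and all eight steps check out: the padding-with-zeros argument for (1) and (8)'s inclusion, the pigeonhole extraction of a constant-rank subsequence for (2) (which matches the paper's definition of border rank via sequences of \emph{constant} rank $k$), irreducibility of the closure of a polynomial image for (3), Chevalley for (4), and the dimension-of-image equals generic-Jacobian-rank theorem for (5)--(6), where you correctly isolate the one nontrivial input; it is worth saying explicitly that this step uses generic smoothness, i.e.\ characteristic zero, and that Terracini's lemma proper is the complementary identification of the tangent space to $V_r$ at a generic point with the span of the tangent spaces to the Segre variety at the $r$ chosen rank-one points --- this is what the paper exploits computationally in \S\ref{subsec:numprocrgenmnp}, and your minors construction of $U_r$ is exactly what makes that randomized algorithm sound. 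Two small points: in (7) your argument also implicitly establishes that $r_{\textrm{gen}}$ is well defined (a random tensor has rank exactly $r$ almost surely, since the complement of $W_r\cup V_{r-1}$ has full measure), which is worth making explicit since the paper promises this justification here; and in (8) your appeal to Proposition \ref{valranks} is legitimate and is the natural mechanism --- since $r<r_{\textrm{max}}(m,n,p)$ gives $r+1\le r_{\textrm{max}}(m,n,p)$, that proposition yields a tensor of rank exactly $r+1$ separating the two images.
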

 
 We now give a lower bound for  $r_{\textrm{gen}}(m,n,p)$.  Denote by \begin{eqnarray*}\label{defSegmnp}
 \textrm{Seg}(\C^{m\times n\times p})=\{\x\otimes\y\otimes \z, \x\in\C^m,\y\in\C^n,\z\in\C^p\},
 \end{eqnarray*} 
 the variety of all tensors of rank at most $1$.   So Seg$(\C^{m\times n\times p})$, the Segre variety, is a projective variety of dimension $m+n+p-2$, with one singular point $\0$.
 Observe that the polynomial map $\bF_r$ can be viewed as an $r$-secant map $\tilde \bF_r:(\textrm{Seg}(\C^{m\times n\times p}))^r \to \C^{m\times n\times p}$.  Note that the dimension of the variety $(\textrm{Seg}(\C^{m\times n\times p}))^r$ is $r(m+n+p-2)$.  Hence Lemma \ref{Frmap} yields that $r_{\textrm{gen}}(m,n,p)(m+n+p-2)\ge mnp$.  Introducing a new quantity $r_0(m,n,p)$ we obtain a lower bound:
 \begin{eqnarray}\label{rgenlowbd}
 r_0(m,n,p):=\Big\lceil\frac{mnp}{m+n+p-2}\Big\rceil \le r_{\textrm{gen}}(m,n,p).
 \end{eqnarray}
 
 Note that for the case $p=(m-1)(n-1)+1$ Theorem \ref{genmnpbig} yields equality.  For $p >(m-1)(n-1)+1$ one can have strict inequality in \eqref{rgenlowbd}.  For example for $m=n=3$ and $p=5,6$ we have equality in the above inequality, while for $p=7$ we have a strict inequality.
 
 We now state the conjecture on the value of  $r_{\textrm{gen}}(m,n,p)$ in the critical range \cite{Fri12}:
 \begin{conjecture}\label{congenrank3}
 Assume that $m,n,p$ are integers satisfying \eqref{critrange}.  Then equality in \eqref{rgenlowbd} holds unless $(m,n,p)=(3,2k+1,2k+1)$ for $k\in\N$.  In this exceptional case it is known \cite{St83} that 
 $r_{\textrm{gen}}(3,2k+1,2k+1)=r_0(3,2k+1,2k+1)+1$.
 \end{conjecture}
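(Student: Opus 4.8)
The plan is to recast the whole statement through Terracini's lemma (Lemma~\ref{Frmap}) and read it as a (non)defectivity statement for the secant varieties of the Segre variety $\textrm{Seg}(\C^{m\times n\times p})$. By parts (5)--(7) of Lemma~\ref{Frmap}, $r_{\textrm{gen}}(m,n,p)$ is the least $r$ for which $\dim V_r=mnp$, and $\dim V_r$ is the dimension of the span $\sum_{i=1}^r\rT_i$ of the affine tangent spaces $\rT_i$ to the Segre at $r$ generic rank-one points. Since each such tangent space has dimension $m+n+p-2$, the \emph{expected} value is $\min\bigl(r(m+n+p-2),\,mnp\bigr)$, and this first reaches $mnp$ exactly at $r=r_0(m,n,p)$ from \eqref{rgenlowbd}. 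Thus equality in \eqref{rgenlowbd} is equivalent to $V_{r_0}$ being nondefective, and the conjecture asserts that in the critical range \eqref{critrange} the Segre is nondefective except for the family $(3,2k+1,2k+1)$, where the defect is exactly one.

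I would treat the exceptional family first, since there the defect is \emph{forced} and the value $r_0+1$ is actually provable (this is the part attributed to \cite{St83}). Writing $n=2k+1$ one checks $r_0(3,n,n)=3k+1$ and $r_0+1=\ceil{3n/2}$, so it suffices to show $r_{\textrm{gen}}(3,n,n)=\ceil{3n/2}$. For the lower bound I would invoke Strassen's commutator (second-order) equations: represent $\cT\in\C^{3\times n\times n}$ by its slices $A_0,A_1,A_2\in\C^{n\times n}$, pass (generically) to an invertible slice $A_0$, and set $X=A_0^{-1}A_1$, $Y=A_0^{-1}A_2$. Any decomposition of $\cT$ into $r$ rank-one tensors forces $\rank(XY-YX)\le 2(r-n)$, hence $r\ge n+\tfrac12\rank(XY-YX)$. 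For generic slices $X,Y$ are generic matrices, so $XY-YX$ is a generic trace-zero matrix and is therefore invertible, i.e.\ $\rank(XY-YX)=n$; this gives $r_{\textrm{gen}}(3,n,n)\ge n+\tfrac{n}{2}=\tfrac{3n}{2}$, and, $r$ being an integer, $r\ge\ceil{3n/2}$. For odd $n$ this is precisely $r_0+1$.

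The matching upper bound $r_{\textrm{gen}}(3,n,n)\le\ceil{3n/2}$ I would read off Theorem~\ref{charrank3ten}: $r(\cT)$ equals the least dimension of a rank-one-spanned subspace of $\C^{n\times n}$ containing the generic $3$-dimensional slice span $\bV=\textrm{span}(A_0,A_1,A_2)$. Equivalently, via Lemma~\ref{Frmap}(7), at $r=\ceil{3n/2}=r_0+1$ one has $r(m+n+p-2)=\tfrac{(3n+1)(2n+1)}{2}\ge 3n^2$, so the expected dimension is already full and the only content is to certify $\dim V_{r_0+1}=3n^2$, i.e.\ that the single extra point removes the defect. Concretely I would specialize the $r_0+1$ rank-one points to a configuration with a computable tangent span (placing most of them on a coordinate sub-Segre and handling the residual pencils by the normal-form analysis of \S\ref{subsec:rankmn2ten}), and use semicontinuity of $\dim V_r$ to bound the defect of $V_{r_0}$ by one, so that it is killed at $r_0+1$.

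The genuinely hard part is the \emph{non-exceptional} assertion: equality in \eqref{rgenlowbd} throughout \eqref{critrange} for all remaining $(m,n,p)$. This is a uniform nondefectivity statement for Segre varieties and is exactly why the statement is still a conjecture rather than a theorem, being known only in cases (see \cite{AOP09}). The only route I would attempt is the inductive specialization (Horace-type) method: degenerate the $r_0$ generic rank-one points onto a coordinate sub-Segre, split the tangent-space count into a restricted and a residual system, and induct on $(m,n,p)$ with base cases the settled ranges of Theorem~\ref{genmnpbig} and \eqref{2mpcase}. The main obstacle is controlling the span of the specialized tangent spaces finely enough to guarantee that no unexpected linear dependence persists in the limit; it is precisely this uniform control that fails and forces the case analysis, and for which only partial, range-restricted results are presently available.
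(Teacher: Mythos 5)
A point of framing first: the statement you were asked to prove is a \emph{conjecture}, and the paper offers no proof of its main assertion --- only the parameter-count lower bound \eqref{rgenlowbd} (which you re-derive correctly from parts (5)--(7) of Lemma~\ref{Frmap}), citations for the settled cases $(3,n,n)$, $(4,n,n)$, $(n,n,n)$, and, for the exceptional family, a citation of Strassen's theorem that the locus of border rank at most $r_0(3,2k+1,2k+1)$ is a hypersurface in $\C^{3\times(2k+1)\times(2k+1)}$, which together with the strict monotonicity of $d(\n,r)$ recorded in \S\ref{TerraciniLemma} yields $\dim V_{r_0}=3n^2-1$ and hence $\dim V_{r_0+1}=3n^2$ immediately. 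Your proposal treats this honestly and its provable parts are sound: the arithmetic $r_0(3,2k+1,2k+1)=3k+1$ and $r_0+1=\ceil{3n/2}$ is correct, and your lower bound via Strassen's commutator equations is a genuinely different, more explicit route than the paper's citation --- the bound $\rank(XY-YX)\le 2(r-n)$ is the right form of the equations, and since the commutator map is dominant onto $\mathfrak{sl}_n$, the commutator of a generic pair is an invertible trace-zero matrix, giving $r_{\textrm{gen}}(3,n,n)\ge n+\tfrac{n}{2}$ and, by integrality, $r_0+1$ for odd $n$. This buys an equational certificate where the paper has only a reference; indeed $\det(XY-YX)$ (cleared of denominators) vanishes on $V_{r_0}$, so your argument simultaneously recovers the containment $V_{r_0}\subseteq\{\det[X,Y]=0\}$, i.e.\ one half of the paper's hypersurface statement.

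There is, however, one genuinely loose step in your upper bound for the exceptional case. Specialization plus semicontinuity can only bound $\dim V_r$ from \emph{below} (a single configuration with large tangent span shows the generic span is at least as large); it cannot ``bound the defect of $V_{r_0}$ by one'' as you phrase it. What is missing is the inequality $\dim V_{r_0}\ge 3n^2-1$ (equivalently $\dim V_{r_0+1}=3n^2$), and this is exactly the nontrivial half of Strassen's theorem: your commutator equation gives $\dim V_{r_0}\le 3n^2-1$, but without a concrete configuration of $r_0+1$ rank-one points whose tangent spaces span $\C^{3\times n\times n}$ (or an explicit rank-$\ceil{3n/2}$ decomposition of a generic tensor, e.g.\ via the pencil normal forms of \S\ref{subsec:rankmn2ten} applied to a generic $2$-dimensional slice subspace), the claimed equality $r_{\textrm{gen}}=r_0+1$ is not closed. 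The non-exceptional part you correctly leave open, as you must: equality in \eqref{rgenlowbd} throughout \eqref{critrange} is a uniform nondefectivity statement known only in the ranges the paper cites, and your diagnosis of why the Horace-type induction stalls --- loss of control over linear dependencies among the specialized tangent spaces in the limit --- accurately reflects why it remains a conjecture.
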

 
 It was shown in  \cite{St83} that for $(m,n,p)=(3,2k+1,2k+1)$ the tensors of border rank at most $r_0(3,2k+1,2k+1)$ forms a hypersurface in $\C^{(3,2k+1,2k+1)}$.
The conjecture holds for $(3,n,n)$ and  $n\ge 3$ \cite{St83}, for $(4,n,n)$ and $n\ge 3$ \cite{AOP09}, and $(n,n,n)$ for  $n\ge 4$ \cite{Lik85,AOP09}.

 We conclude this section with a short outline of a weaker version of the inequality \eqref{maxgenrin} \cite{BT15} for $3$-tensors.  Set $r=r_{\textrm{gen}}(m,n,p)$.  It is enough to consider the case  where $r<r_{\textrm{max}}(m,n,p)$.   Observe that a finite union of subvarieties of $\C^N$ is a subvariety of some hypersurface $H(p)=\{\x\in \C^N, p(\x)=0\}$.  Identify $\C^{m\times n\times p}$ with $\C^{mnp}$.  Theorem  \ref{Frmap} implies that $V_r=\C^{m\times n\times p}$ and $V_{r-1}$ is a strict subvariety of $\C^{m\times n\times p}$. There exists a polynomial $p\in\C^{mnp}[\x]$ such that  $H(p)\supseteq V_{r-1}\cup W_r$.  Hence all tensors in $\C^{m\times n\times p}\setminus H(p)$ have rank $r$.  Let $\cT\in\C^{m\times n\times p}$ such that $r(\cT)=r_{\textrm{max}}(m,n,p)$.
 So $\cT\in H(p)$.  Recall that there exists a line through $\cT$ that intersects $H(f)$ at a finite number of points.  Choose two points $\cT_1$, $\cT_2$ on this line which do not lie in $H(p)$.  Hence $r(\cT_1)=r(\cT_2)=r$ and $\cT$ is a linear combination of $\cT_1$ and $\cT_2$.  Hence $r(\cT)\le 2r$.

The inequality  \eqref{maxgenrin} can be improved to $r(\cT)\le 2r-2$ if the closure of  tensors of rank $r-1$ is a hypersurface  \cite{BHMT17}.
 
 \subsection{A numerical way to compute     $r_{\textrm{gen}}(m,n,p)$}\label{subsec:numprocrgenmnp}
 View $f(\x,\y,\z)=\x\otimes\y\otimes\z$, as $f:\C^m\times\C^n\times\C^p\to \C^m\otimes\C^n\otimes\C^p$.  
 Then the Jacobian of $f(\x,\y,\z)$ is given by a rectangular block matrix \cite{Fri12}:
 \begin{eqnarray}\label{jacobianRGen}
 &&Df(\x,\y,\z)=[A_{\x}(\y,\z) | A_{\y}(\x,\z) | A_{\z}(\x,\y)],\\
&&A_{\x}(\y,\z)=[e_{1,1}\otimes\y\otimes\z|...|e_{m,1}\otimes\y\otimes\z]\nonumber\\
&&A_{\y}(\x,\z)=[\x\otimes e_{1,2}\otimes\z|...|\x\otimes e_{n,2}\otimes\z]\nonumber\\
&&A_{\z}(\x,\y)=[\x\otimes\y\otimes e_{1,3}|...|\x\otimes\y\otimes\e_{p,3}]\nonumber
\label{JacobianBlocks}
\end{eqnarray}
We assume that we are in the critical range \eqref{critrange}. 
 For a positive integer $r$ we define $\bF_r$ as in Lemma \ref{Frmap}. 
Then the Jacobian $D\bF_r$ is given by
\begin{eqnarray}
D\bF_r(\x_1,\y_1,\z_1,\ldots,\x_r,\y_r,\z_r)=\left[\begin{array}{c}Df(\x_1,\y_1,\z_1)\\ \vdots\\Df(\x_r,\y_r,\z_r)\end{array}\right].
\end{eqnarray}
We fix a positive integer $N$.  We start our procedure with $r=r_0(m,n,p)$ \eqref{rgenlowbd} and $j=1$.  Next we select $r$  triplets $\x_i\in\C^m,\y_i\in\C^n,\z_i\in\C^p$ at random for $i\in[r]$.
 It is enough to assume that components of each vector are drawn from the standard Gaussian distribution.
We compute the rank of $D\bF_r(\x_1,\y_1,\z_1,\ldots,\x_r,\y_r,\z_r)$, denoted as $R$.  If $R=mnp$ then $r_{\textrm{gen}}(m,n,p)=r$, and we stop our procedure. If $R<mnp$  and $j<N$ we set $j=j+1$ and repeat the above procedure.  If $j=N$ and $R<mnp$ we conclude that $r<r_{\textrm{gen}}(m,n,p)$. We set $r=r+1$ and repeat until the procedure stops.

One may assume that the generic rank $r_{\textrm{gen}}(m,n,p)$ is equal to $r_0(m,n,p)$, which is often the case. For $3\le n\le p\le 20$ in the critical range \eqref{critrange}, numerical results show that $r_0(m,n,p)$ is the generic rank, except 
the cases of $(3,2k+1,2k+1)$ with $k\in[9]$.  In these exceptional cases $r_{\textrm{gen}}(3,2k+1,2k+1)=r_0(3,2k+1,2k+1)+1$. 
That is, Conjecture \ref{congenrank3} holds for  $3\le n\le p\le 20$ in the critical range \eqref{critrange}. Such anomalies for generic rank are analogous to those reported earlier for $(3,3,3)$ and $(3,5,5)$.

It was shown in
~\cite{St83} that for all positive integers $k$ 
\begin{equation}r_{\textrm{gen}}(3,2(k+1),2(k+1))=r_0(3,2(k+1),2(k+1))\end{equation}
and  \begin{equation}r_{\textrm{gen}}(3,2k+1,2k+1)=r_0(3,2k+1,2k+1)+1.\end{equation}

 \subsection{Known results on maximal ranks of $3$-tensors}\label{subsec:maxrank3t}
Besides the exact values \eqref{maxrankmn2}, we know the following results.
The table in \cite{AS79} gives all the values of the maximal rank  $r_{\textrm{max}}(3,3,p)$ for $\n=(3,3,p)$ for $p\in [9]\setminus\{5\}$.  It is known that $r_{\textrm{max}}(3,3,5)\in\{6,7\}$.
We give their table:
\begin{equation}\begin{array}{cccccccccc}p&1&2&3&4&5&6&7&8&9\\r_{\textrm{max}}(3,3,p)&3&4&5&6&\{6,7\}&7&8&8&9\end{array}\end{equation}
Recall the table of the generic rank of $3 \times 3 \times p$ tensor.  
\begin{equation}\begin{array}{cccccccccc}p&1&2&3&4&5&6&7&8&9\\ r_{\textrm{gen}}(3,3,p)&3&3&5&5&5&6&7&8&9\end{array}\end{equation}
We now explain briefly this formula.
Recall \eqref{genmnpvbig} and Theorem \ref{genmnpbig}.  
First, $r_{\textrm{gen}}(3,3,1)$ is the maximal possible rank of $3\times 3$ matrix which is $3$.  Now $r_{\textrm{gen}}(3,3,2)=r(2,3,3)$. As $3=(2-1)(3-1)+1$ it follows from Theorem \ref{genmnpbig} that $r_{\textrm{gen}}(3,3,2)=3$.  The equality $r_{\textrm{gen}}(3,3,3)=5$ is well known and is stated in the big table of generic rank in \S\ref{subsec:grqunit}.  Again for $p\ge (3-1)(3-1)+1=5$ we get that $r_{\textrm{gen}}(3,3,p)=p$ for $5\le p\le 9$.  As $r_{\textrm{gen}}(3,3,3)\le r_{\textrm{gen}}(3,3,4)\le r_{\textrm{gen}}(3,3,5)$ it follows that $r_{\textrm{gen}}(3,3,4)=5$.

The papers \cite{AS79, AL80} give the following upper bounds on the rank of $3$-tensors 
\begin{eqnarray}
&&r_{\textrm{max}}(m, n, n)\le \Big\lfloor\frac{(m+1)n}{2}\Big\rfloor \label{Atk1}, \quad 3\le m,n,\\
&&r_{\textrm{max}}(m,n,p) \le m + \Big\lfloor\frac{p}{2}\Big\rfloor n, \quad 3\le m\le n, 3\le p \label{Atk2}\\
&&r_{\textrm{max}}(m,n,mn-u)=mn - \Big\lceil \frac{u}{2}\Big\rceil, \quad 3\le m\le n, u \le \min(4,m,n).
\label{Atk3}
\end{eqnarray}
These results and the known results that $r(m,m,m)=\big\lceil\frac{m^3}{3m-2}\big\rceil$ for $m>3$ \cite{Lik85} yield:
\begin{eqnarray}
&&r_{\textrm{max}}(4,4,4)\le 10, \quad r_{\textrm{gen}}(4,4,4)=7\label{maxgenrank4},\\
&&r_{\textrm{max}}(5,5,5)\le 15, \quad r_{\textrm{gen}}(5,5,5)=10,\label{maxgenrank5}\\
&&r_{\textrm{max}}(6,6,6)\le 21, \quad r_{\textrm{gen}}(6,6,6)=14 \label{maxgenrank6},\\
&&r_{\textrm{max}}(7,7,7)\le 28, \quad r_{\textrm{gen}}(7,7,7)=19 \label{maxgenrank7}.
\end{eqnarray}
\section{Ranks of $d$-tensors for $d\ge 4$}\label{sec:rankdge4}
This section discusses ranks of $d$-tensors for $d\ge 4$, where $\n=(n_1,\ldots,n_d), 2\le n_1\le \cdots\le n_d$.  In subsection \ref{subsec:gencased}  we give a well known characterization of the tensor rank in terms of the dimension of the minimal subspace spanned by rank-one $(d-1)$-mode tensors that contains all frontal sections of the given tensor.  
Next we bring a generalization of the Kruskal uniqueness theorem by Sidiropoulos and Bro \cite{SB00}.  Theorem \ref{genndbig} states the known result, under which conditions
 the generic rank $r_{\textrm{gen}}(\n)$ of a tensor is equal to $n_d$.  Subsection \ref{TerraciniLemma} states the lemma of Terracini.  In subsection \ref{subsec:upbdgr}  we give an upper bound on the generic rank of a tensor using purely combinatorial methods. 
 This upper bound is sharp for the case of $d$ subsystems with $n$ levels each: $n_1=\cdots=n_d=n$, for which perfect codes exist \cite{CGG02}. Subsection \ref{subsec:grqunit} concentrates on the generic rank of $d$ qunits. In particular,
  we provide a table of generic ranks of tensors with $d$ indices running from $1$ to $n$,
  for which  these values are known.  Subsection
\ref{subsec:estrankpoleq} discusses an algorithmic way to find the rank of a tensor using solvability of a system of linear equations with several variables.  
In subsection \ref{subsec:genidgt} we discuss the problem of
generic identifiability of tensors.  Namely, assuming an integer $r$ is less than a generic rank, 
when a generic tensor of rank $r$ has a unique decomposition as a sum of $r$ rank-one tensors?  The results of Chiantini-Ottaviani-Vannieuwenhoven \cite{COV14} show that if  
$\prod_{i=1}^d (n_i-1)\le 15 000$ then  
the generic identifiability property holds except in a number of known cases.
\subsection{General case}\label{subsec:gencased}  We now bring the analog of Theorem \ref{charrank3ten} for a general $d\ge 3$.
\begin{theorem}\label{charrankdten}  Let $\n=(n_1,\ldots,n_d)$, where $d\ge 3$ and $n_i\ge 2$ for $i\in[d]$.  Assume that $\cT=[\cT_{i_1,\ldots,i_d}]\in \C^{\mathbf{n}}$, and let $\cT_1=[\cT_{i_1,\ldots,i_{d-1},1}],\ldots,\cT_{n_d}=[\cT_{i_1,\ldots,i_{d-1},n_d}]\in\C^{(n_1,\ldots,n_{d-1})}$ be the $n_d$ frontal slices of $\cT$.  Then $r(\cT)$ is the minimum dimension of a subspace of $\C^{(n_1,\ldots,n_{d-1})}$ spanned by rank-one tensors that contains the subspace spanned by $T_1,\ldots,T_{n_d}$.
\end{theorem}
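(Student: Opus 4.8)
The plan is to prove the claimed equality by establishing the two inequalities $r(\cT)\le s$ and $s\le r(\cT)$, where $s$ denotes the minimum dimension appearing in the statement, following verbatim the strategy used for Theorem~\ref{charrank3ten} but with the two-index ``matrix'' factor $\C^{m\times n}$ replaced by the $(d-1)$-fold factor $\C^{(n_1,\ldots,n_{d-1})}$ and ``rank one matrix'' replaced by ``rank one $(d-1)$-tensor''. The organizing identity is the expansion of $\cT$ along its last mode,
\[
\cT=\sum_{k=1}^{n_d}\cT_k\otimes\e_k,\qquad \e_k\in\C^{n_d},
\]
together with the elementary observation that a rank one $d$-tensor factors as $\cR\otimes\bc$ with $\cR=\x_1\otimes\cdots\otimes\x_{d-1}$ a rank one $(d-1)$-tensor and $\bc\in\C^{n_d}$, and conversely. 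Thus slicing along the last mode converts the problem about $d$-tensors into one about the $(d-1)$-tensor slices $\cT_1,\ldots,\cT_{n_d}$.

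For $r(\cT)\le s$, I would start from a subspace $\cW\subseteq\C^{(n_1,\ldots,n_{d-1})}$ of dimension $s$ that is spanned by rank one tensors and contains $\mathrm{span}(\cT_1,\ldots,\cT_{n_d})$. Extracting a basis out of a rank one spanning set yields a basis $\cR_1,\ldots,\cR_s$ of $\cW$ consisting of rank one $(d-1)$-tensors. Writing each slice as $\cT_k=\sum_{i=1}^{s}c_{k,i}\cR_i$ and substituting into the expansion above, I would regroup the double sum as $\cT=\sum_{i=1}^{s}\cR_i\otimes\bc_i$ with $\bc_i=\sum_k c_{k,i}\e_k\in\C^{n_d}$. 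Each nonzero summand $\cR_i\otimes\bc_i$ is a rank one $d$-tensor, so this exhibits $\cT$ as a sum of at most $s$ rank one tensors, giving $r(\cT)\le s$.

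For $s\le r(\cT)$, I would run the construction in reverse. Take a rank decomposition $\cT=\sum_{i=1}^{r}\cR_i\otimes\bc_i$ with $r=r(\cT)$, where each $\cR_i$ is a rank one $(d-1)$-tensor and $\bc_i\in\C^{n_d}$, and set $\cW=\mathrm{span}(\cR_1,\ldots,\cR_r)$. By construction $\cW$ is spanned by rank one tensors and $\dim\cW\le r$. Comparing the expansion $\cT=\sum_k\bigl(\sum_i (\bc_i)_k\cR_i\bigr)\otimes\e_k$ with $\cT=\sum_k\cT_k\otimes\e_k$ shows $\cT_k=\sum_i(\bc_i)_k\cR_i\in\cW$ for every $k$, so $\cW\supseteq\mathrm{span}(\cT_1,\ldots,\cT_{n_d})$. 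Hence $\cW$ is an admissible subspace and $s\le\dim\cW\le r(\cT)$.

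The argument is entirely parallel to the $d=3$ case, so there is no genuinely new difficulty; the only points demanding care are bookkeeping ones. First, one must check that a subspace ``spanned by rank one tensors'' always admits a basis consisting of rank one tensors of size exactly its dimension, which is what makes the count in the first inequality tight — this follows by thinning a rank one spanning set down to a basis. Second, one must verify that the regrouping of the double sum and the reading off of slice coordinates are mutually inverse operations; these are linear and routine. The hypothesis $n_i\ge 2$ plays no role beyond guaranteeing that $\cT$ is a genuine $d$-tensor rather than a lower-order object, exactly as $m\ge 2$ did in Theorem~\ref{charrank3ten}.
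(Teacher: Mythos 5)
Your proof is correct and is exactly the standard slicing argument behind this result: the paper itself states Theorem~\ref{charrankdten} without proof, as the direct analog of Theorem~\ref{charrank3ten} (cited from \cite{Fri12}), and your two-inequality argument via the expansion $\cT=\sum_{k=1}^{n_d}\cT_k\otimes\e_k$ is the expected proof. Both directions check out, including the one point needing care — thinning a rank one spanning set of the subspace to a rank one basis so that the count in $r(\cT)\le s$ is tight.
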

In particular, we deduce that $r(\cT)\le N(\n)/n_d$.  Apply the above theorem to a mode $k\in[d]$ to deduce that $r(\cT)\le N(\n)/n_k$.  This proves \eqref{upbundrank}.

The following generalization of Kruskal's theorem for $d$-tensors is  
 due to Sidiropoulos and Bro \cite{SB00}:
\begin{lemma}\label{Kruskaldge3}  Assume that $3\le d$, $2\le n_1\le \cdots \le n_d$
are integers.   
Assume that
\begin{eqnarray}\label{Kruskaldg3r}
\cT=\sum_{i=1}^r \otimes_{j=1}^d \x_{i,j},
\quad \x_{i,j}\in\C^{n_j}\setminus\{\0\}, i\in[r], j\in[d].
\end{eqnarray}
If 
\begin{equation}\label{Krusdinq}
\sum_{j=1}^d r_K(\bx_{1,j},\ldots,\bx_{n_j,j})\ge 2r+(d-1)
\end{equation}
then $r=r(\cT)$ and the decomposition \eqref{Kruskaldg3r} is unique.
\end{lemma}
To establish  the uniqueness of a rank decomposition using the above lemma, it seems that in many cases it is beneficial to view
a $d$-tensor as a $3$-tensor in $\cH_{n_1}\otimes \cH_{n_2}\otimes \cH_{n_3,\ldots,n_d}$ \cite{Fri16}:
\begin{lemma}\label{Kruskaldg3}
Assume that $3\le d$, $2\le n_1\le \cdots \le n_d$
are integers.  Decompose the multiset $\{n_1,\ldots,n_d\}$ to a union of three nonempty disjoint multisets $S_1\cup S_2\cup S_3$, which induce three vectors $\n_1\in\N^{|S_1|}, \n_2\in \N^{|S_2|}, \n_3\in\N^{|S_3|}$.  ($|S_k|$ is the cardinality of $S_k$.)  Then $\C^{\n_1}\otimes \C^{\n_2}\otimes \C^{\n_3}$ is obtained from   $\otimes_{j=1}^d \C^{n_j}$ by permuting factors $\C^{n_1},\ldots,\C^{n_d}$.  Thus each $\cT\in\C^{\n}$ induces $\hat \cT\in\otimes_{k=1}^3 \in\C^{\n_k}$.  Assume that $\cT$ has a decomposition \eqref{Kruskaldg3r}.
Let $\hat\cT=\sum_{i=1}^r \otimes_{k=1}^3 \cT_{i,k}$, 
where each $\cT_{i,k}\in\C^{\n_k}$ is a rank-one tensor induced decomposition of $\cT$ .  View each $\cT_{i,k}$ as a vector in $\C^{N(\n_k)}$.  If
\begin{eqnarray}\label{rTKineq}
r_{\textrm{K}}(\cT_{1,1},\ldots,\cT_{r,1}) + r_{\textrm{K}}(\cT_{1,2},\ldots,\cT_{r,2}) +r_{\textrm{K}}(\cT_{1,3},\ldots,\cT_{r,3})\ge 2r+2
\end{eqnarray}
then $r(\cT)=r(\hat\cT)=r$ and the above decomposition of $\hat \cT$ and the corresponding decomposition of $\cT$  is unique up to a permutation
 of rank-one tensors in the decomposition. 
\end{lemma}

\begin{proposition}\label{Kruskaldg3d} Let the assumptions of Lemma \ref{Kruskaldg3} hold.  Moreover, assume that $N(\n_1)\le N(\n_2)\le N(\n_3)$.  Suppose that $\cT$ has a decomposition \eqref{Kruskaldg3r},
where all $\x_{i,j}$ are in general position.  (The entries of each $\x_{i,j}$ are chosen from independent $N(0,1)$ Gaussian distribution.)   Then  $r=r(\cT)$ and  the decomposition \eqref{Kruskaldg3r} of $\cT$ is unique up to a permutation of summands for the following values of $r$:
\begin{enumerate} 
\item If $r\le N(\n_1)+N(\n_2)-2$ and $N(\n_3)\ge N(\n_1)+N(\n_2)-2$.
\item If $r\le N(\n_1)+N(\n_2)-3$, $N(\n_1)\ge 3$ and $N(\n_3)=N(\n_1)+N(\n_2)-3$.
\item If $r\le \frac{1}{2}(N(\n_1)+N(\n_2)+ N(\n_3)-2)$, $N(\n_1)\ge 4$ and $N(\n_1)+N(\n_2)-4\ge N(\n_3)$.
\end{enumerate}
 \end{proposition}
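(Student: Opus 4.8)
The plan is to reduce everything to the Kruskal-type criterion of Lemma~\ref{Kruskaldg3}: once I know that the induced $3$-mode decomposition $\hat\cT=\sum_{i=1}^r\otimes_{k=1}^3\cT_{i,k}$ satisfies \eqref{rTKineq}, the conclusions $r=r(\cT)=r(\hat\cT)$ and uniqueness up to permutation are immediate. So the entire content is to verify, for a generic choice of the $\x_{i,j}$, that
\[r_{\textrm{K}}(\cT_{1,1},\ldots,\cT_{r,1})+r_{\textrm{K}}(\cT_{1,2},\ldots,\cT_{r,2})+r_{\textrm{K}}(\cT_{1,3},\ldots,\cT_{r,3})\ge 2r+2\]
holds in each of the three cases. (For $r=1$ uniqueness is trivial, so I may assume $r\ge 2$.)

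First I would identify the three Kruskal ranks. For fixed $k$, the vectors $\cT_{i,k}=\otimes_{j\in S_k}\x_{i,j}$, read as elements of $\C^{N(\n_k)}$, are rank one tensors, i.e.\ points of the Segre variety $\mathrm{Seg}(\C^{\n_k})$. Because the multisets $S_1,S_2,S_3$ are disjoint, the vectors feeding the three groups are disjoint, so genericity of the $\x_{i,j}$ makes $\cT_{1,k},\ldots,\cT_{r,k}$ a generic $r$-tuple of points of $\mathrm{Seg}(\C^{\n_k})$, independently for each $k$. The key step is the claim that generic points of the Segre variety are in general linear position, giving
\[r_{\textrm{K}}(\cT_{1,k},\ldots,\cT_{r,k})=\min(r,N(\n_k)),\quad k=1,2,3.\]

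To prove this claim I would use that $\mathrm{Seg}(\C^{\n_k})$ is nondegenerate: the products of standard basis vectors already span $\C^{N(\n_k)}$, so the Segre variety lies in no hyperplane. For an irreducible variety spanning $\C^M$ one can select $M$ of its points that are linearly independent (adding points one at a time, the variety never being contained in the span built so far); hence the vanishing of a given $s\times s$ minor ($s\le M$) of $[\cT_{1,k}\,|\,\cdots\,|\,\cT_{r,k}]$ is a nontrivial polynomial condition on the parameters. There are finitely many subsets of size $\min(r,N(\n_k))$ and finitely many $k$, so outside a proper subvariety every such subset is independent, which is exactly the asserted Kruskal rank. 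Since the parametrization $(\x_{i,j})_{j\in S_k}\mapsto\otimes_{j\in S_k}\x_{i,j}$ dominates $\mathrm{Seg}(\C^{\n_k})$, this open condition is met by generic $\x_{i,j}$. I expect this general-linear-position step to be the main obstacle; the reduction and the remaining arithmetic are routine once it is in place.

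It then remains to verify $\min(r,a)+\min(r,b)+\min(r,c)\ge 2r+2$, writing $a=N(\n_1)\le b=N(\n_2)\le c=N(\n_3)$ and noting $a\ge 2$ since each $n_i\ge 2$. The function $g(r)=\min(r,a)+\min(r,b)+\min(r,c)-2r$ is concave: it equals $r$ on $[0,a]$, the constant $a$ on $[a,b]$, then $a+b-r$ on $[b,c]$, then $a+b+c-2r$ beyond $c$. Solving $g(r)\ge 2$ for $r\ge 2$ gives $r\le a+b-2$ when $c\ge a+b-2$ (case (1)), and $r\le\lfloor(a+b+c-2)/2\rfloor$ when $c<a+b-2$. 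The latter regime splits according to $c$: if $c=a+b-3$ then $c\ge b$ forces $a\ge 3$ and the bound reads $r\le a+b-3$ (case (2)), while if $c\le a+b-4$ then $c\ge b$ forces $a\ge 4$ and the bound reads $r\le\frac{1}{2}(a+b+c-2)$ (case (3)). Thus the hypotheses of the three cases are exactly the conditions under which \eqref{rTKineq} is satisfied, and Lemma~\ref{Kruskaldg3} delivers the conclusion.
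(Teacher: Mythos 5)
Your proof is correct and follows the same skeleton as the paper's: reduce to the Kruskal-type criterion of Lemma \ref{Kruskaldg3}, establish that generically $r_{\textrm{K}}(\cT_{1,k},\ldots,\cT_{r,k})=\min(r,N(\n_k))$ for $k\in[3]$, and then verify \eqref{rTKineq} by casework on where $r$ sits relative to $N(\n_1)\le N(\n_2)\le N(\n_3)$. The one genuine difference is at the step you correctly flag as the crux: the paper simply cites \cite{Fri16} for the generic Kruskal-rank statement, whereas you prove it from scratch via nondegeneracy of the Segre variety --- it spans $\C^{N(\n_k)}$ (the tensor products of standard basis vectors form a basis), so one can choose $\min(r,N(\n_k))$ linearly independent points on it; hence for each subset of columns some maximal minor is a not-identically-vanishing polynomial on the irreducible parameter space (the parametrization surjects onto the cone over the Segre variety, so nonvanishing on the variety pulls back to the parameters), and Gaussian-generic $\x_{i,j}$ avoid the finite union of the resulting proper subvarieties. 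This self-contained argument is sound and makes the proposition independent of the reference, at the cost of a little length; the paper's citation buys brevity. Your organization of the arithmetic via the piecewise-linear concave function $g(r)=\min(r,a)+\min(r,b)+\min(r,c)-2r$ and the condition $g(r)\ge 2$ is equivalent to the paper's regime-by-regime check and reproduces the three cases exactly, including the observations (also made in the paper) that $c=a+b-3$ together with $c\ge b$ forces $a\ge 3$, and $c\le a+b-4$ forces $a\ge 4$; your explicit handling of $r=1$, where \eqref{rTKineq} fails but the conclusion is trivial, is a small point the paper passes over silently.
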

 \begin{proof}  
 The results in \cite{Fri16} yield that $ r_{\textrm{K}}(\cT_{1,k},\ldots,\cT_{r,k})=\min(r,N(\n_k))$ for $k\in [3]$.  Suppose first that $2\le r\le N(\n_1)$.  Then the left hand side of \eqref{rTKineq} is $3r$.  As $r\ge 2$ the inequality \eqref{rTKineq} holds.  Hence $r(\cT)=r$. 
 
Assume now that $N(\n_1)<r\le N(\n_2)$.  Then $ r_{\textrm{K}}(\cT_{1,1},\ldots,\cT_{r,1})=N(\n_1)$  and $ r_{\textrm{K}}(\cT_{1,k},\ldots,\cT_{r,k})=r$ for $k\in \{2,3\}$. Then the inequality \eqref{rTKineq} holds.  Hence $r(\cT)=r$.  Assume that $N(\n_2)\le r\le N(\n_3)$.  Then $ r_{\textrm{K}}(\cT_{1,k},\ldots,\cT_{r,k})=N(\n_k)$ for $k\in [2]$ and $ r_{\textrm{K}}(\cT_{1,3},\ldots,\cT_{r,3})=r$.  Then the inequality \eqref{rTKineq} is equivalent to $r\le N(\n_1)+N(\n_2)-2$.  Therefore \emph{(1)} holds.
 
Suppose that $N(\n_3)=N(\n_1)+N(\n_2)-3$.  As $N(\n_3)\ge N(\n_2)$ we deduce that $N(\n_1)\ge 3$.  Suppose that $N(\n_2)\le r\le N(\n_3)=N(\n_1)+N(\n_2)-3$.  Then the above arguments show that \eqref{rTKineq} holds.  For $r=N(\n_1)+N(\n_2)-2$ we obtain that $ r_{\textrm{K}}(\cT_{1,k},\ldots,\cT_{r,k})=N(\n_k)$ for $k\in[3]$.  For this value of $r$ the inequality \eqref{rTKineq} does not hold.  Thus \emph{(2)} is the best one can obtain for the case $N(\n_3)=N(\n_1)+N(\n_2)-3$.

Assume that $N(\n_1)+N(\n_2)-4\ge N(\n_3)$.  As $N(\n_3)\ge N(\n_2)$ we deduce that $N(\n_1)\ge 4$.  Suppose that $r\le N(\n_3)$.
Then the above arguments show that \eqref{rTKineq} holds. Assume that $r>N(\n_3)$.
Then $ r_{\textrm{K}}(\cT_{1,k},\ldots,\cT_{r,k})=N(\n_k)$ for $k\in [3]$.    Hence \eqref{rTKineq} is equivalent to  $r\le \frac{1}{2}(N(\n_1)+N(\n_2)+ N(\n_3)-2)$.  This establishes \emph{(3)}.
 \end{proof}

It seems that the best way to group the multiset $\{n_1,\ldots,n_d\}$ to $S_1\cup S_2\cup S_3$ is in such a way that $\n_1=(n_1)$, $N(\n_2)\le N(\n_3)$ and $N(\n_3)-N(\n_2)$ is smallest possible.  (Note that $N(\n_1)N(\n_2)N(\n_3)=N(\n)$.)
The following Corollary reveals the advantage of our decomposition of a tensor $\cT$ as a three tensor:
\begin{corollary}\label{Kruskaldg3c}
Assume that $d=2p+1, p\ge 2$, $ n_1=\cdots= n_d=n\ge 2$ are integers.  Suppose that $\cT$ has a decomposition  
$\sum_{i=1}^r\otimes_{j=1}^d \x_{i,j}$,
where all $\x_{i,j}$ are in general position.  (The entries of each $\x_{i,j}$ are chosen from independent $N(0,1)$ Gaussian distribution.)   If $r$ satisfies the following inequalities then $r=r(\cT)$:
\begin{eqnarray*}
r\le 
\begin{cases}
n^p \textrm{ if } n=2,3,\\
n^p-1 +\frac{n}{2} \textrm{ if } n\ge 4,
\end{cases}
\end{eqnarray*}
and the above decomposition of $\cT$ is unique.
 \end{corollary}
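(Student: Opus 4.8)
The plan is to deduce the corollary directly from Proposition~\ref{Kruskaldg3d} by making the balanced three-way split of the index multiset advocated in the remark preceding the statement. First I would partition $\{n_1,\ldots,n_d\}=\{n,\ldots,n\}$ (here $d=2p+1$ copies of $n$) by placing a single mode in $S_1$ and splitting the remaining $2p=d-1$ modes into two groups of $p$ modes each, giving $S_2$ and $S_3$. This yields $N(\n_1)=n$ and $N(\n_2)=N(\n_3)=n^p$, so that $N(\n_3)-N(\n_2)=0$ is as small as possible. Since $p\ge 2$ and $n\ge 2$ we have $n\le n^2\le n^p$, hence the hypothesis $N(\n_1)\le N(\n_2)\le N(\n_3)$ of Proposition~\ref{Kruskaldg3d} holds. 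The general-position assumption on the $\x_{i,j}$ is exactly the one required by the proposition, so it remains only to match the three regimes $n=2$, $n=3$, $n\ge 4$ with the three cases of the proposition and read off the bounds on $r$.

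For $n=2$ I would invoke case~(1): here $N(\n_3)=n^p$ while $N(\n_1)+N(\n_2)-2=n+n^p-2$, and the required inequality $N(\n_3)\ge N(\n_1)+N(\n_2)-2$ reduces to $n\le 2$, which holds (with equality). Case~(1) then gives $r=r(\cT)$ and uniqueness for all $r\le N(\n_1)+N(\n_2)-2=n^p$, matching the claimed bound. For $n=3$ I would use case~(2): the conditions $N(\n_1)=n\ge 3$ and $N(\n_3)=N(\n_1)+N(\n_2)-3$ both reduce to $n=3$ (the latter being $n^p=n+n^p-3$), and the resulting bound $r\le N(\n_1)+N(\n_2)-3=n^p$ is exactly the claim. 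For $n\ge 4$ I would apply case~(3): the conditions $N(\n_1)=n\ge 4$ and $N(\n_1)+N(\n_2)-4\ge N(\n_3)$ both amount to $n\ge 4$, and the bound $r\le\tfrac12\bigl(N(\n_1)+N(\n_2)+N(\n_3)-2\bigr)=\tfrac12(2n^p+n-2)=n^p-1+\tfrac n2$ is precisely the stated inequality.

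In every case the full conclusion of Proposition~\ref{Kruskaldg3d}---both $r=r(\cT)$ and uniqueness of the decomposition up to permutation of summands---transfers verbatim, so nothing further is needed. There is essentially no obstacle here: the entire content is the choice of the balanced split that forces $N(\n_2)=N(\n_3)=n^p$, followed by the routine check that the case-boundary conditions of the proposition align exactly with the ranges $n=2$, $n=3$, $n\ge4$. The only point worth noting explicitly is that the two forms $n^p+\tfrac{n-2}{2}$ produced by case~(3) and $n^p-1+\tfrac n2$ asserted in the corollary are the same expression, so the bound delivered by the proposition is identical to the one claimed.
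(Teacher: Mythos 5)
Your proof is correct and follows exactly the route the paper intends: the remark preceding the corollary advocates the split $\n_1=(n)$, $N(\n_2)=N(\n_3)=n^p$, and the three cases of Proposition~\ref{Kruskaldg3d} then specialize precisely to the regimes $n=2$, $n=3$, $n\ge 4$ with the stated bounds, including the identity $\tfrac12(2n^p+n-2)=n^p-1+\tfrac{n}{2}$. Nothing is missing; the uniqueness conclusion transfers directly from the proposition as you say.
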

 
 We now bring the known analog of Theorem \ref{genmnpbig}:
 \begin{theorem}\label{genndbig} Assume that $3\le d$, $2\le n_1\le \cdots \le n_d$ are integers.  Then
 \begin{eqnarray}
 r_{\textrm{gen}}(\n)=n_d \;\;\text{\sl for }\;\;\prod_{j=1}^{d-1} n_j+d-1-\sum_{j=1}^{d-1}n_j\le n_d\le N(\n).
 \end{eqnarray}
 \end{theorem}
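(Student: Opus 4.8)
The plan is to imitate the proof of the three-mode result, Theorem~\ref{genmnpbig}, with the Segre variety of rank-one matrices replaced by the Segre variety of rank-one $(d-1)$-tensors. Write $M=\prod_{j=1}^{d-1}n_j$ and identify $\C^{(n_1,\ldots,n_{d-1})}$ with $\C^{M}$. Let $V=\textrm{Seg}(\C^{(n_1,\ldots,n_{d-1})})\subset\C^{M}$ be the variety of rank-one tensors $\x_1\otimes\cdots\otimes\x_{d-1}$ together with $\0$; it is an irreducible projective variety, in the sense of \S\ref{subsec:genrank}, of dimension
\[
\delta=\sum_{j=1}^{d-1}(n_j-1)+1=\sum_{j=1}^{d-1}n_j-d+2 .
\]
By Theorem~\ref{charrankdten}, for $\cT\in\C^{\n}$ with frontal slices $T_1,\ldots,T_{n_d}\in\C^{M}$ the rank $r(\cT)$ is the least dimension of a subspace of $\C^{M}$ that is spanned by points of $V$ and contains $U:=\mathrm{span}(T_1,\ldots,T_{n_d})$. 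For generic $\cT$ the slices are $n_d$ independent Gaussian vectors, and since $n_d\le M$ they are linearly independent; thus $\dim U=n_d$ and $U$ is a generic $n_d$-dimensional subspace. In particular $r(\cT)\ge\dim U=n_d$, so the entire content of the theorem is the reverse inequality $r(\cT)\le n_d$, i.e.\ that a generic $n_d$-dimensional subspace of $\C^{M}$ is itself spanned by points of $V$.

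I would first settle the lower endpoint $n_d=M-\delta+1$, which by the displayed value of $\delta$ equals $\big(\prod_{j=1}^{d-1}n_j\big)+d-1-\sum_{j=1}^{d-1}n_j$, the left end of the asserted range. Here $\dim U=M-\dim V+1$, so $U$ is a linear section of $V$ of exactly the critical dimension governed by the basic algebraic-geometry lemma recalled in \S\ref{subsec:genrank}: for $U$ outside a proper subvariety of the Grassmannian, $V\cap U$ consists of $\deg V$ lines, of which any $\min(M,\deg V)$ are linearly independent. Provided
\[
\deg V\ge M-\delta+1 ,
\]
and since $M-\delta+1=n_d\le M$, any $n_d$ of these rank-one tensors inside $U$ are linearly independent; as $\dim U=n_d$ they span $U$. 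Hence $U$ is spanned by points of $V$, giving $r(\cT)=n_d$ for generic $\cT$ at this endpoint.

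The step I expect to be the main obstacle is the degree bound just displayed. The degree of $V$ is the multinomial coefficient
\[
\deg V=\frac{\big(\sum_{j=1}^{d-1}(n_j-1)\big)!}{\prod_{j=1}^{d-1}(n_j-1)!},
\]
so, setting $a_j=n_j-1\ge1$, the required inequality becomes
\[
\frac{(a_1+\cdots+a_{d-1})!}{a_1!\cdots a_{d-1}!}\ \ge\ \prod_{j=1}^{d-1}(a_j+1)-\sum_{j=1}^{d-1}a_j .
\]
I would prove this by induction on $d$. The base case $d=3$ is $\binom{a_1+a_2}{a_1}\ge a_1a_2+1$, precisely the matrix Segre degree estimate already used for Theorem~\ref{genmnpbig}; in the inductive step one writes the multinomial coefficient as $\binom{a_1+\cdots+a_{d-1}}{a_{d-1}}$ times the multinomial coefficient in $a_1,\ldots,a_{d-2}$, applies the inductive hypothesis to the latter, and checks that the binomial factor absorbs the resulting increment of the right-hand side. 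The multinomial coefficient grows so fast that the inequality is comfortable away from the extreme configuration $a_1=\cdots=a_{d-1}=1$ (all $n_j=2$), where it reduces to $(d-1)!\ge 2^{d-1}-(d-1)$ and is checked directly.

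Finally I would extend from the endpoint to the full range $M-\delta+1\le n_d\le M$. For these values a generic $n_d$-dimensional subspace still arises as the span of $n_d$ points of $V$: the span map $\Phi\colon V^{n_d}\dashrightarrow\mathrm{Gr}(n_d,M)$, $(v_1,\ldots,v_{n_d})\mapsto\mathrm{span}(v_1,\ldots,v_{n_d})$, is dominant. This uses two ingredients: generic points of the irreducible nondegenerate variety $V$ are linearly independent as long as their number does not exceed $M$ (an easy induction from the fact that $V$ lies in no proper subspace), so $\Phi$ is defined and has $n_d$-dimensional values; and a dimension count, with the generic fibre of $\Phi$ an open subset of $(V\cap U)^{n_d}$ of dimension $n_d(\delta+n_d-M-1)$, which yields
\[
\dim\overline{\mathrm{im}\,\Phi}=n_d(\delta-1)-n_d(\delta+n_d-M-1)=n_d(M-n_d)=\dim\mathrm{Gr}(n_d,M).
\]
This is the exact analogue of the passage from $p=(m-1)(n-1)+1$ to $(m-1)(n-1)+1<p<mn$ in Theorem~\ref{genmnpbig}. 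The argument is valid precisely while $n_d\le M$; for $n_d\ge M$ the frontal slices already span all of $\C^{M}$, forcing $r_{\textrm{gen}}(\n)=M$, so the formula $r_{\textrm{gen}}(\n)=n_d$ holds on $M-\delta+1\le n_d\le M=\prod_{j=1}^{d-1}n_j$ (the stated upper limit $N(\n)$ is to be read as this product, since $r(\cT)\le N(\n)/n_d=M$ always).
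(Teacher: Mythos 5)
Your proposal is correct and follows essentially the same route as the paper: the paper's own proof of Theorem \ref{genndbig} likewise reduces, via Theorem \ref{charrankdten}, to showing that a generic subspace of $\C^{\n'}$, $\n'=(n_1,\ldots,n_{d-1})$, of the critical dimension $N(\n')-M(\n')+1$ is spanned by the $\deg V$ points in which it meets the Segre variety, and then passes to larger $n_d$ by ``similar arguments,'' exactly as you do. Your explicit multinomial formula for the Segre degree together with the induction establishing $\deg V\ge N(\n')-M(\n')+1$, and your reading of the upper limit as $N(\n')=\prod_{j=1}^{d-1}n_j$ (for $n_d>N(\n')$ the generic rank is $N(\n')$, so the printed bound $N(\n)$ must be a slip), are correct fill-ins of details the paper leaves implicit.
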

 The proof is similar to the proof of Theorem \ref{genmnpbig}.
Denote  by $M(\n):=1-d+\sum_{j=1}^d n_j$ the dimension of the Segre variety in $\C^{\n}$.
 Let $\n'=(n_1,\ldots,n_{d-1})$.  Hence a generic subspace in $\C^{\n'}$ of dimension $N(\n')-M(\n')+1$ intersects the Segre variety in $\C^{\n'}$ in a finite number of points, whose linear span is this subspace.  Use Theorem \ref{charrankdten}  to deduce Theorem \ref{genndbig} for $n_d=N(\n')-M(\n')+1$.  Similar arguments yield the theorem for $n_d > N(\n')-M(\n')+1$.
 
Introducing the generalized version of the lower bound given in \eqref{rgenlowbd} we obtain the following lower bound for the generic rank
 \begin{eqnarray}\label{genrankdbd}
r_0(\n):= \Bigg\lceil\frac{N(\n)}{M(\n)}\Bigg\rceil\le  r_{\textrm{gen}}(\n).
 \end{eqnarray}
 
Assume that $2\le n_1\le \cdots\le n_d$.  Is the above inequality optimal for $n_d\le N(\n')-M(\n')$?  
 
 In \S\ref{subsec:grqunit} we show some affirmative results for the case $n=n_1=\cdots=n_d$, which we call $d$-qunit states, or simply $d$-qunits.   We now discuss in detail Terracini's lemma in the general setting.
\subsection{Terracini's lemma}\label{TerraciniLemma} 
We recall the results in \cite{Fri12}.   For a fixed $r\in\N$ consider the map 
\begin{equation}\label{defFrd}
\begin{aligned}
\bF_r: (\C^{n_1}\times\cdots\times\C^{n_d})^r\to \C^{\n},\\
\bF_r(\x_{1,1},\ldots,\x_{d,1},\ldots,\x_{1,r},\ldots,\x_{d,r})=\sum_{i=1}^r \otimes_{j=1}^d \x_{j,i}.
\end{aligned}
\end{equation}
The set $\bF_r((\C^{n_1}\times\cdots\times\C^{n_d})^r)$ is a constructible set, of dimension $d(\n,r)$, whose closure is an irreducible variety in $\C^{\n}$.  (A constructible set of dimension $k$ in $\C^m$ is a finite union of irreducible varieties whose maximal dimension is $k$ minus a union of a finite number of constructible sets of dimension at most $k-1$ \cite{Har}.)   The dimension $d(\n,r)$ is the rank of the Jacobian matrix of $\bF_r$ at a generic point 
\[(\x_{1,1},\ldots,\x_{d,1},\ldots,\x_{1,r},\ldots,\x_{d,r})\in (\C^{n_1}\times\cdots\times\C^{n_d})^r.\]
The following results are known \cite{Fri12}:
\begin{enumerate}
\item $d(\n,r_{\textrm{gen}}(\n))=N(\n)$.
\item The sequence $d(\n,r)$ is strictly increasing for $r\in[r_{\textrm{gen}}(\n)]$.
\item $d(\n,r)=N(\n)$ for each integer $r>r_{\textrm{gen}}(\n)$.
\end{enumerate}

The rank of the Jacobian $D\bF_r$ at the point $(\x_{1,1},\ldots,\x_{d,1},\ldots,\x_{1,r},\ldots,\x_{d,r})$ is the dimension
of the subspaces spanned by the following  vectors
\begin{eqnarray*}\label{FriTer}
\left(\otimes_{j=1}^{k-1} \x_{j,i}\right)\otimes \e_{l_k,k,i}\left(\otimes_{j=k+1}^d\x_{j,i}\right), \quad l_k\in[n_k], i\in [r].
\end{eqnarray*}
Here $\e_{1,k,i},\ldots,\e_{n_k,k,i}$ is a basis in $\C^{n_k}$ for $k\in [d]$ and $i\in[r]$,
since for each rank-one component $\otimes_{j=1}^d\x_{j,i}$ one can have a
 different basis in each component $\C^{n_j}$. 
\subsection{An upper bound on $r_{\textrm{gen}}(\n)$}\label{subsec:upbdgr}
We now give an upper bound on the generic rank using pure combinatorial methods.
Consider the standard basis in $\C^{\n}$: 
\begin{eqnarray*}\label{standbas}
\otimes_{j=1}^d \e_{l_j,j}, \quad \e_{l_j,j}=(\delta_{l_j1},\ldots, \delta_{l_jn_j})\trans, \ l_j\in[n_j], \ j\in[d].
\end{eqnarray*}
That is, each element in the basis corresponds to a $d$-tuple $(l_1,\ldots,l_d)$, where $l_j\in [n_j]$ for $j\in [d]$.
Denote by $[\n]$ the set of such of such $d$-tuples:
\begin{eqnarray*}\label{defJn}
[\n]:=[n_1]\times\cdots\times[n_d]=\big\{\mathbf{l}=(l_1,\ldots,l_d), \; l_j\in[n_j], j\in[d]\big\}.
\end{eqnarray*}
Recall the Hamming distance on $[\n]$ is given by the formula:
\begin{equation}\textrm{dist}\big((l_1,\ldots,l_d), (m_1,\ldots, m_d)\big)=p,\end{equation}
if $m_j\ne l_j$ for exactly $p$ indices.
Denote by $O(\mathbf{l})$ the set of all points in $[\n]$ whose distance from $\mathbf{l}$ is at most $1$.
Note that the cardinality of $O(\mathbf{l})$, denoted as $|O(\mathbf{l})|$, is  $M(\n)$.

A subset $A\subseteq [\n]$ is called a dominating set of $[\n]$ if $\cup_{\mathbf{l}\in A} O(\mathbf{l})=[\n]$.
The cardinality of each dominating set $A$ satisfies the inequality $|A|M(\n)\ge N(\n)$.
Denote by $\cA(\n)$ the set of dominating sets.
Let $\gamma(\n):=\min\{|A|, A\in \cA(\n)\}$ be the minimum cardinality of the dominating set.

A subset $B$ of $[\n]$ is called $3$-separated set if the Hamming distance between any two elements of $B$  is at least $3$.  Note that if $B$ is $3$-separated then $|B|M(\n)\le N(\n)$.
Denote by $\cB(\n)$ the set of $3$-separated sets of $[\n]$. Let $\kappa(\n):=\max\{|B|, B\in\cB(\n)\}$ be the maximum cardinality of $3$-separable set.
Observe that $\gamma(\n)\ge \kappa(\n)$.  A maximum $3$-separated set $B$ is called a $1$-perfect code if $\gamma(\n)=\kappa(\n)$.   That is,  the Hamming distance between every two elements of $B$ is at least $3$,  and for each $\bp\in[\n]$ there exists $\bq\in B$ such that dist$(\bp,\bq)\le 1$.
The following result is due to \cite{CGG02}:
\begin{lemma}\label{combinthm}  Assume that $3\le d$ and $2\le n_1\le\cdots\le n_d$ be integers.  Then the following assertions hold:
\begin{enumerate} 
\item The inequality $r_{\mathrm{gen}}(\n)\le \gamma(\n)$ holds.
\item For each $r\in[k(\n)]$ the closure of $\bF_r\left(\C^{n_1}\times \cdots\times \C^{n_d}\right)$ is an irreducible variety of dimension at most $rM(\n)$.
In particular, if the dimension of $\bF_r\left(\C^{n_1}\times \cdots\times \C^{n_d}\right)$ is $rM(\n)$ then most of tensors of rank $r$ have exactly $\deg F_r$ different rank decomposition.
\end{enumerate}
\end{lemma}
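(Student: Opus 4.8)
The plan is to deduce both parts from Terracini's lemma (\S\ref{TerraciniLemma}) combined with lower semicontinuity of the Jacobian rank, using the combinatorics of the balls $O(\mathbf{l})$, dominating sets, and $3$-separated sets only to exhibit one convenient point at which the Jacobian rank can be read off by inspection. The computation underlying everything is the tangent space to the Segre cone at a rank-one point: at $\otimes_{j=1}^d \x_{j,i}$ it equals $\sum_{k=1}^d \x_{1,i}\otimes\cdots\otimes\C^{n_k}\otimes\cdots\otimes\x_{d,i}$, which is exactly the span of the tangent vectors recorded in \S\ref{TerraciniLemma}. First I would specialize $\x_{j,i}=\e_{l_j^{(i)},j}$ for a multi-index $\mathbf{l}^{(i)}=(l_1^{(i)},\ldots,l_d^{(i)})\in[\n]$; then this tangent space collapses to the coordinate subspace spanned by the standard basis tensors indexed by $O(\mathbf{l}^{(i)})$, and hence has dimension $|O(\mathbf{l}^{(i)})|=M(\n)$.

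For (1), I would take a minimum dominating set $A=\{\mathbf{l}^{(1)},\ldots,\mathbf{l}^{(\gamma(\n))}\}$ and specialize the $\gamma(\n)$ summands to the corresponding standard-basis rank-one tensors. By the tangent-space computation the span of all tangent vectors is the coordinate subspace indexed by $\bigcup_i O(\mathbf{l}^{(i)})=[\n]$, i.e.\ all of $\C^{\n}$, so the Jacobian of $\bF_{\gamma(\n)}$ has rank $N(\n)$ at this special point. Since the generic Jacobian rank is at least the rank at any point, $d(\n,\gamma(\n))=N(\n)$, and the characterization of $r_{\textrm{gen}}$ in \S\ref{TerraciniLemma} yields $r_{\textrm{gen}}(\n)\le\gamma(\n)$.

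For (2), irreducibility is automatic, since the closure of $\bF_r$ is the closure of the image of an irreducible variety under a morphism, and the bound $\dim\le rM(\n)$ holds because the generic tangent space to the image is a sum of $r$ tangent spaces to the Segre cone, each of dimension $M(\n)$. The role of $r\le\kappa(\n)$ is to force the matching lower bound: choosing a $3$-separated set $B=\{\mathbf{l}^{(1)},\ldots,\mathbf{l}^{(r)}\}$ of size $r$ and specializing as above, the pairwise Hamming distances $\ge 3$ make the balls $O(\mathbf{l}^{(i)})$ pairwise disjoint, so the $r$ tangent spaces are spanned by disjoint sets of standard basis tensors and are therefore independent, giving Jacobian rank $\sum_i|O(\mathbf{l}^{(i)})|=rM(\n)$ at this point; semicontinuity then upgrades this to the generic value $\dim=rM(\n)$. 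Once this holds, the generic fibre of $\bF_r$ has dimension $r\sum_j n_j-rM(\n)=r(d-1)$, exactly the dimension of the scaling redundancy $(\lambda_{1},\ldots,\lambda_d)$ with $\prod_j\lambda_j=1$ acting independently on each summand; hence after quotienting by this torus action and by the $S_r$-permutation of summands the generic fibre is finite, its cardinality is constant on a dense open set, and equals $\deg F_r$, so a generic rank-$r$ tensor has exactly $\deg F_r$ essentially distinct decompositions.

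The hard part will be the two places where semicontinuity is invoked: one must check that the chosen standard-basis points are smooth points of the parametrization so that Terracini's description of the Jacobian image is valid there, and that the tangent space genuinely collapses to the coordinate subspace indexed by $O(\mathbf{l}^{(i)})$ rather than something larger. For the ``in particular'' clause the delicate point is passing from finiteness of the fibre modulo the torus-and-permutation redundancy to constancy of the geometric fibre count on a Zariski-dense open set, i.e.\ justifying that $\deg F_r$ is well defined; this is the only step that is more than a direct dimension count.
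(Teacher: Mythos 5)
Your proposal is correct, and it is essentially the intended argument: the paper itself gives no proof of this lemma, only the citation to \cite{CGG02}, but the surrounding text (Terracini's description of the Jacobian rank in \S\ref{TerraciniLemma} and the Hamming-ball combinatorics of $O(\mathbf{l})$, dominating sets, and $3$-separated sets) is exactly the scaffolding your specialization argument uses, and your computation that at $\otimes_{j}\e_{l_j^{(i)},j}$ the tangent space collapses to the coordinate subspace indexed by $O(\mathbf{l}^{(i)})$ is the heart of the standard proof. One remark: the first ``hard part'' you flag is not actually delicate — you never need the specialized points to be smooth or generic, since the rank of the Jacobian of the polynomial map $\bF_r$ at \emph{any} point lower-bounds its generic rank (the locus where some minor is nonzero is Zariski open), and the generic Jacobian rank equals $\dim V_r$ by Terracini; combined with the trivial upper bounds $\dim V_r\le \min(N(\n), rM(\n))$ this closes both parts, leaving only the standard fact about generic fibre cardinality of a dominant generically finite map (after quotienting by the $r(d-1)$-dimensional scaling torus and $S_r$) to justify the $\deg F_r$ clause, as you note.
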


The above inequalities for dominating and $3$-separated sets yield that $|B|=\frac{N(\n)}{M(\n)}$.  In particular $\frac{N(\n)}{M(\n)}$ is an integer.  Furthermore, the inequality \eqref{genrankdbd} and Lemma \ref{combinthm} yield that $r_0(\n)=r_{\textrm{gen}}(\n)$.

It is known \cite{Thom} that $1$-perfect code exists if 
\begin{eqnarray*} 
n_1=\cdots =n_d=n=q^l, \;d=\frac{n^{a+1}-1}{n-1}, \quad q \textrm{ is prime},\quad l,a\in\N, a\ge 2.
\end{eqnarray*}
Use Lemma \ref{combinthm} to deduce that in this case $r_{\textrm{gen}}(\n)=n^{d-a-1}$.

Let $G=(V,E)$ be a simple graph on the set of vertices $V$ and edges $E$.
Recall that $A\subseteq V$ is a dominating set if each vertex $v$ not in $A$ is adjacent to some vertex in $A$.  Then $\gamma(G)$ is called the domination number of $G$, if $\gamma(G)$ is the minimum cardinality of a dominating set in $G$.  
We will show below that $\gamma(\n)=\gamma(G(\n))$, where $G(\n)=([\n],E(\n))$ is the induced graph on $[\n]$ by the Hamming distance.

The domination number of $G$ is a solution to the following minimum problem in $|V|$ variables $x_v,v\in V$ whose values are in $\{0,1\}$.  For each $\x=(x_v)_{v\in V}\in\{0,1\}^V$ we denote by supp $\x$ the subset $\{v\in V, x_v=1\}$.  Then supp $\x$ is a dominating set in $V$ if and only if the following inequalities hold
\begin{eqnarray}\label{domsetin}
x_v+\sum_{u, (u,v)\in E} x_u\ge 1 \textrm{ for all } v\in V.
\end{eqnarray}
Hence $\gamma(G)$ is the minimum of $\sum_{v\in V} x_v$ on $\x\in\{0,1\}^V$ subject to \eqref{domsetin}.  It is known that  computing $\gamma(G)$ for general graphs is an NP-complete problem \cite{KV12}.

A greedy algorithm to find an upper bound for $\gamma(G)$ is as follows:  Let $G_1=G$.  Suppose that at the stage $k\in[V]$ we have the graph $G_k=(V_k,E_k)$, where $V_k$ is a nonempty subset of $V$, and  $G_k$ is the induced subgraph of $G$ by the set $V_k$.  We choose a vertex $v_k\in V_k$ of a maximum degree in $G_k$.  Let  $O_k\subset V_k$ be the neighbors of $v_k$ in $G_k$.  Then $V_{k+1}=V_k\setminus\{\{v_k\}\cup O_k\}$.
If $V_{k+1}=\emptyset$ then $A=\{v_1,\ldots,v_k\}$ is the dominating set.  Otherwise set \red{$k$ as $k+1$}.  

Recall the standard linear programming (LP) relaxation of the above minimal problem on $\{0,1\}^V$ \cite{CCPS}.  Namely, we replace the condition $x_v\in \{0,1\}, v\in V$ by the condition $0\le x_v\le 1, v\in V$.  Thus we consider the minimum $\sum_{v\in V}x_v$ satisfying the inequalities \eqref{domsetin} for $\x\in [0,1]^V$.  Denote this minimum by $\beta(G)$.  

The following result is well known \cite{Pet96}:  Let $G=(V,E)$ be a simple graph with maximal degree $\Delta(G)$.  Denote by $A(G)\subseteq V$, a dominating  set obtained by the above greedy algorithm.  Then
\begin{eqnarray}
\beta(G)\le \gamma(G)\le |A(G)|\le O(\log \Delta(G))\beta(G).
\end{eqnarray}
 
Recall that $G$ is called regular, if the degree of each vertex is $\Delta(G)$.
 One can show that for a regular graph $G$ one arrives at the inequality: 
 \begin{equation}\beta(G)\le \frac{\#V}{\Delta(G)+1}.\end{equation}
 Indeed, define $x_v=\frac{1}{\Delta(G)+1}$ for each $v\in V$.  Then the conditions \eqref{domsetin} are satisfied.  
As the following equality is true, 
  $\sum_{v\in V}x_v= \frac{\#V}{\Delta(G)+1}$,
  the above inequality holds.
Thus we showed that for regular graph $G$ we have the following inequalities:
 \begin{eqnarray}\label{domsetreg}
  \frac{\#V}{\Delta(G)+1}\le \gamma(G)\le |A(G)|\le O(\log \Delta(G))\frac{\#V}{\Delta(G)+1}.
 \end{eqnarray}
 (Recall the notation $O(m)$ for some function $f:\N\to [0,\infty)$.  Namely, there exists a universal $K>0$ so that $f(m)\le Km$ for all $m\in\N$.)
 
We now apply these results to estimate from above the generic rank.  Let $G(\n)=([\n], E(\n))$.  Two vertices $\mathbf{l},\mathbf{m}\in[\n]$ are adjacent if dist$(\mathbf{l},\mathbf{m})=1$.
Observe that $G(\n)$ is a regular graph with $\Delta(G(\n))=M(\n)-1$.
 It is easy to show that $A\subseteq [\n]$ is a dominating set if and only if $A$ is a dominating set in $G(\n)$.  That is, $\gamma(\n)=\gamma(G(\n))$.  Thus \eqref{domsetreg} applies to  $G(\n)$.  We do not know how good is the upper bound on $\gamma(n)$ given in \eqref{domsetreg} in the general case.  Apply Lemma \eqref{combinthm}  to deduce sandwich bound:
 \begin{eqnarray}\label{lowupgrankbd}
 r_0(\n)\le r_{\textrm{gen}}(\n)\le O\Big(\log \Big(\sum_{i=1}^d (n_i-1)\Big)\Big)r_0(\n).
 \end{eqnarray}

 Let us consider the following simple examples for $d=3$ and $n_1=n_2=n_3=3$.
 Choose
 \[A=\{(1,1,1),(2,2,2),(3,3,3),(1,1,2),(2,2,3),(3,3,1)\}\]
 end 
 \[B=\{(1,1,1),(2,2,2),(3,3,3).\]
 Then $A$ is a dominating set and $B$ is $3$-separated set.
  Recall that $r_{\textrm{gen}}(3,3,3)=5< |A|=6$.
 It is straightforward to show that $B$ is a maximal $3$-separated set.  So $\kappa(3,3,3)=3$.

 \subsection{The generic rank of $d$-qunits}\label{subsec:grqunit}
 Let $n^{\times d}=(n,\ldots,n)\in\N^d$.  Then $r_{\textrm{gen}}(n^{\times d})$ is the generic rank of $d$-qunits.  Inequality \eqref{genrankdbd} yields
\begin{eqnarray}\label{genqunit}
r_{\textrm{gen}}(n^{\times d})\ge \lceil \theta(n^{\times d})\rceil, \quad \text{where} \quad \theta(n^{\times d})=\frac{n^d}{d(n-1)+1}.
\end{eqnarray}
In previous subsection we showed that equality holds if $[n]^{d}$ has $1$-perfect code \cite{CGG02}.

It was shown in \cite{CGG11} that equality holds in \eqref{genqunit} for $n=2$ and any $d\ge 2$.  That is, the generic rank of $d$-qubits is $\lceil 2^d/(d+1)\rceil$.

We now recall some results in \cite{AOP09} for $r_{\textrm{gen}}(n^{\times d})$.  First, assume that $\theta(n^{\times d})$ is integer.  (Thus $d=\frac{n^{a+1}-1}{n-1}$ for $a\in\N$.)
Then $r_{\textrm{gen}}(n^{\times d})=\theta(n^{\times d})$.  Second, assume that $\theta(n^{\times d})$ is not an integer.  Let
$\lfloor \theta(n^{\times d})\rfloor\equiv_{mod \;n}\delta(n^{\times d})\in \{0,\ldots,n-1\}$.  Then
\begin{eqnarray}
 r_{\textrm{gen}}(n^{\times d})=
 \begin{cases}
 \lceil \frac{n^d}{d(n-1)+1}\rceil$ if $\delta(n^{\times d})=n-1,\\
 r_{\textrm{gen}}(n^{\times d})\le \lceil \frac{n^d}{d(n-1)+1}\rceil +n-1-\delta(n^{\times d}).
 \end{cases}
 \end{eqnarray}
 We now provide a few examples of the above equalities and inequalities.  
 According to  \cite{CGG11} for $n=2$ and the known table of the values of $r_{\textrm{gen}}(n^{\times d})$, which is given later, in all below cases the upper bound on  $ r_{\textrm{gen}}(n^{\times d})$ is a strict inequality.
\begin{eqnarray}
&&\theta(2^{\times 4})=16/5,\; \lfloor \gamma(2^{\times 4})\rfloor=3, \;\delta(2^{\times 4})=1, \;r_{\textrm{gen}}(2^{\times 4})=4,\\
&&\theta(2^{\times 5})=32/6, \;\lfloor \gamma(2^{\times 5})\rfloor=5, \;\delta(2^{\times 5})=1,\;r_{\textrm{gen}}(2^{\times 5})=6,\\
&&\theta(2^{\times 8})=256/9, \;\lfloor \theta(2^{\times 5})\rfloor=28,\; \delta(2^{\times 5})=0, \;r_{\textrm{gen}}(2^{\times 8})=29<30,\\
&&\theta(3^{\times 3})=27/7,\; \lfloor \theta(3^{\times 3})\rfloor=3, \;\delta(3^{\times 3})=0,\; r_{\textrm{gen}}(3^{\times 3})=5<6, \\
&&\theta(3^{\times 5})=243/11, \;\lfloor \theta(3^{\times 5})\rfloor=22,\; \delta(3^{\times 5})=1,\; r_{\textrm{gen}}(3^{\times 5})=23<24,\\
&&\theta(3^{\times 6})=729/13,\; \lfloor \theta(3^{\times 6})\rfloor=56,\; \delta(3^{\times 3})=2,\; r_{\textrm{gen}}(3^{\times 6})=57.
 \end{eqnarray}
 
 Known values of generic rank, $r_{\textrm{gen}}(n^{\times d})$, 
  for the system of $d$ qunits are listed in Table 1.
\begin{table}[!ht]
\label{tab1}
{\scriptsize
\begin{equation}
\begin{BMAT}{r|rrrrrrrrr}{c|ccccccccccccccc}
d\setminus
 n &         2 &         3 &         4 &              5 &              6 &         7 &                 8 &        9 &                 10\\
 2 &    ^{[a,b]}2 &         3 &         4 &              5 &              6 &         7 &                 8 &        9 &                 10\\
 3 &   ^{[b,c,d]}2 &    ^{[c,d]}5 &    ^{[c,d]}7 &       ^{[a,c,d]}10 &        ^{[c,d]}14 &   ^{[c,d]}19 &          ^{[a,c,d]}24 &   ^{[c]}30 &   ^{[c]}36\\
 4 &   ^{[a,b,d]}4 &     ^{[d]}9 &   ^{[a,d]}20 &         ^{[d]}37 &         ^{[d]}62 &    ^{[d]}97 &         {\bf 142} & {\bf 199}&          {\bf 271}\\
 5 &    ^{[a,b]}6 &        {\bf 23} &  ^{[a,b]}64 &      {\bf 149} &        ^{[a]}300 & {\bf 543} &                 . &        . &                  .\\
 6 &   ^{[a,b]}10 &    ^{[a]}57 &   ^{[a]}216 &      ^{[a,b]} 625 &       ^{[a]}1506 &         . &                 . &        . &                  .\\
 7 &    ^{[a,b]}16 &       {\bf 146} & {\bf 745} &              . &              ^{[a]} 6^5&         . &         ^{[a]}41944 &        . & ^{[a]} 156250\\
 8 &    ^{[b]}29 &       {\bf 386} &         . &              . &              . &        ^{[a,b]} 7^{6}&                 . &        . &                  .\\
 9 &   ^{[a,b]}52 &      {\bf 1036} &         . &              . &              . &         . &                ^{[a,b]} 8^{7} &        . &                  .\\
10 &   ^{[a,b]}94 &         . &         . &              . &    ^{[1]}1185612 &         . &                 . &       ^{[a,b]} 9^8&      ^{[a]}109890110\\
11 &   ^{[b]}171 &         . &         . &    ^{[a]}1085070 &              . &         . &                 . &        . &                  .\\
12 &  ^{[a,b]}316 & ^{[a]}21258 &         . &              . &              . &         . &                 . &        . &                 ^{[a]} 10^9\\
13 &  ^{[a,b]}586 &         ^{[a,b]} 3^{10}&         . &   ^{[a]}23032135 &              . &         . &                 . &        . &                  .\\
14 &  ^{[b]}1093 &         . &         . &              . & ^{[a]}1103720622 &         . &                 . &        . &                  .\\
15 &  ^{[b]}2048 &         . &         . &              . &              . &         . &                 . &        . &                  .\\
16 & ^{[a,b]}3856 &         . &         . & ^{[a]}2347506010 &     ^{[a]} 2^{16}3^{12}&         . & ^{[a]}2490928997440 &        . & ^{[a]}68965517241380
\addpath{(10,15,1)dddllldlldldlddlddddddl}
\end{BMAT}
\label{table_of_granks}
\end{equation}
}
\caption{Values of the generic rank calculated for several $n$-level systems.
Polygonal chain defines two areas in the array. Numbers in the upper-left part
corresponding to the right-hand side of the formula \eqref{genqunit} are confirmed
numerically.
Numbers decorated with $^{[a,b,c,d]}$ on the left represent known results according to the reference in \cite{AOP09, CGG11, BCS97, CBLC09}, respectively,
while numbers in bold are results obtained in this work by numerical calculations.}
\label{T1}
\end{table}

Table 2 provides a comparison between the generic rank $r_{\textrm{gen}}$, the maximal ranks $r_{\textrm{max}}$
and the maximal number $R_U$ of terms 
in the shortest representation of a pure state of $d$ 
subsystems with $n$ levels each
in an orthogonal product basis in $\mathcal {H}_n^{\otimes d}$.
The upper bound $R_U = n^d - dn(n-1)/2$   
follows directly from the work of Carteret, Higuchi and Sudbery \cite{CHS00}.
They demonstrated that out of $n^d$ entries of any tensor $\cT$
one can set to zero $n(n-1)/2$ entries
by performing a single unitary rotation which affects a single index.
As there are $d$ independent indices, for which such a transformation can be applied,
 the total number of entries which can be set to zero is 
 $dn(n-1)/2$. This  explains the bound $R_U$ stated above.
\begin{table}[h]
\caption{Generic ranks $r_{\textrm{gen}}$, maximal ranks $r_{\textrm{max}}$
and the maximal number $R_U$ of terms 
in the shortest representation of a pure state of $d$ 
subsystems with $n$ levels each
in an orthogonal product basis in $\mathcal {H}_n^{\otimes d}$.
Numbers in {\bf bold} denote exact results,
other numbers denote upper bounds obtained in \cite{CBLC09} and in \cite{Fri12,BT15}.}
 \smallskip
\hskip -0.5cm
{\renewcommand{\arraystretch}{1.21}
\begin{tabular}[h]
{c|ccc|ccc|ccc|ccc} %
\hline \hline
 \ \ \ \ \   & ~ & $n=2$ & ~ & ~ & $3$ & ~ & ~  & $4$ & ~ 
& ~ & $5$ &  
\\
\hline
    $d$   &$r_{\textrm{gen}}$  & $r_{\textrm{max}}$ & $R_U$ &$r_{\textrm{gen}}$  & $r_{\textrm{max}}$ & $R_U$ 
            &$r_{\textrm{gen}}$  & $r_{\textrm{max}}$ & $R_U$ &$r_{\textrm{gen}}$  & $r_{\textrm{max}}$ & $R_U$ \\
  $2$      & ${\bf 2}$  & ${\bf 2}$ & ${\bf 2}$ 
           & ${\bf 3}$  & ${\bf 3}$ & ${\bf 3}$ 
           & ${\bf 4}$  & ${\bf 4}$ & ${\bf 4}$ 
           & ${\bf 5}$  & ${\bf 5}$ & ${\bf 5}$  \\
  $3$      & ${\bf 2}$  & ${\bf 3}$ & ${\bf 5}$ 
           & ${\bf 5}$  & ${\bf 5}$ & ${\bf 18}$ 
           & ${\bf 7}$  & ${ 13}$ & ${\bf 46}$ 
           & ${\bf 10}$  & ${20}$ & ${\bf 95}$  \\
 $4$       & ${\bf 4}$  & ${\bf 4}$ & ${\bf 12}$ \  
           & ${\bf 9}$  & $18$ & ${\bf 69}$ 
           & ${\bf 20}$  & $40$ & ${\bf 232}$  \ \
           & ${\bf 37}$  & $74$ & \! \!\! ${\bf 585}$  \  \  \\
\hline \hline
\end{tabular}
}
\label{tab:rank}
\end{table}
\subsection{Estimating and computing the rank of a tensor using polynomial equations}\label{subsec:estrankpoleq}
It is clear to many researchers in the field that the rank of a tensor over a given given field $\F$ is equivalent to solvability of corresponding system of polynomial equations over $\F$.  See for example \cite{Cho10}, where the author deals with ranks of tensors over the real numbers.
In this subsection we report briefly on the approach outlined in \cite{AF20} to estimate and to compute the rank of a tensor using  polynomial equations. We start with the following lemma:
\begin{lemma}\label{estrankpoleq} Assume that $d\ge 3$ and $\cT\in\C^{\n}\setminus\{0\}$ is given.  Fix $r\in\N$ and consider the equality \eqref{Kruskaldg3r} as a system of polynomial equations in the entries of unknown vectors $\x_{i,j}$ for $i\in[r],j\in[d]$:
\begin{eqnarray}\label{estrankpoleq1}
\begin{aligned}
f_{k_1,\ldots,k_d}(\x_{1,1},\ldots,\x_{d,1},\ldots,\x_{1,r},\ldots,\x_{d,r})\equiv(\sum_{i=1}^r \otimes_{j=1}^d \x_{i,j}-\cT)_{k_1,\ldots,k_d}=0,\\ k_j\in [n_j], j\in[d].
\end{aligned}
\end{eqnarray}
Then $r( \cT)>r$ if and only if the following equivalent conditions hold:
\begin{enumerate}[(a)]
\item The above system of polynomial equations is not solvable.  
\item There  exist  $N(\n)$ polynomials
$g_{k_1,\ldots,k_d}(\x_{1,1},\ldots,\x_{d,1},\ldots,\x_{1,r},\ldots,\x_{d,r})$ for $ k_j\in [n_j]$, $j\in[d]$ of degree at most 
\begin{equation*}
L(r,\n)=d^{-1 +\min(N(n),r\sum_{i=1}^d n_i)}.
\end{equation*}
such that the following identity holds:
\begin{equation}\label{estrankpoleq2}
\sum_{k_1=1}^{n_1}\cdots\sum_{k_d=1}^{n_d} g_{k_1,\ldots,k_d}f_{k_1,\ldots,k_d}=1.
\end{equation}
\end{enumerate}
Furthermore, $r( \cT)\le r$ if and only if the following equivalent conditons hold:
\begin{enumerate}[(i)]
\item The system of polynomial equations \eqref{estrankpoleq1} is solvable. 
\item
There  are no $N(\n)$ polynomials
$g_{k_1,\ldots,k_d}(\x_{1,1},\ldots,\x_{d,1},\ldots,\x_{1,r},\ldots,\x_{d,r})$ for 

\noindent
$k_j\in [n_j]$, $j\in[d]$ of degree at most $L(r,\n)$
such that the identity \eqref{estrankpoleq2} holds.
\end{enumerate}
\end{lemma}
\begin{proof}
Clearly, $r<r( \cT)$ if and only if the system  \eqref{estrankpoleq1} is not solvable.  Hilbert's Nullstellensatz states that non-solvability of \eqref{estrankpoleq1} is equivalent to the existence of polynomials $g_{k_1,\ldots,k_d}(\x_{1,1},\ldots,\x_{d,1},\ldots,\x_{1,r},\ldots,\x_{d,r})$ that satisfy the identity \eqref{estrankpoleq2}.
  The claim that the degree of each $g_{k_1,\ldots,k_d}$ is at most $L(r,\n)$ is due to K\'ollar \cite{Kol}.

Suppose that $r(\cT)\le r$.  Then the system \eqref{estrankpoleq1} is solvable.
(Some of $\x_{i,j}$ can be zero.)  The identity \eqref{estrankpoleq2} cannot hold.  Part (b) yields that there
are no polynomials $g_{k_1,\ldots,k_d}(\x_{1,1},\ldots,\x_{d,1},\ldots,\x_{1,r},\ldots,\x_{d,r})$ for $ k_j\in [n_j]$, $j\in[d]$ of degree at most $M(r,\n)$
that satisfy \eqref{estrankpoleq2}.
\end{proof}

We now explain briefly how to use this lemma effectively of estimate or to compute the rank of $\cT$.  For $i\in[d]$ let $T_i$ be the $n_i\times (N(\n)/n_i)$ matrix obtained from $\cT$ by viewing $[n_1]\times\cdots\times [n_d]$ as $[n_i]\times ([n_1]\times\cdots [n_{i-1}]\times [n_{i+1}]\cdots\times [n_d])$. (We partition $\cT$ to a bipartite state.)
Let $r_i(\cT)$ be the matrix rank $r(T_i)$, which is easy to compute. (For $d=3$ those are ranks $r_A(\cT),r_B(\cT),r_C(\cT)$ introduced in \S\ref{subsec:basres3ten}.) As in \S\ref{subsec:basres3ten} we have $r(\cT)\ge rm=\max(r_1(\cT),\ldots,r_d(\cT))$.  Fix $r\ge rm$.  Write each $g_{k_1,\ldots,k_d}$ as a polynomial of degree $L(r,\n)$ with unknown monomial coefficients.  Then view the identity \eqref{estrankpoleq2}
as a huge system of linear equations in the unknown coefficients of monomials of $g_{k_1,\ldots,k_d}$ for $ k_j\in [n_j]$, $j\in[d]$.  If this system of linear equations is solvable we deduce that $r<r(\cT)$.  If this system is not solvable then $r\ge r(\cT)$.  To find $r(\cT)$ we start an algorithm with the above procedure for $r=rm$.  If this system of linear equations corresponding to \eqref{estrankpoleq2} is not solvable then $r(\cT)=r$.  Otherwise set $r=r+1$ and repeat the above procedure.  The main drawback of this algorithm for finding $r( \cT)$ is an exponential number of variables and equations in $d$.
\subsection{Generic identifiability of tensors}\label{subsec:genidgt}
\begin{definition}\label{identifdef}
Assume that $d\ge 3$.  A tensor $\cT\in \C^{\bn}$ is identifiable if its rank decomposition as a sum of rank-one decomposition
is unique up to the order of summation. 
\end{definition}
Note that for $d=2$ any matrix $T$ with $r(T)>1$ is not identifiable.
Lemma \ref{Kruskaldg3} gives sufficient conditions on
the identifiability of a tensor.
The obvious question arises, what happens if the inequality \eqref{rTKineq} does not hold.  A simple example is the case of $\cT\in\otimes^3\C^2$ discussed after Lemma \ref{ranks222ten}.  Namely a generic decomposition of $\cT$ as a sum of two rank-one tensors is a unique rank decomposition  of $\cT$, since it satisfies the condition \eqref{rTKineq}.  If $r(\cT)=3$ then any rank decomposition of $\cT$ does not satisfy \eqref{rTKineq}, and its rank decomposition is not unique.  
In particular a symmetric decomposition of $|W\rangle=\sum_{i=1}^3 \otimes^3\x_i$ is not unique.  It is shown in \cite{Der13} that Kruskal's theorem fails if we replace $2r+2$ in the right hand side of  \eqref{rTKineq} by  $2r+1$.

Consider the space of tensors $\C^{\n}$ where $d\ge 3$ and  $2\le n_1\le \cdots\le n_d$.  Fix $r>1$ and assume that there exists $\cT\in\C^{\n}$ such that $\rank \cT=r$.
Let $\bF_r$ be the map defined by \eqref{defFrd}.
A tensor $\cT$ is called a \emph{random tensor} in $\bF_r((\C^{n_1}\times \cdots\times \C^{n_d})^r)$ if $\cT=\bF_r(\x_{1,1},\ldots,\x_{d,1},\ldots,\x_{1,r},\ldots,\x_{d,r})$,
 and the coordinates of vectors $\x_{1,1},\ldots,\x_{d,r}$  are sampled from independent Gaussian complex-valued distribution.  We say that the identifiability property
  holds if for a random $\cT$ of the above form $r=r(\cT)$, and the decomposition $\cT=\bF_r(\x_{1,1},\ldots,\x_{d,1},\ldots,\x_{1,r},\ldots,\x_{d,r}))$ is
   unique up to a permutation of the summands $\otimes_{j=1}^d \x_{j,i}$.   Recall the inequality
\eqref{genrankdbd}.  By counting the parameters we deduce that if $ \frac{N(\n)}{M(\n)}$ is not an integer then for $r= r_{\textrm{gen}}(\n)$ identifiability property fails.
Thus it makes sense to consider the identifiability property for $r<r_{\textrm{gen}}(\n)$.
Theorem \ref{genndbig} states that $r_{\textrm{gen}}(\n)=n_d$ if 
\begin{equation}
\big(\prod_{j=1}^{d-1} n_j\big)+d-1-\sum_{j=1}^{d-1}n_j\le n_d.
\end{equation}
If $n_d$ satisfies the above inequality then the identifiability property holds if and only if \cite{BCO}:
\begin{equation}
r\le \big(\prod_{j=1}^{d-1} n_j\big)+d-2- \sum_{j=1}^{d-1}n_j.
\end{equation}
Thus it is enough to consider the identifiability property for
\begin{equation}
n_1\le \cdots\le n_{d-1}\le n_d\le \big(\prod_{j=1}^{d-1} n_j\big)+d-2- \sum_{j=1}^{d-1}n_j.
\end{equation}
It is shown in \cite{COV14} that if the above inequalities hold, and $\prod_{i=1}^d (n_i-1)\le 15 000$ 
then the identifiability property holds for $r<r_{\textrm{gen}}(\n)$ except the following cases of $\n=(n_1,\ldots,n_d)$ and the corresponding $r$:
\begin{eqnarray}
(4,5,5) \textrm{ and } r=5,\\
(5,5,5) \textrm{ and } r=6,\\
(4,7,7) \textrm{ and } r=8,\\
(3,3,n,n)  \textrm{ and } r=2n-1,\\
(3,3,3,3,3) \textrm{ and } r=5.
\end{eqnarray}
Domanov and de Lathauwer studied the identifiability property for $3$-mode tensors using mostly matrix methods in \cite{DD13I,DD13II,DD15,DD17}.

\section{Symmetric tensors}\label{sec:symten}
This section is devoted to symmetric tensors, which can be applied
 in quantum physics to describe systems of bosons.
In subsection \ref{subsec:hompol} we recall the known one-to-one correspondence between symmetric $d$-mode tensors and homogeneous polynomials of degree $d$.  In particular, the symmetric rank of a symmetric tensor is the Waring rank of a homogeneous polynomial.  Next we bring the celebrated result of 
Alexander-Hirschowitz \cite{AH95} that gives the formula for the generic symmetric rank except a number of known cases.  Subsection \ref{subsec:maxsymrk} discusses the recent upper bound of Buczy\'{n}ski-Han-Mella-Teitler on the maximum symmetric rank in terms of the generic symmetric rank \cite{BHMT17}.  We also provide 
 some known values of the maximum symmetric rank.
Subsection \ref{subsec:rankWd} shows that the rank of the
symmetric tensor  representing the state $|W_d\rangle$ is equal to $d$, 
while its border rank is $2$.
In subsection \ref{subsec:rprodten} we show explicitly that the rank of Kronecker and tensor products of quantum states  can be strictly submulitplicative. This is achieved  by considering the ranks of $|W\rangle\otimes_K|W\rangle$ and $|W\rangle\otimes|W\rangle$, which read $7$ and $8$ respectively, while the square of the rank of $|W\rangle$ is $9$.  In a short subsection \ref{subsec:compsymten} we discuss briefly computational methods for symmetric rank of symmetric tensors.  Subsection \ref{subsec:genidst} gives a short account of the results in \cite{COV}, which show that the generic identifiability property of symmetric tensors holds for a rank less than the symmetric generic rank, except a number of known cases.  
\subsection{Basic properties and relation to homogeneous polynomials}\label{subsec:hompol} A tensor $\cS=[\cS_{i_1,\ldots,i_d}]\in \otimes^d\C^n$ is called \emph{symmetric} if the value of the coordinates $\cS_{i_1,\ldots,i_d}$ does not change under the permutation of indices.  We denote by $\rS^d\C^n\subset\otimes^d\C^n$ the subspaces of all $d$-mode symmetric tensors over $\C^n$.   In physics this space is called the $(d,n)$ \emph{boson space}.
A symmetric $\cS$ is rank-one tensor if and only if $\cS=\otimes^d\x$, where $\x\in\C^n\setminus\{\0\}$.  There exists
one-to-one correspondence between symmetric tensors and the space of all homogeneous polynomials of degree $d$ in $n$ complex
 variables denoted as $\rP(d,n)$.  Indeed, let
$f(\x)=\langle \cS,\otimes^d\bar \x\rangle$, where $\langle\cdot,\cdot\rangle$ is the standard inner product in $\otimes^d\C^n$.  Then $f(\x)\in\rP(d,n)$.  Conversely, each polynomial $f(\x)\in\rP(d,n)$ induces a unique $\cS\in\rS^d\C^n$ as we explain below.  

We now introduce the standard multinomial notation.
Let $\Z_+$ be the set of all nonnegative integers. Denote by $J(d,n)$ the set
$J(d,n)= \big\{\bj=(j_1,\ldots,j_n)\in\Z_+^n,\; j_1+\cdots+j_n=d\big\}.$
Recall that $|J(d,n)|$, the cardinality of the set $J(d,n)$, is $n+d-1 \choose d$.
For $\x=(x_1,\ldots,x_n)\trans\in\C^n$ and $\bj=(j_1,\ldots,j_n)\in J(d,n)$ let $\x^{\bj}$ be the monomial $x_1^{j_1}\cdots x_n^{j_n}$.   Define $c(\bj)=\frac{d!}{j_1!\cdots j_n!}$.  Then $f(\x)\in\rP(n,d)$ expressed as a sum of monomials is given by:
\begin{equation}\label{defpolx1}
f(\x)=\sum_{\bj\in J(d,n)} c(\bj)f_{\bj}\x^{\bj}.
\end{equation}
Suppose that $f(\x)=\langle\cS,\otimes^d\bar \x\rangle$.  Then the correspondence between $f_{\bj}$ and the entries of  $\cS=[\cS_{i_1,\ldots,i_d}]$ is as follows.  Assume that $(i_1,\ldots,i_d)\in [n]^d$ is fixed.  For each $l\in[n]$ let $j_l$ be the number of times that $l$ appears in the sequence $i_1,\ldots,i_d$.  Set $\bj=(j_1,\ldots,j_n)$.  Then $f_{\bj}=\cS_{i_1,\ldots,i_d}$.  

Thus $\dim \rS^d\C^n={n+d-1\choose d}={n+d-1\choose n-1}$.  This dimension is usually significantly smaller than $\dim\otimes^d \C^n= n^d$.  For example for $n=2$, the space $\rS^d\C^2$, the boson $d$-qubit space, has dimension $d+1$, while the space $\otimes^d\C^2$, of $d$-mode qubits is $2^d$.  Thus $(d,n)$ bosons are much less entangled that then $d$-qunits. 

A symmetric rank decomposition of $\cS\in\rS^d\C^n$ is a decomposition of $\cS$ to a sum of rank-one symmetric tensors.  This is analogous to the Waring decomposition of $f\in\rP(d,n)$ to a sum of linear terms to the power $d$: $f(\x)=\sum_{i=1}^r \langle\x,\ba_i\rangle^d$, where $\ba_i\in\C^n\setminus\{\0\}$.  The minimal number of summands in
symmetric rank decomposition of symmetric $\cS$ is called the \emph{symmetric rank} of $\cS$, and denoted as $r_{\textrm{s}}(\cS)$\label{eq:def_symmetricrank}.  This is equivalent to the Waring rank of $f(\x)=\langle \cS,\otimes^d\bar\x\rangle$.  
The following inequality holds by definition,
$r(\cS)\le r_{\textrm{s}}(\cS)$.

We now recall two positive results when $r(\cS)=r_{\textrm{s}}(\cS)$.  For  a $(d,n)$ symmetric tensor $\cS$ denote by $r_A(\cS)$ the matrix rank of $\cS$ viewed as a bipartite state in $\C^n\otimes (\otimes^{d-1}\C^n)$.  As in \S\ref{subsec:basres3ten}
we deduce that $r_A (\cS)\le r(\cS)$.  In \cite{Fri16} it is shown that if $r(\cS)\in\{r_A(\cS),r_A(\cS)+1\}$ then $r(\cS)=r_{\textrm{s}}(\cS)$.  It is shown in \cite{ZHQ16}
that if $\cS\in\rS^d\C^n$ and $r_{\textrm{s}}(\cS)\le d$ then $r(\cS)=r_{\textrm{s}}(\cS)$.
 However, even for general $3$-mode symmetric tensors one has a strict inequality $r(\cS)<r_{\textrm{s}}(\cS)$ \cite{Shi17}.

As for general tensors, one can define a \emph{generic rank of $(d,n)$ symmetric tensor} as the symmetric rank of a random $\cS\in\rS^d\C^n$.
Denote by $r_{\textrm{gen}}(d,n)$\label{eq:def_genericsymmetricrank} the generic rank of $(d,n)$ symmetric tensor.
Note that  $\sum_{i=1}^r \langle\x,\ba_i\rangle^d$ has $rn$ complex parameters.  The dimension count yields the inequality
\begin{eqnarray}\label{symrankin}
r_{\textrm{gen}}(d,n)\ge \Bigg\lceil\frac{{n+d-1\choose d}}{n}\Bigg\rceil.
\end{eqnarray}
The celebrated Alexander-Hirschowitz result \cite{AH95} claims that equality holds in the above inequality except the following cases \cite{BO08}:
\begin{eqnarray*}
n=3, \quad d=4,\\
n=4, \quad d=4,\\
n=5, \quad d=3,\\
n=5, \quad d=4.
\end{eqnarray*}
In all the exceptional cases the value of generic rank is $\Big\lceil\frac{{n+d-1\choose d}}{n}\Big\rceil+1$.  Furthermore, in these exceptional cases, all tensors of border rank at most $\Big\lceil\frac{{n+d-1\choose d}}{n}\Big\rceil$ form a hypersurface in $\rS^d\C^n$.  (We thank G. Ottaviani for pointing out this fact to us.)

\subsection{Maximum symmetric rank}\label{subsec:maxsymrk}
Denote by $r_{\textrm{max}}(d,n)$\label{eq:def_maximumsymmetricrank} the maximum rank of $(d,n)$ symmetric tensors.
The following analogue of \eqref{maxgenrin} is proved in \cite{BHMT17}:
\begin{eqnarray}\label{maxgenrsym}
r_{\textrm{max}}(d,n)\le 2r_{\textrm{gen}}(d,n)-1.
\end{eqnarray}
This bound can be further improved \cite{BHMT17} to $r_{\textrm{max}}(d,n)\le 2r_{\textrm{gen}}(d,n)-2$ if the variety of all symmetric tensors of border symmetric rank at most $r_{\textrm{gen}}(d,n)-1$ is a hypersurface.  This assumption holds in all the above exceptional cases.

We now discuss briefly the known maximum ranks.  The first nontrivial case is $r_{\textrm{max}}(2,3)$.  As $r_{\textrm{max}}(2,2,2)=3$ and $r(|W\rangle)=3$ we deduce that $r_{\textrm{max}}(3,2)=3$.  
Observe that the relation
$r_{\textrm{gen}}(3,2)=r_{\textrm{gen}}(2,2,2)=2$ implies
that inequality \eqref{maxgenrsym} is not sharp in this case.

The following maximum ranks are known.  We also display the value of the generic symmetric rank in these cases:
\begin{eqnarray}
&&r_{\textrm{max}}(d,2)=d \quad \textrm{\cite{CS11}, \cite[\S3.1]{BT15}},\quad r_{\textrm{gen}}(d,2)=\Big\lceil \frac{d+1}{2}\Big\rceil,\\
&&r_{\textrm{max}}(3,3)=5 \quad \textrm{\cite[\S96]{Seg42}, \cite{CM96}, \cite{LT10}},
\quad r_{\textrm{gen}}(3,3)=4,\\
&&r_{\textrm{max}}(4,3)=7  \quad \textrm{\cite[\S97]{Seg42}, \cite{Kle99}, \cite{Par15}},
\quad r_{\textrm{gen}}(4,3)=6,\\
&&r_{\textrm{max}}(5,3)=10 \quad \textrm{\cite{Par15},\cite{BT16}}, \quad r_{\textrm{gen}}(5,3)=7.
\end{eqnarray}
In \S\ref{subsec:rprodten} we show that $r_{\textrm{max}}(3,4)\ge 7$. See also \cite{BT16}.

\subsection{The rank of $|W_d\rangle$}\label{subsec:rankWd}
Denote by $|W_d\rangle\in\rS^d\C^2$ the symmetric tensor corresponding to polynomial
$dx_1^{d-1}x_2$:
\begin{eqnarray}
|W_d\rangle=\sum_{j=0}^{d-1}(\otimes^{d-j-1}\e_1)\otimes \e_2\otimes(\otimes^{j}\e_1).
\end{eqnarray} 
Hence $r(|W_d\rangle)\le d$.
We claim that $r(|W_d\rangle)=r_{\textrm{s}}(|W_d\rangle)=d$.  As $|W\rangle=|W_3\rangle$, we know that $r(|W_3\rangle)=3$.  We first claim that   $r(|W_d\rangle)=d$ \cite{Bla13}.  We prove that by induction on $d=k\ge 3$.  Suppose that  $r(|W_k\rangle)=k$  for $k\ge 3$.   Assume to the contrary that 
\begin{eqnarray*}
|W_{k+1}\rangle=\sum_{i=1}^r \otimes_{j=1}^{k+1} \x_{i,j}, \quad \x_{i,j}\in\C^2, \;j\in[k+1], \;i\in[r], \; r<k+1.
\end{eqnarray*}
Observe that 
$|W_{k+1}\rangle=|W_{k}\rangle\otimes \e_1+\e_1^{\otimes k}\otimes\e_2.$
Hence span$(\x_{1,k+1},\ldots,\x_{r,k+1})=\C^2$.   For $\y\in\C^2$ let $|W_{k+1}\rangle\times \y=\sum_{i=1}^r (\y\trans\x_{i,k+1})\otimes_{j=1}^k \x_{i,j}$ be the contraction with respect to the last coordinate.  Choose $\x_{l,k+1}$ which is linearly independent to $\e_1$.  Let $\y\in\C^2\setminus\{\0\}$ satisfy $\y\trans\x_{l,k+1}=0$.
Hence $\y\trans\e_1\ne 0$ and we fix $\y$ by letting $\y\trans\e_1=1$.  Thus
$\cT=|W_{k+1}\rangle \times \y$ equals to $\sum_{i\in[r]\setminus\{l\}} (\y\trans\x_{i,k+1})\otimes_{j=1}^k \x_{i,j}$.  Therefore $r(\cT)\le k-1$.  Observe next that $\cT=|W_k\rangle +(\y\trans \e_2)\e_1^{\otimes k}$.  
Furthermore, 
$\cT$ is a symmetric tensor which corresponds to the polynomial 
\begin{eqnarray*}
f(x)=dx_1^{d-1}x_2+(\y\trans \e_2) x_1^d=dx_1^{d-1}(x_2+((\y\trans \e_2)/d)x_1).
\end{eqnarray*}
Change coordinates $(x_1,x_2)$ to $(x_1, x_2+((\y\trans \e_2)/d)x_1)$ to deduce that $\cT$ is in the orbit of $|W_k\rangle$.  Hence $r(\cT)=k$ which contradicts our assumption that $r(|W_{k+1}\rangle)<k+1$.  Thus $r(|W_{k+1}\rangle)=k+1$.

Observe finally that
$d=r(|W_d\rangle) \le r_{\textrm{s}}(|W_d\rangle)\le r_{max}(d,2)=d.$

We close this subsection with the well known fact that $r_{\mathrm{b}}(|W_d\rangle)=2$.  As the set of rank-one states is closed it follows
that $r_{\mathrm{b}}(|W_d\rangle)\ge 2$.  On the other hand we have the equality
\begin{eqnarray}
|W_d\rangle=\lim_{t\to 0} \frac{1}{t}\Big((\e_1+t\e_2)^{\otimes d}-\e_1^{\otimes d}\Big).
\end{eqnarray}
\subsection{Tensor rank of product of tensors}\label{subsec:rprodten}
\begin{lemma}\label{UVtenrin}
Let $\cU\in\C^{\m},\cV\in \C^{\n}$ be two tensors, where $\m=(m_1,\ldots,m_p),\n=(n_1,\ldots,n_q)$.  Then
\begin{equation}\label{ranktenprodin}
r\left(\cU\otimes_K\cV\right)\le r(\cU\otimes\cV)\le r(\cU)r(\cV).
\end{equation}
Furthermore
\begin{enumerate}[(a)]
\item
If $\cU\in\C\otimes\C^d\otimes\C^d$ and $\cV\in \C^2\otimes \C^n\otimes\C^m$ then equalities hold in \eqref{ranktenprodin}.  
\item
Suppose that $\cU=\cV=|W\rangle$.  Then strict inequalities hold in \eqref{ranktenprodin}.  More precisely
\begin{equation}\label{valranWW}
r(|W\rangle\otimes_K|W\rangle)=7, \quad r(|W\rangle\otimes |W\rangle)=8.
\end{equation}
\end{enumerate}
\end{lemma}
\begin{proof}
If $p=q$ then we have the inequalities \eqref{rnaktenKrnin}.  The general case follows from the same arguments.
The equality $r\left(\cU\otimes_K\cV\right)=r(\cU)r(\cV)$ yields the equality $r(\cU\otimes\cV)=r(\cU)r(\cV)$.
It is easy to show that if either $p=1$ or $q=1$ then equality holds in \eqref{ranktenprodin}.  Indeed, it is enough to assume that $q=1$ and $\cV\ne 0$.  As in the proof that $r(|W_d\rangle)=d$ we deduce equality by contracting the last index in $\cU\otimes\cV$.  Since for
matrices $r\left(\cU\otimes_K\cV\right)=r(\cU)r(\cV)$, it follows that for $p=q=2$ equality holds in \eqref{ranktenprodin}.  Corollary \ref{varestStrcon} gives an example when one has equalities in \eqref{rnaktenKrnin} for special two $3$-tensors.

We now show part (a) of the Lemma.
Proposition 22 in \cite{CJZ18} gives the following application of Theorem \ref{Jjajathm}:
\begin{eqnarray*}
r(\cX\otimes _K\cY)=r(\cX\otimes \cY)=r(\cX)r(\cY), \quad \cX\in\C\otimes\C^d\otimes\C^d, \cY\in \C^2\otimes \C^n\otimes\C^m.
\end{eqnarray*}
 Note that $\cX$ can be viewed as a matrix $X\in \C^d\otimes \C^d$. 
 Any matrix $T\in\C^{p\times q}$ can be trivially extended to a bigger matrix $X\in\C^d\times \C^d$ for $d= \max(p,q)$ by adding additional  zero rows or columns.
It is straightforward to show that $r(T\otimes \cY)=r(X\otimes \cY)$.
Hence
\begin{eqnarray}\label{CJZthm}
r(X\otimes \cY)=r(\cX)r(\cY), \quad X\in\C^p\otimes\C^q, \cY\in \C^2\otimes \C^n\otimes\C^m.
\end{eqnarray}
A special case of this equality is proved independently in the first part of Proposition 9 in \cite{CF18}.

We now show part (b) of the Lemma.
First we show that one can have strict inequalities in \eqref{CJZthm} for $\cU=\cV=|W_3\rangle$.   Assume that $\cX=|W_3\rangle\otimes_K|W_3\rangle\in\otimes^3\C^4$.  It will be convenient to use Dirac notation, where 
\begin{eqnarray}
|00\rangle=|0\rangle, \;\; |01\rangle=|1\rangle,\;\;|10\rangle=|2\rangle,\;\; |11\rangle=|3\rangle.
\end{eqnarray}
Then
\begin{eqnarray*}
&&\cX=(|001\rangle+|010\rangle+|100\rangle)\otimes_K(|001\rangle+|010\rangle+|100\rangle)=\\
&&|003\rangle+|012\rangle+|102\rangle+|021\rangle+|030\rangle+|120\rangle+|201\rangle+|210\rangle+|300\rangle.
\end{eqnarray*}
 The above three tensor is symmetric on $\C^4$ and  it corresponds to the following polynomial  of degree three,  $f(x_1,x_2,x_3,x_4)=3x_1^2x_4+6x_1x_2x_3$.
Observe next that \cite{CCDJW10, CF18}:
\begin{eqnarray*}
&&6x_1^2x_4=(x_1+x_4)^3-(x_1-x_4)^3-2x_4^3,\\
&&24x_1x_2x_3=(x_1+x_2+x_3)^3-(-x_1+x_2+x_3)^3-(x_1-x_2+x_3)^3-(x_1+x_2-x_3)^3.
\end{eqnarray*}
This implies that $r_{\textrm{s}}(\cX)\le 7$.  

We now follow the arguments of \cite{YCGD10} to show that $r(\cX)\ge 7$.  First observe that the four frontal sections of $\cX$ form the following four matrices:
\begin{eqnarray*}
A_1=|00\rangle \langle 00|=\left[\begin{array}{cccc}1&0&0&0\\0&0&0&0\\0&0&0&0\\0&0&0&0\end{array}\right],\quad  
A_2=|01\rangle \langle 01|+|10\rangle \langle 10|=\left[\begin{array}{cccc}0&1&0&0\\1&0&0&0\\0&0&0&0\\0&0&0&0\end{array}\right],\\
A_3=|02\rangle \langle 02|+|20\rangle \langle 20|=\left[\begin{array}{cccc}0&0&1&0\\0&0&0&0\\1&0&0&0\\0&0&0&0\end{array}\right], \\  
A_4=|03\rangle \langle 03|+|12\rangle  \langle 12|+|21\rangle \langle 21|+
|30\rangle \langle 30|=\left[\begin{array}{cccc}0&0&0&1\\0&0&1&0\\0&1&0&0\\1&0&0&0\end{array}\right].
\end{eqnarray*}
Matrices $A_1,A_2,A_3,A_4$ are linearly independent.  Furthermore $\det (A_4+a_1A_1+a_2A_2+a_3A_3)=1$, which is obtained by expanding
 this determinant by the rows $4,3,2,1$.

Assume to the contrary that $r(\cX)=r<7$.  As $r_3(\cX)=4$ we have that $r\ge 4$.
Let $B_1,\ldots,B_r$ be $r$ linearly independent rank-one matrices so that they span the subspace $\bV\subset \C^{4\times 4}$, which contains $A_1,\ldots,A_4$.  As $A_1$, $A_2$ and $A_3$ are independent we have a basis in $\bV$ consisting of $A_1,A_2,A_3$ and $C_1,\ldots,C_{r-3}\in\{B_1,\ldots,B_r\}$.  Express $A_4$ in this basis to deduce that $A_4+\sum_{i=1}^3 a_i A_i=\sum_{j=1}^{r-3} c_j C_j$.  That is $r( A_4+\sum_{i=1}^3 a_i A_i)\le r-3\le 3$. This contradicts the equality $\det (A_4+a_1A_1+a_2A_2+a_3A_3)=1$.

We now show that the rank of $\cY=|W_3\rangle^{\otimes 2}\in \otimes^6\C^2$ is eight.  We first give a simple decomposition of $\cY$ as a sum of $8$ rank-one tensors as in \cite{CJZ18}.  Recall that the generic rank of tensors in $\otimes^3\C^2$ is two.
Hence most rank-one perturbation of $|W_3\rangle$ have rank two.  For example, for $\z=|0\rangle$ the two tensors $|W_3\rangle + \z^{\otimes 3}$ and  $|W_3\rangle +\frac{1}{2} \z^{\otimes 3}$ have rank two.  Next observe that 
\begin{eqnarray*}
|W_3\rangle^{\otimes 2}=\big(|W_3\rangle^{\otimes 2}+\z^{\otimes 3}\big)^{\otimes 2} -\big(|W_3\rangle +\frac{1}{2} \z^{\otimes 3}\big)\otimes \z^{\otimes 3} -\z^{\otimes 3}\otimes
\big(|W_3\rangle +\frac{1}{2} \z^{\otimes 3}\big).
\end{eqnarray*}
Use the inequality \eqref{ranktenprodin} for each tensor product appearing in the right hand side of the above identity
to deduce that $r(|W_3\rangle^{\otimes 2})\le 4+2+2=8$.  

We now outline briefly the main arguments in \cite{CF18} to show that $r(|W_3\rangle^{\otimes 2})\ge 8$.
Recall that $r(|W_3\rangle^{\otimes 2})\ge r(\cX)=7$.  Assume to the contrary 
\begin{eqnarray*}
 |W_3\rangle^{\otimes 2}=\sum_{i=1}^7 \otimes_{j=1}^6\ba_{j,i}.
\end{eqnarray*}
We claim that for each $i\in[7]$ either $\ba_{1,i},\ba_{2,i},\ba_{3,i}\in$ span$(|0\rangle)$
or $\ba_{4,i},\ba_{5,i},\ba_{6,i}\in$ span$(|0\rangle)$. Suppose the opposite case. Then we may assume that this dichotomy does not hold for $i=7$.  Since each copy of $|W_3\rangle$ is symmetric, by permuting the first and the last $3$ components of 
$|W_3\rangle^{\otimes 2}$, we can assume that $\ba_{1,7}$ and $\ba_{6,7}$ are not in span$(|0\rangle)$.  Contract  $|W_3\rangle^{\otimes 2}$ with respect to the first coordinate using a vector $\x$ orthogonal to $\ba_{1,7}$:
\begin{eqnarray}
\x\times |W_3\rangle^{\otimes 2}=(\x\times |W_3\rangle)\otimes |W_3\rangle=\sum_{i=1}^6 (\x\trans \ba_{1,i})\otimes_{j=2}^6 \ba_{j,i}.
 \end{eqnarray} 
 Observe next that since $\x\ne c|1\rangle$ it follows that the rank of the $2\times 2$ matrix $\x\times |W_3\rangle$ is two.   Use \eqref{CJZthm}
to deduce that $r\left((\x\times |W_3\rangle)\otimes |W_3\rangle\right) =6$. 
 
  The second part of Proposition 9 in \cite{CF18} states that the following six $3$-mode rank-one tensors are linearly dependent:
 \begin{eqnarray}
 \ba_{2,i}\otimes \ba_{3,i}\otimes\ba_{p,i}, \quad i\in[6] \textrm{ for } p\in\{4,5,6\},
 \end{eqnarray}
 and the following six $3$-mode rank-one tensors are linearly independent:
 \begin{eqnarray}
 \ba_{p,i}\otimes \ba_{q,i}\otimes\ba_{r,i},\quad i\in[6] \textrm{ for } p\in\{2,3\}, 4\le q<r\le 6.
 \end{eqnarray}
 Next contract $|W_3\rangle^{\otimes 2}$ on the last mode with respect to $\y$ orthogonal to $\ba_{6,7}$ and use \cite[Proprosition 9]{CF18} to deduce that the following six vectors are linearly independent: $\ba_{2,i}\otimes \ba_{3,i}\otimes \ba_{4,i}, i\in[6]$.  This contradicts to the previous statement that these six tensors are linearly dependent.  
 
 Thus we showed that for each $i\in[7]$ either $\ba_{1,i}$, $\ba_{2,i}$, $\ba_{3,i}\in$ span$(|0\rangle)$
or $\ba_{4,i}$, $\ba_{5,i}$, $\ba_{6,i}\in$ span$(|0\rangle)$.  We now contradict this statement.  Assume first that $\ba_{1,i},\ba_{2,i},\ba_{3,i}\in$ span$(|0\rangle)$ for each $i\in[7]$.  Then $|W_3\rangle^{\otimes 2}=|0\rangle^{\times 3}\otimes \cZ$ for some $\cZ\in \otimes^3\C^2$.  Thus $r(\cZ)\le 3$
and $r(|W_3\rangle^{\otimes 2})\le 3$ which is impossible.  Similarly, one cannot have  $\ba_{4,i},\ba_{5,i},\ba_{6,i}\in$ span$(|0\rangle)$ for $i\in[7]$.
Hence $r(|W_3\rangle^{\otimes 2})\ge 8$.
\end{proof}

We now discuss briefly the ranks of $|W_3\rangle^{\otimes k} \in\otimes^k \C^{8}$ and
$\otimes_K^k |W_3\rangle\in\otimes^k\C^{2^k}$.  It is shown in \cite{YCGD10} that $r(\otimes_K^k |W_3\rangle)\ge 2^{k+1}-1$, similar to the arguments we gave for the case $k=2$.  Hence $r(|W_3\rangle^{\otimes k})\ge 2^{k+1}-1$.  In particular,
$r(|W_3\rangle^{\otimes 3})\ge 15$.   
It is known \cite[Theorem]{Zui17} that
$ r(\otimes_K^3|W_3\rangle)=16$. 
Combine this result with \cite{CF18} to obtain
that
$16\le  r(|W_3\rangle^{\otimes 3})\le 20$.  In \cite{CCDJW10} it is shown that
\begin{eqnarray}
r(|W_d\rangle^{\otimes k})\ge r(|W_3\rangle^{\otimes k})+(d-3)(2^k-1).
\end{eqnarray}

A real sequence $\{a_k\}, k\in\N$ is called \emph{subadditive} if $a_{p+q}\le a_{p}+a_{q}$ for every $p,q\in\N$.  Fekete's subadditive lemma claims that
for any subadditive sequence ${a_k}$ with $k\in\N$, the modified sequence converges $\lim\limits_{k\to\infty}\frac{a_k}{k}=a$ with $a\in[-\infty,\infty)$.

Let $\cT\in \C^{\n}$.  The inequality \eqref{rnaktenKrnin} yields that the two sequences
$\log r(\otimes_K^k \cT)$ and $\log r(\cT^{\otimes k})$ are subadditive.  Let
\begin{eqnarray*}
r_{\textrm{lim}_K}(\cT)=\lim_{k\to\infty} \Big(r(\otimes_K \cT)\Big)^{\frac{1}{k}}\qquad\text{and}\qquad
r_{\textrm{lim}}(\cT)=\lim_{k\to\infty} \Big(r(\cT^{\otimes k})\Big)^{\frac{1}{k}}.
\end{eqnarray*}
 By definition
 $r_{\textrm{lim}_K}(\cT) \le r_{\textrm{lim}}(\cT)$,
 while  Corollary 12 in \cite{CJZ18} claims that
$r_{\textrm{lim}}(\cT) \le r_{\textrm{b}}(\cT)$. 
Hence the above results for $|W_d\rangle$ yield the equalities 
\begin{equation}r_{\textrm{lim}_K}(|W_d\rangle)=r_{\textrm{lim}}(|W_d\rangle)=2.\end{equation}
\subsection{Computational methods for symmetric rank of symmetric tensors}\label{subsec:compsymten}
Recall that the symmetric rank  of a symmetric tensor is the Waring rank of the corresponding homogeneous polynomial.  Hence one can use theoretical methods of algebraic geometry and the available software as Bertini \cite{BHSW06}.
The Waring rank of a homogeneous polynomial of degree $d$ in two variables can be determined very efficiently using Sylvester's algorithm \cite{Syl86}. The paper  \cite{BGI} discusses Sylvester's algorithm in the modern language of algebraic geometry.  The authors discuss also algorithms to find 
small border ranks of symmetric tensors.  In \cite{AF20} the authors provide methods
of linear algebra to find the rank of symmetric tensors 
similar to the algorithm for determining the rank of  a general tensor discussed in \S\ref{subsec:estrankpoleq}.
\subsection{Generic identifiability of symmetric tensors}\label{subsec:genidst}
In this subsection we discuss the identifiability property for symmetric tensors, which is similar to our discussion of the identifiability property for general tensors in \S\ref{subsec:genidgt}.
Assume that $d\ge 3$ and $n\ge 2$.  Suppose that $r>1$ is an integer, and there exists a tensor $\cS\in\cS^d\C^{n}$ of symmetric rank $r$.  Denote
\begin{equation}
\bG_r:(\C^n)^r\to \rS^d\C^n,\quad \bG_r(\x_1,\ldots,\x_r)=\sum_{i=1}^r \otimes^d\x_i.
\end{equation}
A tensor $\cS$ is called a \emph{random tensor} in $\bG_r((\C^n)^r)$ if $\cS=\bG_r(\x_1,\ldots,\x_r)$, and the coordinates of vectors $\x_1,\ldots,\x_r$  are sampled from independent Gaussian complex valued distribution.  We say that the identifiability property holds if for a random $\cS$ of the above form $r=r_{\textrm{s}}(\cS)$, and the decomposition $\cS=\bG_r(\x_1,\ldots,\x_r)$ is unique up to a permutation of the summands $\otimes^d\x_i$.

Recall the inequality \eqref{symrankin}: $r_{\textrm{gen}}(d,n)\ge \Bigg\lceil\frac{{n+d-1\choose d}}{n}\Bigg\rceil$.  Observe that if $\Bigg\lceil\frac{{n+d-1\choose d}}{n}\Bigg\rceil$ is not an integer then by counting the number of parameters we deduce that identifiability property fails for $r=r_{\textrm{gen}}(d,n)$.  Similarly,  the identifiability property fails for $r>r_{\textrm{gen}}(d,n)$.  The fundamental result in \cite{COV} states that the identifiability property holds for $r<r_{\textrm{gen}}(d,n)$ except the following cases:
\begin{eqnarray*}
d=6, \, n=3, \textrm{ and } r=9;\\
d=4, \, n=4, \textrm{ and } r=8;\\
d=3, \, n=6, \textrm{ and } r=9.
\end{eqnarray*}
In the above exceptional cases a generic tensor of the corresponding rank has two distinct Waring decompositions.
We list additional references on the identifiability property: \cite{ACM20,ACV,BaC13}.

\section{Nuclear rank of a tensor}
\label{sec:nucrank}
Section \ref{sec:nucrank} is mainly devoted to the notion of the nuclear rank of a tensor.  In subsection \ref{subsec:specnrm} we discuss the spectral norm and the
geometric measure of entanglement.  We point out the concentration law 
concerning the geometric measure of entanglement. 
It states that this measure of entanglement 
of a random symmetric quantum state generated with respect to the Haar measure
is close to the maximal possible value.
Subsection \ref{subsec:nucnorm} introduces the  nuclear norm and nuclear rank. 
The minimal nuclear decomposition of a tensor plays the role analogous to the
singular value decomposition of a matrix. 
Hence from the point of view of applications in quantum physics,
the nuclear rank of a tensor seems to be the right analog of the rank of a matrix.
\\
In subsection \ref{subsec:face}  we discuss the faces of the unit ball with respect
to  the nuclear norm.  Lemma \ref{facdes} characterizes the exposed faces of such a unit ball.
We consider also the restriction of the nuclear norm to symmetric tensors,
 which gives rise to the definition of a symmetric nuclear rank.
Subsection \ref{subsec:matnnrm} concerns the exposed faces and facets of unit balls
with respect to matrix nuclear norm and spectral norm.  
Theorem \ref{exfcspnucnrm} characterizes the exposed faces of these unit balls.
 In subsection \ref{subsec:GHZ} we show that the nuclear rank of $|GHZ\rangle$ state 
  is $2$.  Subsection \ref{subsec:gn3qub} discusses the  generic and maximum nuclear rank of symmetric states of a three-qubit system.  Theorem \ref{dimfw3} characterizes the face nuclear norm in $2\times 2\times 2$ symmetric tensors
 which contains the state $|W\rangle$.

\subsection{Geometric measure of entanglement and spectral norm}
\label{subsec:specnrm}
Denote the space of all product states in $\C^{\n}$ by 
\begin{eqnarray}
\Pi(\n)=\big\{\cP\in \C^{\n},\cP=\otimes_{j=1}^d\x_{j}, \x_j\in\C^{n_j}, j\in[d],
\|\cP\|_2 =1\big\}.
\end{eqnarray}
For any multipartite state  $|\psi\rangle$ 
represented by a tensor $\cT\in\C^{\n}$ normalized by 
a fixed Hilbert--Schmidt norm,
$||\cT||_2=1$, 
its entanglement can be characterized by the Fubini--Study distance
of $|\psi\rangle$ 
to the set of product states \cite{ZB02,WG03}.
This quantity can be
related to the \emph{spectral norm} of $\cT$,
\begin{equation}\label{defspecnrm}
\|\cT\|_{\infty}=\max\big\{|\langle\cT,\cP\rangle|, \cP\in\Pi(\n)\big\}.
\end{equation}
In analogy to the bipartite case, corresponding to matrices,
one defines 
the geometric measure of entanglement of the state $\cT$ by
$\sqrt{2(1-\|\cT\|_{\infty})}$
which corresponds to the minimal Hilbert--Schmidt distance
between the projector
$\rho_{\psi} =|\psi\rangle \langle \psi|$
and the projector on a separable state
-- see  Section  \ref{subsec:entangle}.  

 We now make a few comments on the spectral norm of $\cT$. 
First, note that  $\|\cT\|_{\infty}=\max\{\Re\langle\cT,\cP\rangle, \cP\in\Pi(\n)\}$.  Next, observe that for $d=2$, i.e., matrices, $\|T\|_{\infty}$ is the leading singular value 
$\sigma_{\rm max}$ of the matrix $T$,
which is also the spectral norm of $T$, viewed as a linear transformation from $\C^{n_2}$ to $\C^{n_1}$. Note that
  $\|T\|_{\infty}$ sometimes denotes the operator norm of a matrix $T$, where $\C^{n_1}$ and $\C^{n_2}$ are endowed with
   $\infty$-norms, which is different from $\sigma_1(T)$.

Assume that $\|\cT\|_{\infty}=\Re\langle\cT,\cP\rangle$ for some $\cP\in \Pi(\n)$.
Then $\|\cT\|_{\infty}\cP$ is the best rank-one approximation of $\cT$: $\|\cT-\|\cT\|_{\infty}\cP\|\le \|\cT-\cX\|$, where $\cX$ is rank-one tensor \cite{LMV00}.  For a matrix  $A\ne 0$ a best rank-one approximation 
$\|A\|_{\infty}\cP$ is the term $\sigma_1\bu_1\bv_1^*$ in the SVD decomposition \eqref{SVDdec} \cite{SC10} or \cite[Corollary 4.13.2]{Fribook}. Furthermore, $r(A-\|A\|_{\infty}\cP)=r(A)-1$.  This equality is not true for tensors with $d\ge 3$ indices \cite{SC10}.

Assume that  $\cT$ has real entries. Then we can define the real spectral norm as 
$\|\cT\|_{\infty,\R}=\max\{|\langle\cT,\cP\rangle|, \cP\in\Pi(\n)\cap \R^{\n}\}$.  
By definition, the following inequality holds,  
$\|\cT\|_{\infty,\R}\le \|\cT\|_{\infty}$
which is saturated for   bipartite states, $d=2$,
represented by matrices $T$.
 However, 
 for $d\ge 3$ one can have a strict inequality already for the space of $3$-qubits \cite{FL18}.  
That is, the closest product state to a real state may be complex-valued.
The computation of spectral norms of $\|\cT\|_{\infty,\R}$  and $\|\cT\|_{\infty}$ for $d\ge 3$ is NP-hard \cite{HL13, FL18}.

In the case of a bipartite state 
represented by a matrix with the Hilbert--Schmidt norm fixed,
the smaller spectral norm, the larger quantum entanglement -- see Section \ref{subsec:entangle}.
The similar reasoning holds for multipartite states represented by a tensor.
  Hence it makes sense to introduce the following measure of entanglement, equivalent to the geometric measure of entanglement \cite{GFE09}:
\begin{eqnarray}\label{defetaT}
\eta(\cT):=-\log_2 \|\cT\|_{\infty}^2.
\end{eqnarray}
One can estimate $\eta(\cT)$ from above as follows.  Expand $\cT$ in terms of the orthonormal basis in $\C^{\n}$ which consists of product states corresponding to a choice of an orthonormal basis $\e_{1,i},\ldots,\e_{n_i,i}$ in each $\cH_{n_i}$ for $i\in[d]$.  
  The absolute value of the coefficient of $\cT$ with respect to $\otimes_{j=1}^d \e_{k_i,n_i}$ is $|\langle\cT,\otimes_{j=1}^d \e_{k_i,n_i}\rangle|\le |\cT|_{\infty}$.
As $\dim \C^{\n}=N(\n)$ and $\|\cT\|^2=1$  we easily deduce that $\|\cT\|_{\infty}^2\ge \frac{1}{N(n)}$.  A slightly better estimation   $\|\cT\|_{\infty}^2\ge \frac{\max(n_1,\ldots,n_d)}{N(n)}$ is given in \cite{LNSTU}.  Hence
\begin{eqnarray*}
0\le \eta(\cT)\le \log_2 \frac{N(\n)}{\max(n_1,\ldots,n_d)}.
\end{eqnarray*}
Assume that $\n=2^{\times d}$.  Then $\eta(\cT)\le d-1$.   It is shown in \cite{GFE09}, using the concentration of the Haar measure on the manifold of states in $\C^{2^{\times d}}$, that  
\begin{eqnarray}
\mathrm{P}\Big[\eta(\cT)\ge d-2\log_2(d)-3\Big]\ge 1-e^{-d^2}, \qquad \text{for }d\ge 11,
 \end{eqnarray}
 where the probabilities are considered with respect to the unitarily invariant measure
 on the space of $d$-qubit states  induced by the Haar measure  on $U(2)^{\otimes d}$.
 The physical interpretation is that most of the $d$-qubits are strongly entangled  for $d\gg 1$.
 A generalization to $\n=n^{\times d} $ for a fixed number $n\ge 3$ and $d\gg 1$ is given in \cite{DM18}.
 
 We now consider the symmetric tensors $\rS^d\C^n\subset \C^{n^{\times d}}$.  
 A fundamental result of Banach \cite{Ban38} claims
 \begin{eqnarray}
 \|\cS\|_{\infty}=\max\big\{|\langle \cS,\x^{\otimes d}\rangle|, \x\in\C^{n}, \|\x\|=1\big\},\quad \cS\in\rS^d\C^n.
 \end{eqnarray}
 That is, the geometric measure of entanglement of a symmetric state is achieved at a symmetric product state.
 This characterization was rediscovered in \cite{Hubetall09}. 
  
 Assume that $n\ge 2$ is fixed and $d\gg 1$.  We claim that the typical 
  symmetric states are much less entangled
   than the general states of $d$ qunits with respect to 
the discrete measure of entanglement given by the generic rank of a tensor. 
 Recall \eqref{genqunit} and the resulult by  Alexander-Hirschowitz \cite{AH95},
 \begin{eqnarray}
  r_{\textrm{gen}}(n^{\times d})\ge \frac{n^d}{d(n-1)+1}>\frac{n^{d-1}}{d},\\
  \nonumber\\
  r_{\textrm{gen}}(d,n)= \Bigg\lceil\frac{{n+d-1\choose d}}{n}\Bigg\rceil=\Bigg\lceil\frac{{n+d-1\choose n-1}}{n}\Bigg\rceil = O(d^{n-1}) .
 \end{eqnarray}
 Thus $ r_{\textrm{gen}}$ has an exponential growth in $d$
 in contrast to the  polynomial growth of  $r_{\textrm{gen}}(d,n)$. 
  This fact can be explained by observing that the dimension of $\otimes^d\C^n$ is exponential in $d$, while the dimension of $\rS^d\C^n$ is polynomial in $d$.
   
  It is shown in \cite{FK18} that
 \begin{eqnarray}
 0\le \eta(\cS)\le \log_2 {n+d-1\choose d} =\log_2  {n+d-1\choose n-1}.
 \end{eqnarray}
 Thus, for $n=2$ we have that $\eta(\cS)\le \log_2 (d+1)$.  
 There is still the concentration
 law which shows that most of symmetric  tensor for fixed $n$ and $d\gg 1$ concentrate at the upper bound given above \cite{FK18}.  In particular, for symmetric $d$-qubits one has the inequality:
 \begin{eqnarray}
 \mathrm{P}\Big[\eta(\cT)\ge \log_2d-\log_2(\log_2 d) -3\Big]\ge 1-\frac{1}{2d^{5/2}}, \quad \text{for }d\ge 42.
\end{eqnarray}
Above results show that for a fixed $n\ge 2$ and $d\gg 1$ a symmetric state
is typically much less entangled with respect to the geometric measure of entanglement
than a generic states of the same dimension.

The computation of the spectral norm of $\cS\in\rS^d\C^n$ is NP-hard in $n$ for $d=3$ \cite{FW18}.  However, for a fixed $\cS$, the computation of $\|\cS\|_{\infty}$ is polynomial in $d$ \cite{FW18}.  This result is obtained by showing that the computation of $\|\cS\|_{\infty}$ can be done by solving polynomial equations for the critical points of the function $\Re \langle \cS,\otimes^d\x\rangle$ restricted to the unit sphere $\|\x\|=1$.
\subsection{Nuclear norm and nuclear rank}
\label{subsec:nucnorm}
Denote by the nuclear norm $\|\cdot\|_1$ the dual norm to the spectral one on $\C^{\n}$.
From the definition of the spectral norm  it follows that the unit ball of the nuclear norm is the convex hull of $\Pi(\n)$.  As each $\cP\in\Pi(\n)$ is the extreme point on the unit sphere of the Hilbert--Schmidt norm, we deduce that each $\cP$ is an extreme point on the unit sphere of the nuclear form.  One can show that the nuclear norm has the following minimum characterization \cite{FL18}:
\begin{equation}\label{defnucnrm}
\|\cT\|_1=\min\Big\{\sum_{i=1}^r \prod_{j=1}^d \|\x_{i,j}\|, \;\cT=\sum_{i=1}^r \otimes_{j=1}^d \x_{i,j}\Big\}.
\end{equation}
Viewing  $\sum_{i=1}^r \prod_{j=1}^d \|\x_{i,j}\|$ as \emph{energy} of the expression of $\cT=\sum_{i=1}^r \otimes_{j=1}^d \x_{i,j}$, then $\|\cT\|_1$ is the minimal energy to decompose the tensor $\cT$ into a sum
 of rank-one tensors.

It is well known that for $d=2$, the nuclear norm
 reduces to the trace norm, $||T||_{1}= \tr\sqrt{TT^*}$,
 which is equal to the sum of  singular values of the matrix $T\in\C^{n_1\times n_2}$  -- see \cite{FL18}.  
 For $d\ge 3$, the computation of  the nuclear norm is NP-hard, since the computation of the (dual) spectral norm is NP-hard \cite{FL18}. 
 An interesting formula for the nuclear norms of special type tensors is given in \cite[Theorem 3]{SA14}.
 
One can find numerically $\|\cT\|_1$ for $\cT\in \C^{2}\times\C^{m}\times \C^{n}$ as follows:  The two first mode sections of $\cT$ are $T_1,T_2\in\C^{m\times n}$.  Let $\x=(x_1,x_2)\trans \in \C^{2}$ be a vector of length $1$: $|x_1|^2+|x_2|^2=1$.
Then $\x\times \cT=\cT(\x)= x_1 T_1 +x_2 T_2$ and
$\|\cT\|_{\infty}=\max\big\{\|\cT(\x)\|_{\infty}, \|\x\|=1\big\}.$
Note that $\cT(\x)$ is a matrix, so we can use software to find the singular value of $\cT(\x)$.  Due to numerical errors one needs to find all $\x$ where $\|\cT(\x)\|_1$ is a local maximum for $\x$ of norm $1$.

The minimal decomposition of $\cT$ with respect to the nuclear norm reads:
\begin{equation}\label{defmindecnn}
\cT=\sum_{i=1}^r \otimes_{j=1}^d \x_{i,j}, \quad \|\cT\|_1=\sum_{i=1}^r \prod_{j=1}^d \|\x_{i,j}\|.
\end{equation}
The \emph{nuclear rank} of $\cT\ne 0$, denoted as $r^{\textrm{nucl}}(\cT)$\label{eq:def_nuclearrank}, is the minimal $r$ in the above minimal decomposition.
It is assumed that $r^{\textrm{nucl}}(0)=0$. 
 By definition one has
  $r(\cT)\le r^{\textrm{nucl}}(\cT)$,
hence $r^{\textrm{nucl}}(\cT)$ 
can be interpreted as yet
 another measure of the entanglement of any $d$-partite quantum pure state
 represented by tensor $\cT$.
 In the particular case $d=2$, corresponding to bipartite systems,
 one arrives at the standard matrix rank, $r^{\textrm{nucl}}(T)=r(T)$.

Thus we can discuss similar notions for nuclear rank as for the regular rank:
\begin{enumerate}
\item What is the value of the maximum nuclear rank, denoted as $r_{\textrm{max}}^{\textrm{nucl}}(\n)$\label{eq:def_maximumnuclearrank}, and a good upper bound on its value?
\item What is a generic nuclear rank, denoted $r_{\textrm{gen}}^{\textrm{nucl}}(\n)$\label{eq:def_genericnuclearrank} and what is its value?
\item Does the border rank notion exist for nuclear norm?
\item Are there efficient algorithms to compute the nuclear rank?
\end{enumerate}

We now discuss some answers to these problems.  In order to do this we need to 
 recall some notions of convex sets in $\R^N$. 
\subsection{Faces of unit balls in $\C^{\n}$}\label{subsec:face}
We now recall several standard notions of convex sets applied to a unit ball   of any complex norm $\nu:\C^{\n}\to[0,\infty)$: $\rB_{\nu}=\{\cT\in\C^{\n}, \nu(\cT)\le 1\}$.
 It is convenient to view $\C^{\n}$ as a real space $\R^{\n}\times\R^{\n}$ of dimension $2N(\n)$.  That is $\cT=(\Re\cT,\Im\cT)$.  Then, any real functional $\phi:\C^{\n}\to \R$ is induced by $\cX\in\C^{\n}$: $\phi(\cT)=\Re \langle \cT,\cX \rangle$.  We denote this linear functional by $\phi_{\cX}$.  For $\cX,\cY\in \rB_{\nu}$ the set  $[\cX, \cY]=\{t\cX+(1-t)\cY, t\in[0,1]\}$ is called a \emph{closed interval} in $\rB_{\nu}$.
 A closed convex subset $\bF\subset B_{\nu}$ is called a \emph{face} if any open interval, $(\cX, \cY)=\{t\cX+(1-t)\cY, t\in(0,1)\}$, that lies in $\rB_{\nu}$ and intersects $\bF$ lies completely in $\bF$.  We denote that $\bF$ is a face of $\rB_{\nu}$ by $\bF\triangleleft \rB_{\nu}$.
 Note that $\emptyset$ and $\rB_{\nu}$ are faces of $\rB_{\nu}$.  Other faces of $\rB_{\nu}$ are called \emph{proper faces}. A proper face $\bF$ lies on the boundary of $B_{\nu}$, the unit sphere with respect to the nuclear norm $\rS_{\nu}=\{\cT\in\C^{\n}, \nu(\cT)=1\}$.   For example, any extreme point of $\rB_{\nu}$ is a zero dimensional face.  A dimension of a given convex set $C\subset \R^N$, is the dimension of the linear subspace spanned by affine combinations of the elements in $C$.
 As $B_{\nu}$ is a norm ball, for each tensor $\cT\in\rS_{\nu}$  one has a supporting hyperplane at $\cT$.  This supporting hyperplane can be neatly given by the dual norm 
$\nu^\vee(\cT)=\max\Re \big\{\langle \cT,\cX\rangle, \;\cX\in \rB_{\nu}\big\}.$
 Then for a given $\cT\in\rS_{\nu}$, each supporting hyperplane of $B_{\nu}$ at $\cT$ is $\phi_{\cX}$ such that $\Re \langle \cT,\cX\rangle= \nu^\vee(\cX)$.
 
 A proper face $\bF\triangleleft  \rB_{\nu}$ is called an \emph{exposed} face if it is an intersection of $\rB_{\nu}$ with a supporting hyperplane.  That is, each  
 $\cX\in\C^{\n}\setminus \{0\}$ induces an exposed face
\begin{eqnarray}\label{Xfacedef}
\bF(\cX)=\big\{\cY\in \rB_{\nu},\; \nu^\vee(\cX)=\Re\langle\cY,\cX\rangle\big\}.
\end{eqnarray}
It is known that there exist compact closed convex sets which have nonexposed faces. For example, take the standard real Hilbert norm in $\rB_{\|\cdot\|}\subset\R^N$, and a point $\x\in\R^N$ outside this ball.  Now take the Minkowski sum of $\rB_{\|\cdot\|}$ and the interval $[-\x,\x]$.  Then there exist extreme points of this balanced convex set, corresponding to the norm $\nu$, which are not exposed. (In $\R^2$ there are $4$ nonexposed extreme points.)

 A facet of $\rB_{\nu}$ is a maximal set-theoretic proper face of $\rB_{\nu}$.  By separation, every face is contained in an exposed face and thus facets are
automatically exposed \cite{SSS11}. 
 
Let $\rB_1(\n)\subset \C^{\n}$ be the unit ball of the nuclear norm, which is the convex set spanned by $\Pi(\n)$.  (Since $\Pi(\n)$ is closed it follows from Caratheodory's theorem that this convex set is closed.)  Denote by $U(n)\subset \C^{n\times n}$ the unitary group acting on $\C^n$.  Let $U(\n)$ be the product group $U(n_1)\times \cdots\times U(n_d)$ which acts on $\C^{\n}$.  First observe that $\Pi(\n)$ is the orbit of one product state $\cP\in\Pi(\n)$ under the action of $U(\n)$: $\Pi(\n)=U(\n)\cP$.
 Hence $B_1(\n)$ is an orbitope \cite{SSS11}.   Since the nuclear norm is the dual norm of the spectral norm it follows that
 \begin{eqnarray*}
 \|\cX\|_{\infty}=\max\big\{\Re \langle \cX,\cY\rangle, \cY\in \rB_1(\n)\big\}, \quad \forall \cX\in\C^{\n}.
 \end{eqnarray*} 

Thus we obtain the description of exposed faces of $\rB_1(\n)$:
\begin{lemma}\label{facdes}  Fix a state $\cX\in\C^{\n}$.  Let 
\begin{eqnarray}
\Pi(\cX)=\big\{\cP\in\Pi(\n), \Re\langle \cX,\cP\rangle=\|\cX\|_{\infty}\big\}.
\end{eqnarray}
Then $\Pi(\cX)$ is a closed set, and its convex hull is the exposed face $\bF(\cX)$ given by \eqref{Xfacedef}.  Vice versa, every exposed face of $B_1(\n)$ is of the form $\bF(\cX)$. 
\end{lemma}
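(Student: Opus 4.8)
The plan is to prove the lemma in three steps: (i) $\Pi(\cX)$ is closed; (ii) $\operatorname{conv}(\Pi(\cX))=\bF(\cX)$; and (iii) every exposed face of $B_1(\n)$ arises as $\bF(\cX)$ for a suitable $\cX$.

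First I would record that $\Pi(\n)$ is compact, being a closed and bounded subset of the Hilbert--Schmidt unit sphere, and that the map $\cP\mapsto\Re\langle\cX,\cP\rangle$ is continuous. Hence $\Pi(\cX)$, the set of maximizers of this functional over $\Pi(\n)$, is a closed (indeed compact) set, and its maximal value is exactly $\|\cX\|_{\infty}$ by the definition of the spectral norm. Combined with the displayed duality $\|\cX\|_{\infty}=\max\{\Re\langle\cY,\cX\rangle,\ \cY\in B_1(\n)\}$ stated just before the lemma, this shows that $\bF(\cX)$ is precisely the set of points of $B_1(\n)$ on which the functional $\phi_{\cX}$ attains its maximum $\|\cX\|_{\infty}$. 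Since each $\cP\in\Pi(\cX)$ lies in $\Pi(\n)\subseteq B_1(\n)$ and attains this maximum, we get $\Pi(\cX)\subseteq\bF(\cX)$, and convexity of $\bF(\cX)$ yields $\operatorname{conv}(\Pi(\cX))\subseteq\bF(\cX)$.

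For the reverse inclusion, which is the crux, I would take $\cY\in\bF(\cX)$ and use the fact (recalled before the lemma) that $B_1(\n)=\operatorname{conv}(\Pi(\n))$ together with Carath\'eodory's theorem to write $\cY=\sum_i t_i\cP_i$ with $\cP_i\in\Pi(\n)$, $t_i>0$, $\sum_i t_i=1$. Then
\[
\|\cX\|_{\infty}=\Re\langle\cY,\cX\rangle=\sum_i t_i\,\Re\langle\cP_i,\cX\rangle,
\]
while each summand obeys $\Re\langle\cP_i,\cX\rangle\le\|\cX\|_{\infty}$ by the definition of the spectral norm. A convex combination of reals that are all $\le\|\cX\|_{\infty}$ can equal $\|\cX\|_{\infty}$ only if every participating term equals $\|\cX\|_{\infty}$; hence each $\cP_i\in\Pi(\cX)$, so $\cY\in\operatorname{conv}(\Pi(\cX))$. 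This gives $\bF(\cX)=\operatorname{conv}(\Pi(\cX))$.

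Finally, for the converse direction I would argue from the definition of an exposed face as the intersection of $B_1(\n)$ with a supporting hyperplane. Such a hyperplane has some nonzero outer normal $\cX$ and takes the form $\{\cY:\Re\langle\cY,\cX\rangle=c\}$ with $c=\sup\{\Re\langle\cY,\cX\rangle:\cY\in B_1(\n)\}$; the same duality formula identifies this supremum with $\|\cX\|_{\infty}$, forcing $c=\|\cX\|_{\infty}$, so the exposed face is exactly $\bF(\cX)$. Conversely each $\bF(\cX)$ is exposed by construction, which closes the correspondence. The only point needing genuine care is the ``equality forces extremality'' step in the reverse inclusion, where one must ensure that the decomposition of $\cY$ consists of \emph{product} states and that the phase invariance of $\Pi(\n)$ makes $\max|\langle\cdot,\cX\rangle|$ and $\max\Re\langle\cdot,\cX\rangle$ coincide; everything else is routine convex analysis.
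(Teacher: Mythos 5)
Your proposal is correct and follows essentially the same route as the paper's proof: both directions of $\operatorname{conv}(\Pi(\cX))=\bF(\cX)$ are handled by writing $\cY\in\bF(\cX)$ as a convex combination of product states (the extreme points of $B_1(\n)$) and using the averaging argument that forces each $\Re\langle\cP_i,\cX\rangle=\|\cX\|_{\infty}$, and the converse is the same observation that any supporting hyperplane is $\phi_{\cX}$ with value $\|\cX\|_{\infty}$ by duality. Your explicit invocation of Carath\'eodory and of compactness of $\Pi(\n)$, and your closing remark that $\max|\langle\cdot,\cX\rangle|$ and $\max\Re\langle\cdot,\cX\rangle$ coincide by phase invariance, merely make precise points the paper treats as established beforehand.
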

\begin{proof}
Every exposed face is of the form $\bF(\cX)$.  Without loss of generality we can assume that $\cX$ is a state.  Assume now that $\cX$ is a state and consider $\bF(\cX)$.  Since the linear functional $\xi(\cT)=\Re\langle \cT,\cX\rangle$ is a supporting hyperplane of $B_1(\n)$ it follows that $\bF(\cX)$ is a facet.  Let $\Pi(\cX)$ be defined as above.  Then $\Pi(\cX)$ is a closed subset of $\Pi(\n)$.  Assume that $\cY$ is in a convex hull of $\Pi(\cX)$:
\begin{eqnarray*}
\cY=\sum_{i=1}^{r}\alpha_i \cP_i, \cP_i\in \Pi(\cX),\quad \alpha_i>0,\sum_{i=1}^r \alpha_i=1.
\end{eqnarray*}
Then $\|\cY\|_1\le \sum_{i=1}^r\alpha_i \|\cP_i\|_1=\sum_{i=1}^r \alpha_i=1$.
Furthermore
\[\Re \langle\cX,\cY\rangle=\sum_{i=1}^r \alpha_i \langle\cP_i,\cX\rangle=\|\cX\|_1.\]
Thus $\cY\in \bF(\cX)$.  

Assume that $\cY\in\bF(\cX)$.  As $\cY\in B_1(\n)$, $\cY$ is a convex combination of the extreme points of $B_1(\n)$: 
\begin{eqnarray*}
\cY=\sum_{i=1}^{r}\alpha_i \cP_i, \cP_i\in \Pi(\n),\quad \alpha_i>0, \sum_{i=1}^r \alpha_i=1.
\end{eqnarray*}
Hence $\Re \langle\cX,\cY\rangle=\sum_{i=1}^r \alpha_i \langle\cX,\cP_i\rangle\le \|\cX\|_1$.  Since $\cY\in\bF(\cX)$ it follows that $\cP_i\in\Pi(\cX)$ for $i\in[r]$. 
\end{proof}

As in \cite[Proposition 4.3]{FL16} one can generalize Lemma \ref{facdes} to an exposed face of $B_{\nu}$, where $\Pi(\n)$ is replaced by the set of the extreme points of $B_{\nu}$.
The following corollary of Lemma \ref{facdes} is given by  \cite[Lemma 4.1]{FL18}:
\begin{corollary}\label{nucnrmcert}
Let $\cT\in\C^{\n}\setminus\{0\}$ and assume that  $\cT=\sum_{i=1}^r \otimes_{j=1}^d \x_{i,j}$, where $\otimes_{j=1}^d \x_{i,j}\ne 0$ for $i\in[r]$. Then $\|\cT\|_1\le \sum_{i=1}^r \prod_{j=1}^d \|\x_{i,j}\|$.  Equality holds if and only if there exists $\cB\in\C^{\n}\setminus\{0\}$ such that  $\Re \langle\cB, \otimes_{j=1}^d\x_{i,j}\rangle=\|\cB\|_{\infty}\prod_{j=1}^d \|\x_{i,j}\|$ for $i\in[r]$.
\end{corollary}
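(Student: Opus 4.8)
The plan is to derive both parts from the duality between the spectral and nuclear norms together with the description of the exposed faces of $\rB_1(\n)$ in Lemma \ref{facdes}. Throughout I write $\cP_i=\otimes_{j=1}^d\x_{i,j}$ and $c_i=\prod_{j=1}^d\|\x_{i,j}\|>0$, so that $\cP_i/c_i\in\Pi(\n)$ (it has unit Hilbert--Schmidt norm and is a product state), whence $\|\cP_i\|_1=c_i$; put $S=\sum_{i=1}^r c_i$. For the inequality I note that $\cT/S=\sum_{i=1}^r (c_i/S)(\cP_i/c_i)$ is a convex combination of points of the unit ball $\rB_1(\n)$, so $\|\cT/S\|_1\le 1$, i.e. $\|\cT\|_1\le S$. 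This is the first assertion.

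For the easy direction of the equivalence I would assume such a $\cB$ exists and, rescaling, take $\|\cB\|_\infty=1$. Then $S=\sum_i c_i=\sum_i\Re\langle\cB,\cP_i\rangle=\Re\langle\cB,\cT\rangle\le\|\cB\|_\infty\|\cT\|_1=\|\cT\|_1\le S$, so every inequality is an equality and $\|\cT\|_1=S$. (This is where I read the certificate as involving the spectral norm $\|\cB\|_\infty$, which is the norm dual to $\|\cdot\|_1$; the $\|\cB\|_1$ in the statement should be $\|\cB\|_\infty$, in line with the witness produced in Lemma \ref{facdes}.)

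The substantive direction is the converse. Assuming $\|\cT\|_1=S$, I would invoke the fact that $\|\cdot\|_1$ is dual to the spectral norm to obtain a witness $\cB$ with $\|\cB\|_\infty=1$ and $\Re\langle\cB,\cT\rangle=\|\cT\|_1=S$. Then $\cT/S$ lies on the unit sphere of $\|\cdot\|_1$ and satisfies $\Re\langle\cB,\cT/S\rangle=1=\|\cB\|_\infty$, so by \eqref{Xfacedef} it belongs to the exposed face $\bF(\cB)$. Expressing $\cT/S=\sum_i (c_i/S)(\cP_i/c_i)$ as a convex combination with strictly positive weights of the extreme points $\cP_i/c_i\in\Pi(\n)$ of $\rB_1(\n)$, Lemma \ref{facdes} forces each $\cP_i/c_i\in\Pi(\cB)$, that is $\Re\langle\cB,\cP_i/c_i\rangle=\|\cB\|_\infty=1$. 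Multiplying through by $c_i$ yields $\Re\langle\cB,\otimes_{j=1}^d\x_{i,j}\rangle=\|\cB\|_\infty\prod_{j=1}^d\|\x_{i,j}\|$ for every $i\in[r]$, the required certificate.

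The main obstacle is precisely this last step: passing from the single scalar identity $\|\cT\|_1=S$ to the pointwise statement that the \emph{same} functional $\Re\langle\cB,\cdot\rangle$ exposes \emph{every} summand. This is exactly what the face structure supplies — a point of a proper face of a convex body cannot sit in the relative interior of a segment with an endpoint off the face, so a positive-weight convex representation of a point of $\bF(\cB)$ can only use points of $\bF(\cB)$. The identification of $\bF(\cB)$ with the convex hull of $\Pi(\cB)$ in Lemma \ref{facdes} then pins each normalized summand $\cP_i/c_i$ to $\Pi(\cB)$, which is the crux of the argument.
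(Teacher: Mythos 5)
Your proof is correct and is essentially the paper's intended argument: the paper derives this corollary from Lemma \ref{facdes} (citing FL18, Lemma 4.1), exactly as you do — produce a dual certificate $\cB$ attaining $\|\cT\|_1=\max\{\Re\langle\cT,\cX\rangle,\|\cX\|_\infty\le 1\}$, place $\cT/\|\cT\|_1$ in the exposed face $\bF(\cB)$, and use the face/extreme-point property (the same step carried out in the last paragraph of the paper's proof of Lemma \ref{facdes}) to force every normalized summand into $\Pi(\cB)$. You are also right to read $\|\cB\|_1$ in the statement as a slip for the spectral norm $\|\cB\|_\infty$ (the identical $1\leftrightarrow\infty$ slip appears inside the paper's proof of Lemma \ref{facdes}); taken literally, the condition with $\|\cB\|_1$ would force $\cB$ to be rank one with all summands parallel to it, so the ``only if'' direction would already fail for $\cT=\e_1\otimes\e_1+\e_2\otimes\e_2$.
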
 

It is plausible to assume that  the generic nuclear rank corresponds to a generic facet of $\rB_1(\n)$.    More precisely, $r_{\textrm{gen}}^{\textrm{nucl}}(\n)$ is $1$ plus the dimension of the generic facet of $\rB_1(\n)$.  
By definition we know that
$r_{\textrm{gen}}^{\textrm{nucl}}(\n)\ge r_{\textrm{gen}}(\n)$. Caratheodory's theorem implies that $r_{\textrm{max}}^{\textrm{nucl}}(\n)$ is at most $1$ plus the dimension of the  facet of $\rB_1(\n)$ with maximum dimension.  This implies that $r_{\textrm{max}}^{\textrm{nucl}}(\n)\ge r_{\textrm{max}}(\n)$.

To find the generic nuclear rank one can do as follows:  Choose at random state $\cT\in\C^n$.  Then $\cY=\frac{1}{\|\cT\|_1}\cT$ will be an interior point of a generic facet $\bF$ of $B_1(\n)$.  Let $r$ be the number of rank-one components in a numerical minimal decomposition of $\cT$ as in Corollary \ref{nucnrmcert}.  Then  $r$ is the value of $r_{\textrm{gen}}^{\textrm{nucl}}(\n)$.  One can find numerically the nuclear norm of $\cT$ using an algorithm suggested in \cite{DFLW17}.

One of the main advantages of the nuclear rank of a tensor is that it behaves as the rank of a  matrix, in the sense that, the nuclear rank is a lower semicontinuous function \cite{FL18}.  Hence in the case of the nuclear rank there is no need to introduce its border rank.

Consider the subspace of symmetric tensors $\rS^d\C^n\subset \C^{n^{\times d}}$.
Then the dual version of the theorem of Banach \cite{Ban38} claims \cite{FL18}:
\begin{eqnarray}
\|\cS\|_1=\min\Big\{\sum_{i=1}^r \|\x_i\|^d, \; \cS=\sum_{i=1}^r \x_i^{\otimes d}\Big\}, \quad \cS\in\rS^d\C^n.
\end{eqnarray}
The minimal $r$ in the above minimal decomposition of $\cS\in\rS^d\C^n$ is called the 
\emph{symmetric nuclear rank} and is denoted
as $r_{\textrm{s}}^{\textrm{nucl}}(\cS)$\label{eq:def_symmetricnuclearrank}. 
Observe that 
  $r_{\textrm{s}}^{\textrm{nucl}}(\cS)\ge r^{\textrm{nucl}}(\cS)$,  
   since in the definition of the former quantity 
    we restrict the decomposition of $S$ to any combination of symmetric tensors of
     rank one.
 
 Denote $\rB_{1,s}(n^{\times d})=\rB_{1}(n^{\times d})\cap\cS^d\C^n$.  Then $\rB_{1,s}(n^{\times d})$ represents the unit ball in sense of the nuclear norm restricted to symmetric tensors.  The above characterization of $\|\cS\|_1$ yields that the extreme points of $\rB_{1,s}(n^{\times d})$ are $\Pi_s(n^{\times d})=\Pi(n^{\times d})\cap \rS^d\C^n$.  

 To have a better understanding of how generic and maximum nuclear ranks are related to facets of unit balls we discuss the matrix case. 
\subsection{Exposed faces and facets of matrix nuclear and spectral norms}\label{subsec:matnnrm} 
In this subsection we consider the case of $m\times n$ matrices, where $2\le m\le n$.
Recall that the inner product in $\C^{m\times n}$ is $\tr AB^*$, for $A,B\in\C^{m\times n}$.
For $A\in\C^{m\times n}\setminus\{0\}$ the singular value decomposition 
reads  
\begin{eqnarray*}
A=\sum_{i=1}^r \sigma_i(A)\bu_i\bv_i^*, \; r=r(A), \quad  \sigma_1(A)\ge\ldots\ge\sigma_r(A)>0=\sigma_{r+1}(A)=\cdots,\\ 
 A\bv_i=\sigma_i(A)\bu_i,\; A^*\bu_i=\sigma_i(A)\bv_i,\;
\bu_i\in\C^{m},\bv_i\in\C^n, \bu_i^*\bu_j=\bv_i^*\bv_j=\delta_{ij}, i,j\in[r].
\end{eqnarray*}
The vectors $\bu_i,\bv_i$ are called the left and the right singular vectors of $A$ corresponding to the i-th singular value $\sigma_i(A)$.  Furthermore,  $\|A\|_{\infty}=\sigma_1(A)$ and $\|A\|_1=\sum_{i=1}^m \sigma_i(A)$.
 
The following results are likely to be known, but  we prove them for completeness:
\begin{theorem}\label{exfcspnucnrm} Assume that $2\le m\le n$.  Denote by $\rB_{\infty}(m,n), \rB_1(m,n)\subset \C^{m\times n}$ the unit balls with respect to
 the spectral and nuclear norm respectively.  Then
\begin{enumerate}
\item Every exposed face of $\rB_1(m,n)$ has dimension $k^2-1$ , for $k\in[m]$.  It is given by $X\in\C^{m\times n}$ normalized by the condition $1=\sigma_1(X)=\cdots =\sigma_k(X)>\sigma_{k+1}(X)$.  Let $\bu_1,\ldots,\bu_k\in\C^m$ and $\bv_1,\ldots,\bv_k\in\C^n$ be two orthonormal systems corresponding to the left and the right singular eigenvectors corresponding  to the singular value $1$ of $X$.  Then the face $\bF(X)$ is a convex combination of rank-one matrices of the form $\bu\bv^*$, where $\bu$ is a unit vector in span$(\bu_1,\ldots,\bu_k)$ and $\bv=X^*\bu$.
For $k=1$, $\bF(X)=\{\bu_1\bv_1^*\}$ is an extreme point of $\rB_1(m,n)$.  The face $\bF(X)$ is a facet if and only if $k=m$.
\item  An exposed face of  $\bF\triangleleft \rB_{\infty}(m,n)$ is of dimension $2(m-k)(n-k)$ for $k\in[m]$.  It is of the following form:  Fix orthonormal systems $\bu_1,\ldots,\bu_k\in\C^m$ and $\bv_1,\ldots,\bv_k\in\C^n$.  Then
\begin{eqnarray*}
\bF=\big\{X\in\C^{m\times n}, \|X\|_{\infty}=1, X\bv_i=\bu_i, X^*\bv_i=\bu_i, i\in[k]\big\}.
\end{eqnarray*}
$\bF$ is a facet if and only if $k=1$ and $\bF$ contains an extreme point if and only if $k=m$.  Every extreme point $X\in\rB_{\infty}(\n)$ is an exposed face and is satisfies $X X^*=I_m$.
\end{enumerate}
\end{theorem}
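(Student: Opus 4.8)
The plan is to run both parts through the description \eqref{Xfacedef} of exposed faces, specialized to matrices by the singular value decomposition. I will write each exposed face as $\bF(X)=\{Y\in \rB_\nu,\ \Re\langle Y,X\rangle=\nu^\vee(X)\}$, where for the nuclear ball $\nu=\|\cdot\|_1$, $\nu^\vee=\|\cdot\|_\infty$, and for the spectral ball $\nu=\|\cdot\|_\infty$, $\nu^\vee=\|\cdot\|_1$. Fixing an exposing state $X$ with SVD $X=\sum_{i=1}^{r(X)}\sigma_i(X)\bu_i\bv_i^*$, I would in each case read the face off from the equality case of the duality bound $\Re\langle Y,X\rangle\le \nu^\vee(X)\,\nu(Y)$.

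For part (1) I would normalize $\|X\|_\infty=1$ and let $k$ be the multiplicity of the top singular value, so $1=\sigma_1(X)=\cdots=\sigma_k(X)>\sigma_{k+1}(X)$. By the extension of Lemma \ref{facdes} to extreme points, $\bF(X)$ is the convex hull of the rank one states $\bu\bv^*$ with $\|\bu\|=\|\bv\|=1$ and $\Re\langle \bu\bv^*,X\rangle=1$. Since $\Re\langle\bu\bv^*,X\rangle=\Re(\bv^*X^*\bu)\le\|X^*\bu\|\le1$, equality via Cauchy--Schwarz forces $\bu\in\mathrm{span}(\bu_1,\ldots,\bu_k)$ and $\bv=X^*\bu$, which is exactly the asserted description. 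Writing $\bu=U_k\ba$ with $U_k=[\bu_1|\cdots|\bu_k]$ and $\|\ba\|=1$, one gets $X^*\bu=V_k\ba$ with $V_k=[\bv_1|\cdots|\bv_k]$, so $\bu\bv^*=U_k(\ba\ba^*)V_k^*$ and hence $\bF(X)=\{U_k\rho V_k^*:\ \rho=\rho^*\ge0,\ \tr\rho=1\}$ is the affine image of the $k\times k$ density matrices, of real dimension $k^2-1$; the case $k=1$ is the single extreme point $\bu_1\bv_1^*$. As $k$ runs over $[m]$ and $k^2-1$ is strictly increasing, a $(k<m)$-face embeds in an $(k=m)$-face by $\rho\mapsto\diag(\rho,0)$, so the facets (maximal proper faces) are exactly the $k=m$ ones; I would invoke that every face lies in an exposed face to exclude non-exposed enlargements.

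For part (2) I would take $X=\sum_{i=1}^k\sigma_i\bu_i\bv_i^*$ with $k=r(X)$ and all $\sigma_i>0$, so $\nu^\vee(X)=\sum_i\sigma_i$. For $\|Y\|_\infty\le1$ one has $\Re\langle Y,X\rangle=\sum_i\sigma_i\Re(\bu_i^*Y\bv_i)\le\sum_i\sigma_i$, with equality iff $\bu_i^*Y\bv_i=1$ for all $i$; since $\|Y\bv_i\|\le1$ this forces $Y\bv_i=\bu_i$ and, dually, $Y^*\bu_i=\bv_i$, giving precisely the displayed face. Decomposing $Y$ in orthonormal bases adapted to $\mathrm{span}(\bu_1,\ldots,\bu_k)$ and its complement (rows) and to $\mathrm{span}(\bv_1,\ldots,\bv_k)$ and its complement (columns), the constraints pin $Y=\diag(I_k,Y_{22})$ with $Y_{22}\in\C^{(m-k)\times(n-k)}$ free subject to $\|Y_{22}\|_\infty\le1$. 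Thus $\bF(X)$ is affinely isomorphic to $\rB_\infty(m-k,n-k)$, of real dimension $2(m-k)(n-k)$. This is strictly decreasing in $k$ and a $(k\ge2)$-face embeds in a $(k-1)$-face (drop the $\sigma_k$ term), so the facets occur at $k=1$, while $\bF(X)$ collapses to a single point—an extreme point—exactly when $k=m$. To finish, I would show the extreme points are the co-isometries $XX^*=I_m$: if some $\sigma_i(X)<1$ then $X\pm t\bu_i\bv_i^*\in\rB_\infty$ for small $t>0$, so $X$ is not extreme; conversely if $XX^*=I_m$ then $X^*$ is an isometry and strict convexity of the Euclidean norm applied to $X^*\bu=\tfrac12(A^*+B^*)\bu$ forces $A=B=X$ whenever $X=\tfrac12(A+B)$ with $A,B\in\rB_\infty$; each such $X$ is its own $(k=m)$ face $\bF(X)=\{X\}$ and is therefore exposed.

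The dimension counts are routine linear algebra once the faces are identified. The step needing the most care is the identification of which rank one states (respectively which $Y$) saturate the duality bound, through the equality case of Cauchy--Schwarz combined with the multiplicity structure of the extreme singular value. The one genuinely non-computational point will be the facet/maximality argument, where I must use that every face is contained in an exposed face in order to rule out maximal proper faces that are not of the form $\bF(X)$.
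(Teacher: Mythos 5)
Your proposal is correct and follows essentially the same route as the paper's proof: both identify the exposed faces through the equality case of the duality pairing $\Re\langle Y,X\rangle\le\nu^\vee(X)\nu(Y)$ combined with the SVD, parametrize the nuclear-ball face by $k\times k$ density matrices (dimension $k^2-1$) and the spectral-ball face by the block form $I_k\oplus Y_{22}$ (dimension $2(m-k)(n-k)$), establish the facet claims by strict inclusion into a face of lower $k$, and rule out non-extreme points by the perturbation $X\pm\varepsilon\bu_i\bv_i^*$. The only cosmetic difference is your converse extreme-point argument via strict convexity of the Euclidean ball applied to $X^*\bu$, where the paper instead observes that a co-isometry is itself a zero-dimensional exposed face $\bF(Y)=\{X\}$; both are valid.
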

\begin{proof}  (1) Let us consider an exposed face of $\rB_1(m,n)$.   By Lemma \ref{facdes}  it is of the form $\bF(X)=\{A\in\C^{m\times n}, \|A\|_1=1, \tr AX^*=\sigma_1(X)\}$ for some $X\in \C^{m\times n}\setminus\{0\}$.  Without loss of generality we can assume that $\sigma_1(X)=1$.
Suppose that $1=\sigma_1(X)=\cdots=\sigma_k(X)>\sigma_{k+1}(X)$.
Let $\bu_1,\ldots, \bu_k\in\C^m$ and $\bv_1,\ldots,\bv_k\in\C^n$ are two sets of orthonormal left and right singular vectors of $X$ corresponding to the first singular value  of $X$.  It is straightforward to show \cite{Fribook} that $\Re \tr X \cP^*=\sigma_1(X)$ for $\cP=\bu\bv^*\in \Pi(m,n)$ if and only if $\bu$ is a unit vector in span$(\bu_1,\ldots,\bu_k)$ and $\bv=X^*\bu$.   Suppose that $Z\in \bF(X)$.  Then the singular value decomposition of $Z$ is $Z=\sum_{j=1}^r \sigma_j(Z)\x_j\y_j^*$, where $r=r(Z)$.   Recall that $\|Z\|_1=\sum_{j=1}^r \sigma_j(Z)=1$.  Lemma \ref{facdes} yields that $\x_i\in\C^m$ has unit length,  $\x_i\in$ span$(\bu_1,\ldots\bu_k)$ and $\y_i=X^*\x_i$ for $i\in[r]$.  That is $\bF(X)$ is a convex hull of $\bu\bv^*$, where $\bu$ of length one is in span$(\bu_1,\ldots,\bu_k)$ and $\bv=X^*\bu$.  We claim that the dimension of this face is $k^2-1$.  Indeed, without loss of generality we may assume that $m=n=k$ and $X=I_k$. 
 Then the face corresponds to
  all density matrices $\rho$ of order $k$, 
 which are  hermitian, positive semidefnite,
 $\rho=\rho^*\ge 0$,  and normalized, Tr$\rho=1$.
 The real dimension of this convex set is the real dimension of all $k\times k$ hermitian matrices of trace $1$, which is $k^2-1$.

The face 
 $\bF(X)$ is  maximally exposed  if and only if $k=m$, i.e., $XX^*=I_m$.
Indeed, 
if $k<m$ then extend the orthonormal systems $\{\bu_1,\ldots,\bu_k\}, \{\bv_1,\ldots,\bv_k\}$ to an orthonormal system $\{\bu_1,\ldots,\bu_m\}, \{\bv_1,\ldots,\bv_m \}$ and set $C=\sum_{i=1}^m \bu_i\bv_i^*$.  It now follows that $\bF(X)\subsetneq \bF(C)$.

(2)  
Recall that an exposed face of $\rB_{\infty}(\n)$ is $\bF(Y)=\{X\in\rS_{\infty}(\n), \tr XY^*=\|Y\|_1\}$.  Without loss of generality we can assume that $\|Y\|_1=1$.  Assume that the SVD decomposition reads $Y=\sum_{i=1}^r \sigma_1(Y)\bu_i\bv_i^*$, where $r=r(Y)$, $\sigma_1(Y)\ge \cdots\sigma_r(Y)>0$ and $\sum_{i=1}^r \sigma_1(Y)=1$.  Assume that $X\in\rS_{\infty}(\n)$.  Then $\Re\tr X(\bu_i \bv_i^*)^*\le \|X\|_{\infty}=1$.  Equality holds if and only if $X\bv_i=\bu_i$ and $X^*\bu_i=\bv_i$ for $i\in[r]$.  Thus $\bF(Y)$ consists of all $X\in \rB_{\infty}(\n)$ satisfying  $X\bv_i=\bu_i$ and $X^*\bu_i=\bv_i$ for $i\in[r]$.  In particular, $\sigma_1(X)=\cdots=\sigma_k(X)=1\ge \sigma_{k+1}(X)$.  By choosing an orthonormal bases $\bu_1,\ldots,\bu_m\in\C^m$ and $\bv_1,\ldots,\bv_n\in\C^n$ we see that $X$ is a direct sum $I_k\oplus X'$, where $X'\in \C^{(m-k)\times (n-k)}$ and $\|X'\|_{\infty}\le 1$.  Hence the real dimension of the face $\bF(Y)$ is $2(m-k)(n-k)$.  

Observe first that $\bF(Y)$ is a facet if $Y$ is a rank-one matrix.  In this case the dimension of the facet is $2(m-1)(n-1)$.  The face $\bF(Y)$ is zero dimensional if and only if $r(Y)=m$.  In this case $\bF(Y)=\{X\}$, where $X=\sum_{i=1}^m \bu_i\bv_i^*$.  Thus $X$ is an extreme point of $\rB_{\infty}$.  It is left to show that every extreme point $X$ of  $\rB_{\infty}$ is of this form.  Let $X\in\rB_1(\n)$ and consider the full SVD decomposition of $X=\sum_{i=1}^m \sigma_i(X)\bu_i\bv_i^*$, where $\bu_1,\ldots,\bu_m$ and $\bv_1,\ldots,\bv_m$ are orthonormal vectors.  Furthermore, $1=\sigma_1(X)\ge \cdots\ge \sigma_m(X)\ge 0$.  Assume to the contrary that $1>\sigma_m(X)$.  Choose $\varepsilon>0$ such that $1>\sigma_m(X)+\varepsilon$.  Then 
\begin{eqnarray*}
X(\pm\varepsilon)=(\sigma_m(X)\pm\varepsilon)\bu_m\bv_m^*+\sum_{i=1}^{m-1}
\sigma_i(X)\bu_i\bv_i^*\in \rB_1(\n).
\end{eqnarray*}
Since 
$X=\frac{1}{2}(X(\varepsilon)+X(-\varepsilon))$ its is not an extreme point.
\end{proof}
 \subsection{The nuclear rank of the $|GHZ\rangle$ state}\label{subsec:GHZ}
We now discuss the nuclear rank of $3$-qubits.  First observe that the state $|GHZ\rangle$ has nuclear rank $2$. Indeed, up to a normalization constant we have 
\begin{eqnarray}
|GHZ\rangle=\e_1^{\otimes 3}+\e_2^{\otimes 3}=\e_1\otimes\e_1^{\otimes 2}+\e_2\otimes \e_2^{\otimes 2}.
\end{eqnarray}
Written in physics notation, 
$|GHZ\rangle =|000\rangle + |111\rangle$,
this state has a two-term representation, 
so its rank is not more than two,
 $\||GHZ\rangle\|_1\le 2$.
 
   On the other hand, when we unfold $|GHZ\rangle$ in mode $1$ we obtain
    $T\in\C^{2\times 4}$,  for which $\|T\|_1=2$. 
    Hence the above decomposition of $|GHZ\rangle$ is a minimal decomposition.  As $r(|GHZ\rangle)=2$ we deduce that $r^{\textrm{nucl}}(|GHZ\rangle)=2$.  Let us try to find the facet that contains $|GHZ\rangle$.  We claim that we can choose the supporting plane $\xi(\cT)=\Re\langle |GZW\rangle,\cT\rangle$.  Indeed, $|GHZ\rangle$ is a symmetric tensor corresponding  to symmetric polynomial $x_1^3+x_2^3$.  Then $\max\{\Re{x_1^3+x_2^3}, |x_1|^2+|x_2|^2=1\}$ is achieved at the vectors $\zeta\e_1,\zeta\e_2$ where $\zeta^3=1$.  By Corollary \ref{nucnrmcert} we see that $\frac{1}{2}|GHZ\rangle\in \bF(|GHZ\rangle)$.

\subsection{Generic and maximum nuclear rank of symmetric $3$-qubit states}\label{subsec:gn3qub}
We now discuss in details the exposed facets and faces of the nuclear norm ball in $\rS^3\C^2$.  Denote by $B_1(3,2)\subset \rS^3\C^2 $ the unit ball of the nuclear norm.  Denote by $\partial B_1(3,2)$ the boundary of $B_1(2,3)$, which is the unit sphere of the nuclear ball.
Recall that the complex dimension of $\rS^3\C^2$ is $4$.  Hence, its real dimension is $8$.
Any supporting hyperplane of $B_1(3,2)$ is of the form
$\{S\in B_1(3,2),\Re\langle \cS,\cT\rangle=\|\cT\|_{\infty}\}$ for some $\cT\in \rS^3\C^2\setminus\{0\}$.
For simplicity of notation we will identify $\cT$ with a homogeneous polynomial of degree $3$ in $x_1,x_2$.

The extreme points of $B_1(3,2)$ are rank-one symmetric tensors of the $\bu^{\otimes 3}$ with $\|\bu\|=1$. Such symmetric tensor determines $\bu$ up to a third root of unity. That is, we can replace $\bu$ by $\zeta\bu$, where $\zeta^3=1$.  Thus all extreme points can be identified with the $3$-dimensional real sphere $\rS^3$ in the complex space $\C^2$, quotient by the action of multiplication by third roots of unity.  The supporting hyperplane is a corresponding nonzero homogeneous polynomial $(ax_1+bx_2)^3$.  Thus the nuclear rank of an extreme point of $B_1(3,2)$ is $1$, and the dimension of the exposed face is $0$.  

We next discuss exposed face of dimension $1$.  Let us first consider the example of
the  $|GHZ\rangle$ state.  It corresponds to the polynomial $x_1^3+x_2^3$,
up to  multiplication by a constant.  Hence $\||GHZ\rangle\|_{\infty}=1$ and the maximum is achieved for the two extreme points of $B_1(3,2)$: $\be_1^{\otimes 3},\be_2^{\otimes 3}$,
and any convex combination of these two points.  Thus, the exposed face corresponding to $|GHZ\rangle$ is a convex combination of two unit orthogonal vectors in $\C^2$.  The variety of all these $1$-dimensional exposed faces is of real dimension $4$  --  three for $\bx$ of length one, and an extra dimension for an orthogonal vector $\by$ of length one.
 It is not known to the authors whether there exist
  additional exposed faces of dimension $1$.

 To find the spectral norm of a tensor one may follow the approach which is described in \cite{FW18}. 
 For a generic tensor in $\rS^3\C^2$ one can find all critical points for $\max\{\Re \langle \cS,\bu^{\otimes 3}\rangle, \|\bu\|=1\}$.
It will have usually at most $5$ critical $\bu$ viewed as a point on the Riemann sphere. Those points that correspond to $2$ points of maximum will yield a supporting hyperplane of $B_1(3,2)$ which supports an exposed face of dimension $1$. There are exceptional cases that are discussed in \cite{FW18}.

Let us now consider the three-qubit state $|W_3\rangle=(|100\rangle+ |010\rangle +|001\rangle) / \sqrt{3}$.  
Recall that  $|W_3\rangle$ is a symmetric tensor corresponding to the polynomial $f=\sqrt{3}x_1^2x_2$.   Hence 
\begin{equation}\||W_3\rangle\|_{\infty}=\max\big\{|f(x_1,x_2)|=\sqrt{3}|x_1|^2|x_2|, |x_1|^2+|x_2|^2=1\big\}=\frac{2}{3},\end{equation}
as $|x_1|=\frac{\sqrt{2}}{\sqrt{3}}$, $|x_2|=\frac{1}{\sqrt{3}}.$
Recall that among all $3$-qubit states $|W_3\rangle$ has the minimal spectral norm, i.e., it has the highest geometrical measure of entanglement \cite{TWP09}.

In \cite[\S6]{FL18} we gave a nuclear decomposition of $|W_3\rangle$ with four terms.  We showed that $\||W_3\rangle\|_1=3/2=\||W_3\rangle\|_\infty^{-1}$. The last equality follows from \cite[Theorem 2.2]{DFLW17}, as $|W_3\rangle$ has the minimal spectral norm,   In particular, $|W_3\rangle$ has the maximum nuclear norm among the $3$-qubit states.

The four rank-one symmetric tensors of norm one that appear in the nuclear decomposition of $|W_3\rangle$, given in \cite[\S6]{FL18} are all rank-one symmetric states for which $|W_3\rangle$ achieves its spectral norms.  Thus $|W_3\rangle\in \bF(|W_3\rangle)$.

Let us now consider $\bF(|W_3\rangle)$ in the ball of the nuclear norm in $\rS^3\C^2$. Extreme points are all tensors $\bu^{\otimes 3},\|\bu\|=1$, such that
$\Re \langle |W_3\rangle, \bu^{\otimes 3}\rangle=\| |W_3\rangle\|_{\infty}$.  It is straightforward to show that $\bu=\Big(\zeta\sqrt{\frac{2}{3}}, \bar\zeta^2\frac{1}{\sqrt{3}}\Big)^\top$, where $|\zeta|=1$. 

Indeed, each rank-one tensor $\bu^{\otimes 3}, \bu=(a,b)\trans$, corresponds to the polynomial 
\begin{eqnarray*}
(ax_1+bx_2)^3=a^3x_1^3+a^2b(3x_1^2x_2)+ab^2(3x_1x_2^2)+b^3(x_2^3).
\end{eqnarray*}
Hence in the basis $\big\{x_1^3, 3x_1^2x_2, 3x_1x_2^2, x_3^3\big\}$ of homogeneous polynomials of degree $3$ in $x_1,x_2$ tensor $\bu^{\otimes 3}$ is represented by a vector 
$(a^3,a^2b, a b^2, b^3)$. Assume that $a\ne 0$.  Then this vector is $a^3(1,z,z^2,z^3)$ and $z=b/a$.  For the extreme points of the exposed face corresponding to $\bF(|W_3\rangle)$ one has $z=\eta \frac{1}{\sqrt{2}}$, where $\eta=\bar\zeta^3$.  So $\eta$ has an arbirary value on the unit circle in $\C$.
In particular, if we choose four pairwise distinct points on the unit circle $\eta_1,\ldots,\eta_4$, the four points $(1,\eta_i,\eta_i^2, \eta_i^3), i\in[4]$ are linearly independent, as their determinant has Vandermonde form and does not vanish.

Since we consider the convex hull of the extreme points of $\bF(|W_3\rangle)$, we need to know what is the real dimension of this convex set.  
We claim that the dimension is $4$:
\begin{theorem}\label{dimfw3}
Let 
\begin{eqnarray*}
\bF(|W_3\rangle)=\big\{\cS\in B_1(3,2), \Re \langle\cS,|W_3\rangle\rangle=\||W_3\rangle\|_{\infty}\big\}
\end{eqnarray*}
be the face of the ball corresponding to the real functional $\cS\mapsto \Re \langle\cS,|W_3\rangle\rangle$.  Then
\begin{enumerate}
\item The real dimension of $\bF(|W_3\rangle)$ is $4$.
\item The state $|W_3\rangle$ has the following nuclear decomposition:
\begin{eqnarray*}
|W_3\rangle=\frac{\sqrt{3}}{2}\Bigg(\Big(\sqrt{\frac{2}{3}}\e_1+\frac{1}{\sqrt{3}}\e_2\Big)^{\otimes 3}+\Big(\sqrt{\frac{2}{3}}\zeta \e_1+\frac{1}{\sqrt{3}}\bar\zeta^2\e_2\Big)^{\otimes 3}
\\+\Big(\sqrt{\frac{2}{3}}\bar\zeta\e_1+\frac{1}{\sqrt{3}}\zeta^2\e_2\Big)^{\otimes 3}\Bigg), \;\zeta=e^{2\pi\bi/9}.
\end{eqnarray*}
As the above nuclear decomposition is a sum of three symmetric tensors of rank-one, and $\rank |W_3\rangle=3$, we deduce:
 the nuclear rank of $|W_3\rangle$ is $3$, which is equal to its rank.
\item The nuclear rank of any $\cS\in  \bF(|W_3\rangle)$ is at most $4$.
\item The subgroup of $G$ of the two dimensional unitary group $U(2)$ that fixes either $|W_3\rangle$ or 
$\bF(|W_3\rangle)$ is one dimensional subgroup of the form 
\begin{eqnarray*}
G=\Big\{U\in\C^{2\times 2}, U=\left[\begin{array}{cc}\zeta&0\\0&\bar\zeta^2\end{array}\right], \zeta\in\C, |\zeta|=1\Big\}.
\end{eqnarray*}
\item The semialgebraic set of faces of the form $U\bF(|W_3\rangle), U\in U(2)$ has dimension $7$.
\end{enumerate}
\end{theorem}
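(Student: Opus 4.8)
The plan is to reduce all five assertions to the explicit description of the extreme points of $\bF(|W_3\rangle)$. By the symmetric analogue of Lemma~\ref{facdes} (the extension to $\rS^d\C^n$ noted after its proof), $\bF(|W_3\rangle)$ is the convex hull of the set $\Pi_s(|W_3\rangle)$ of symmetric product states $\bu^{\otimes3}$, $\|\bu\|=1$, at which $\Re\langle|W_3\rangle,\bu^{\otimes3}\rangle$ attains $\||W_3\rangle\|_\infty=\tfrac23$. As recorded just before the theorem these are exactly the $\bu(\zeta)^{\otimes3}$ with $\bu(\zeta)=(\sqrt{2/3}\,\zeta,\,(1/\sqrt3)\bar\zeta^2)\trans$ and $|\zeta|=1$, so $\Pi_s(|W_3\rangle)$ is a single closed circle. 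For (1) I would compute the affine hull of this circle: in the basis $\{x_1^3,3x_1^2x_2,3x_1x_2^2,x_2^3\}$ the vector of $\bu(\zeta)^{\otimes3}$ is $(a^3,a^2b,ab^2,b^3)$ with $a=\sqrt{2/3}\,\zeta$, $b=(1/\sqrt3)\bar\zeta^2$, whose four complex entries are proportional to $\zeta^3,1,\bar\zeta^3,\bar\zeta^6$. Writing $\zeta=e^{i\theta}$ and subtracting the mean (surviving only in the constant second entry), the curve becomes $\cos3\theta\,v_1+\sin3\theta\,v_2+\cos6\theta\,v_3+\sin6\theta\,v_4$ for four vectors $v_1,\dots,v_4\in\R^8$ with manifestly independent supports. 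Since $\{\cos3\theta,\sin3\theta,\cos6\theta,\sin6\theta\}$ are linearly independent functions, the real affine span is exactly $\operatorname{span}(v_1,\dots,v_4)$, of dimension $4$; as the affine hull of a set equals that of its convex hull, $\dim\bF(|W_3\rangle)=4$.

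For (2) I would substitute $\zeta\in\{1,\zeta_0,\bar\zeta_0\}$ with $\zeta_0=e^{2\pi\bi/9}$ into that coordinate vector and add. Because the first, third and fourth entries carry $\zeta^3,\bar\zeta^3,\bar\zeta^6$, summing over the three chosen $\zeta$ evaluates $1+\omega+\bar\omega=0$ with $\omega=\zeta_0^3=e^{2\pi\bi/3}$, so those three entries vanish; only the $\zeta$-independent second entry survives, producing a multiple of $|W_3\rangle$. This yields the stated identity and writes $|W_3\rangle$ (after normalization) as a convex combination of three points of $\Pi_s(|W_3\rangle)$; such a combination attains $\||W_3\rangle\|_1$ by Corollary~\ref{nucnrmcert} with common certificate $\cB=|W_3\rangle$, so $r^{\textrm{nucl}}(|W_3\rangle)\le3$. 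Combined with $r(|W_3\rangle)=3$ (Section~\ref{subsec:rankWd}) and $r\le r^{\textrm{nucl}}$ we get $r^{\textrm{nucl}}(|W_3\rangle)=3$.

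For (3), since $\Pi_s(|W_3\rangle)$ is connected and $\dim\bF(|W_3\rangle)=4$, the Fenchel--Bunt strengthening of Carath\'eodory's theorem shows every point of the hull is a convex combination of at most $4$ extreme points, each a symmetric rank-one tensor; any such combination is a nuclear decomposition (all summands lie in $\Pi_s(|W_3\rangle)$, so the common certificate $\cB=|W_3\rangle$ makes Corollary~\ref{nucnrmcert} apply), whence $r^{\textrm{nucl}}(\cS)\le4$ for all $\cS\in\bF(|W_3\rangle)$. For (4), $U\in U(2)$ acts by $\cS\mapsto U^{\otimes3}\cS$ and induces a rotation of the Bloch sphere $\bbP^1$. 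The extreme points form one latitude circle $\{[a:b]:|b/a|=1/\sqrt2\}$ whose poles $[1:0],[0:1]$ are the roots of the cubic $x_1^2x_2$ attached to $|W_3\rangle$; since this circle is not the equator ($|a|^2=2/3\neq\tfrac12$), a rotation carrying it to itself cannot swap the poles and must be a rotation about the polar axis. Hence any $U$ fixing $|W_3\rangle$ or stabilizing $\bF(|W_3\rangle)$ is diagonal, $U=\operatorname{diag}(\alpha,\beta)$. Then $U^{\otimes3}|W_3\rangle=\alpha^2\beta\,|W_3\rangle$, and using $U^{\otimes3}\bF(\cX)=\bF(U^{\otimes3}\cX)$ both fixing $|W_3\rangle$ and requiring $\bF(\alpha^2\beta|W_3\rangle)=\bF(|W_3\rangle)$ force the phase $\alpha^2\beta=1$; writing $\zeta=\alpha$ gives exactly $G=\{\operatorname{diag}(\zeta,\bar\zeta^2):|\zeta|=1\}$, so the point- and face-stabilizers coincide with $G$.

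Finally, for (5) the ``set of faces'' is the union $\bigcup_{U\in U(2)}U^{\otimes3}\bF(|W_3\rangle)$, i.e. the image of $\Phi\colon U(2)\times\bF(|W_3\rangle)\to\rS^3\C^2$, $(U,\cS)\mapsto U^{\otimes3}\cS$. Since $G$ stabilizes $\bF(|W_3\rangle)$, the map factors through the quotient by $g\cdot(U,\cS)=(Ug^{-1},g^{\otimes3}\cS)$, giving $\dim\operatorname{im}\Phi\le\dim U(2)+\dim\bF(|W_3\rangle)-\dim G=4+4-1=7$; equality follows by checking the generic fibre of $\Phi$ is a single $G$-orbit (equivalently that the Jacobian has rank $7$ generically), which rests on the generic $U(2)$-stabilizer of a point of $\bF(|W_3\rangle)$ being finite, as $G$ already moves a generic extreme-point combination by rotating the latitude circle. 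The main obstacle I expect is (1): pinning the affine dimension to \emph{exactly} $4$ — the lower bound via independence of the frequency vectors, and ruling out any hidden relation that would collapse it — is the technical heart, because this $4$ then controls both the Carath\'eodory count in (3) and the fibration count in (5).
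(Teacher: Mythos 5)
Your proposal is correct, and in three of the five parts it takes a genuinely different route from the paper. For (1) the paper reduces the circle of extreme points to the coordinates $(\xi,\bar\xi,\xi^2)$, $|\xi|=1$, and proves the dimension is exactly $4$ by explicitly manufacturing two $2$-dimensional disks, namely $(0,0,w)$ and $\frac{1}{2}(z,\bar z,0)$ with $|w|,|z|\le 1$, inside the convex hull; your Fourier argument -- the curve minus its mean is $\cos3\theta\,v_1+\sin3\theta\,v_2+\cos6\theta\,v_3+\sin6\theta\,v_4$ with the $v_i$ independent (disjoint supports in $\R^8$), so the affine hull is exactly $\mathrm{span}(v_1,\ldots,v_4)$ -- gets the upper and lower bounds in one stroke and is cleaner. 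For (3) the paper starts from Carath\'eodory's bound of $5$ and spends the bulk of its proof reducing $5$ to $4$ by a case analysis (perturbing along an affine dependence, then a rather delicate limiting argument on the boundary of an open set $\Delta\subset C^5$ of configurations); your appeal to the Fenchel--Bunt refinement of Carath\'eodory (a connected set in a $4$-dimensional affine hull, hence $4$ points suffice) eliminates that machinery entirely and is, if anything, tighter than the paper's sketch. For (4) the paper rules out non-diagonal stabilizers by a contraction argument (the matrix $|W_3\rangle\times\bar\x_2$ would have to be rank one, but has rank $2$ when $\x_2^*\e_1\ne0$), whereas you argue on the Bloch sphere: the extreme circle is a non-equatorial latitude ($|u_1|^2=2/3\ne\frac12$), so any rotation preserving it must fix the polar axis, forcing $U$ diagonal, and the phase condition $\alpha^2\beta=1$ then follows since $\bF(c\,|W_3\rangle)$ and $\bF(|W_3\rangle)$ are disjoint for unimodular $c\ne1$; both arguments are sound, yours making the $U(2)$-geometry visible, the paper's staying inside multilinear algebra. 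Parts (2) and (5) coincide in substance: your roots-of-unity cancellation ($1+\omega+\bar\omega=0$ in the entries carrying $\zeta^3,\bar\zeta^3,\bar\zeta^6$) is exactly the verification the paper dismisses as ``Straightforward,'' and in (5) both you and the paper leave the lower bound -- that the generic fibre of $(U,\cS)\mapsto U^{\otimes 3}\cS$ is a single $G$-orbit, so that the image has dimension $4+4-1=7$ -- at the level of an asserted dimension count; you at least state explicitly what remains to be checked there.
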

\begin{proof}
(1) Recall that our three symmetric tensors, which are the extreme points of $\bF(|W_3\rangle)$ correspond to linear forms $(ax_1+bx_2)^3$, where $a=\zeta s$ and $b=\bar\zeta^2 t$ where $s=\sqrt{2}/\sqrt{3}$ and $t=1/\sqrt{3}$.  By considering new variables $y_1=sx_1$ and $y_2=tx_2$, we have that the qubic forms are $(\zeta y_1+\bar\zeta^2 y_2)^3$.  Hence the Veronese coordinates of these cubic forms are
$(\zeta^3, 1, \bar\zeta^3, \bar\zeta^6)$.  Letting $\xi=\bar\zeta^3$ and rearranging the coordinates we have that the coordinates are $(1,\xi,\bar\xi, \xi^2)$, where $|\xi|=1$.  Since we consider convex (or affine) combination we can drop the first coordinate.  Thus we are looking at convex combinations of the vectors of $(\xi,\bar\xi,\xi^2)$, where $\xi$ is on the unit circle in $\C$.  Let us try to find a basis in the space of all real linear combinations whose sum is $1$.  
The vector $(-\xi,\overline{-\xi}, (-\xi)^2)=(-\xi,-\bar\xi, \xi^2)$ is such a point. Thus
\begin{eqnarray}
\big((\xi,\bar\xi,\xi^2)+(-\xi,-\bar\xi, \xi^2)\big)/2=(0,0,\xi^2), \quad |\xi|=1.
\end{eqnarray} 
 The set of all convex combinations of unimodular points, $|\eta|=1$,
  forms the unit disk, $|z|\le 1$, of real dimension $2$. 
   Thus the convex combination of all vectors 
of the form $(0,0,\xi^2)$ has a real dimension $2$.  Note that $(0,0,-\xi^2)$ is also in the convex set.  Hence
\begin{eqnarray}
\frac{1}{2}(\xi,\bar\xi,0)=\big((\xi,\bar\xi,\xi^2)+(0,0,-\xi^2)\big)/2.
\end{eqnarray}
Next, observe that the convex hull of the above vectors  is $1/2(z,\bar z,0)$ where $|z|\le 1$.  Thus its real dimension is $2$. Hence the convex combinations of vectors 
$1/2(z,\bar z,0)$ and $(0,0,w)$, where $z$ and $w$ are in a unit disk in $\C$ has real dimension $4$.  Therefore the real dimension of $\bF(|W_3\rangle)$ is $4$.  

\noindent
(2) Straightforward.

\noindent
(3) We claim that any point in $\bF(|W_3\rangle)$ is a convex combination of at most $4$ extreme points.   Note that any convex combination of the exteme points is of the form $(z,\bar z, w)$.  Hence it is enough to consider the convex combinations of vectors $(\xi,\xi^2)$ for $|\xi|=1$.
As the ambient subspace is $4$-real dimensional, Caratheodory's theorem shows that every point is a convex combination of at most $5$ extreme points.  Assume that $(z,w)$ is in this convex set and 
\begin{eqnarray*}
(z,w)=\sum_{i=1}^5 a_i(\xi_i,\xi_i^2), \quad a_i>0, \sum_{i=1}^5a_i=1.
\end{eqnarray*}
We assume that $\xi_i\ne \xi_j$ for $i\ne j$.
If the real span of $(\xi_1,\xi_1^2),\ldots,(\xi_5,\xi_5^2)$ is $3$-dimensional we are done.
Then, there must be a real nonzero  linear combination $\sum_{i=1}^5 b_i (\xi_i,\xi_i^2)=0$ such that $\sum_{i=1}^5 b_i=0$.  By considering the linear combination $\sum_{i=1}^5 (a_i+tb_i)(\xi_i,\xi_i^2)$  we can choose a positive $t>0$ such that $a_i+tb_i\ge 0$ for all $i\in [5]$ such that at least one $a_i+tb_i=0$.

Thus we are left with the case that every minimal convex combination that contains $(w,z)$ has exactly $5$ extreme points with positive coefficients and which have $4$ linear independent extreme points. 
Assume for simplicity that these linearly independent (over real domain) elements are $(\xi_1,\xi_1^2),\ldots,(\xi_4,\xi_4^2)$.  Then  $(\xi_5,\xi_5^2)=\sum_{i=1}^4 c_i(\xi_i,\xi_i^2)$. Suppose first that $\sum_{i=1}^4 c_i=1$.  Since $(\xi_5,\xi_5^2)$ is an extreme point we must have that $c_i<0$ for some $i\in[4]$.   Observe next that
\begin{eqnarray*} 
(z,w)=\sum_{i=1}^5 a_i(\xi_i,\xi_i^2)=\sum_{i=1}^4 (a_i+a_5c_i)(\xi_i,\xi_i^2).
\end{eqnarray*}
Let $a_i(t)=a_i+tc_i$ for $i\in[4]$ and $a_5(t)=a_5-t$.  Then 
\begin{eqnarray*}
\sum_{i=1}^5 a_i(t)=1, \quad(z,w)=\sum_{i=1}^5 a_i(t)(\xi_i,\xi_i^2).  
\end{eqnarray*}
Start to increase $t$ from $0$ to $a_5$ until $a_i(t)$ is zero.
In this case we obtained that $(z,w)$ is a convex combination of at most $4$ extreme points in $\bF(|W_3\rangle)$.

Thus, it is left to discuss the case where $(\xi_1,\xi_1^2),\ldots,(\xi_4,\xi_4^2)$ are linearly independent and $(\xi_5,\xi_5^2)$ is not an affine combination of 
$(\xi_1,\xi_1^2),\ldots,(\xi_4,\xi_4^2)$.  Note that the convex span of $(\xi_1,\xi_1^2),\ldots,(\xi_4,\xi_4^2)$ has real dimension $3$.  Furthermore, the convex span of $(\xi_1,\xi_1^2),\ldots,(\xi_5,\xi_5^2)$ has dimension $4$.

Let us denote by $\Delta=\Delta(w,z)$ these $5$ tuples of points $((\xi_{i_1},\xi_{i_1}1^2),\cdots,(\xi_{i_5},\xi_{i_5}^2))$, where $\{i_1,\ldots,i_5\}=[5]$.  This is an open set in $C^5$, where $C$ is a complex closed curve,
 $C:=\{(\zeta,\zeta^2)\in\C^2, |\zeta|=1\}$.  That is $\Delta\subset C^5$ is an open set, which is not $C^5$.  Now consider a boundary point of $\Delta$ in $C^5$.  This boundary point consists of $5$ tuples $\eta=((\eta_1,\eta_1^2),\ldots,(\eta_5,\eta_5^2)), |\eta_i|=1, i\in[5]$.  As this boundary point is a limit of  points in $\Delta$ it follows that $\eta\in \Delta$.  But then, either the five points $(\eta_i,\eta_i^2), i\in[5]$ span at most $3$-dimensional  subspace, or there are $4$ extreme points which are linearly independent and the fifth point is an affine combination of the remaining four.
In these cases we obtain that $(z,w)$ is a convex combination of $4$ extreme points.

\noindent
(4) We now do the dimension count: How big is the subgroup of $U(2)$ that fixes $\bF(|W_3\rangle)$.  It is equivalent to the subgroup that fixes $|W_3\rangle$.
It has a real dimension at least $1$: Assume that $U\be_1=\zeta\be_1, U\be_2=\bar\zeta^2\be_2$ for $|\zeta|=1$, which implies  $U^{\otimes 3}|W_3\rangle=|W_3\rangle$.  So for sure we have a one-parameter group that fixes $\bF(|W_3\rangle)$.  

We now show that the above subgroup is the only subgroup that fixes $|W_3\rangle$.
Indeed, suppose that
\begin{eqnarray*}
\cS=U^{\otimes 3}|W_3\rangle=\x_1\otimes\x_1\otimes\x_2+\x_1\otimes\x_2\otimes\x_1+\x_2\otimes\x_1\otimes\x_1,
\;\x_i^*\x_j=\delta_{ij}, i,j\in[2].
\end{eqnarray*}
where the two pairs of vectors $\be_1,\x_1$ and $\be_2,\x_2$ are linearly independent.
 Consider the rank-one matrix $\cS \times \bar\x_2=\x_1\otimes\x_1$, where  the contraction is on the third mode.  Assume that $\cS=|W_3\rangle$.  Since  $\be_1$ and $\x_1$ are linearly independent it follows that $\x_2^*\be_1\ne 0$.    A straightforward computation implies that the rank of the  matrix $|W_3\rangle\times \bar \x_2$ is $2$.  This contradicts the assumption that $\cS=|W_3\rangle$. Hence the subgroup $G$ is of the form given above.
 
 \noindent
(5)  Observe that $U\bF(|W_3\rangle),  U\in U(2)$ must be also an exposed face of  dimension $4$.  As the subgroup of $U(2)$ that fixes $\bF(|W_3\rangle)$ is one dimensional, it follows that the dimension of union of all faces of the form  $U\bF(|W_3\rangle)$ is $3+4=7$.
Recall that $7$ is the dimension of $\partial B(3,2)$.
\end{proof}

Let us conclude the work with a short list of open questions:
\begin{enumerate}
\item Is the nuclear rank of $\cS\in\bF(|W_3\rangle)$ at most $3$?
\item Is $\bF(|W_3\rangle)$ a face of maximum real dimension?
\item Is $U\bF(|W_3\rangle), U\in U(2)$ the set of all faces of $B_1(3,2)$?
\item Is the generic face of $B_1(3,2)$ of dimension $3$, such that the subgroup of $U(2)$ that fixes generic faces is a finite group?  
In such a  case the semialgebraic set of $U\bF,U\in U(2)$ has dimension $3+4=7$. 
\end{enumerate}

\section*{Acknowledgments}
It is a pleasure to thank Jaros{\l}aw Buczy{\'n}ski
for numerous discussions and constructive suggestions.
We are also grateful to Zbigniew Pucha{\l}a and Oliver Reardon-Smith
for their remarks on the paper 
and to all the referees for their detailed 
comments which allowed us to improve the work.
This research was supported by  National Science Center in Poland
under the Maestro grant number DEC-2015/18/A/ST2/00274, by Foundation for Polish Science under the grant Team-Net NTQC, and Simons collaboration grant for mathematicians.


\appendix
\section{Basic notions of quantum theory}
\label{AppA}

\smallskip
In this Appendix we present definitions of some notions used in quantum theory
and 
discussed in this work. To make it easier for the reader to study the literature 
of the subject we are going to use the Dirac notation presented in Section \ref{subsec:Diracnot}. 
 In short,  the  `ket'  
$|\psi\rangle$ denotes a complex vector of a fixed size $n$ represented by a column, 
the `bra' $\langle \psi |$ is a conjugated (dual) vector forming a row, 
the scalar product is written as a bra--ket,
$\langle \psi |\phi\rangle \in \C$,
while $|\phi \rangle \langle \psi|$ forms an operator,
represented by a rank-one square matrix of size $n$.

The term {\sl state} is understood as a mathematical tool used to calculate the
probability of a given outcome of any measurement.
In the classical probability theory one uses probability vectors, $p=\{p_1, \dots p_n\}$,
such that $p_i\ge 0$ and $\sum_{i=1}^n p_i=1$. The natural number $n$ describes the number of distinguishable events and is fixed.
The set of classical states forms the probability simplex of dimension $n-1$.
A probability vector $p$ with a single component equal to unity,
representing a vertex of the simplex,  is called an extremal point or
 a classical pure state, and corresponds to a certain event.
 Any point inside the simplex can be considered as a classical mixed state.

One of the key notions of  quantum theory is the {\sl quantum state},
which characterizes  the way a physical system was prepared
and allows one to compute the
probability of an outcome of any quantum measurement. 
Let us fix a natural number $n$, assumed here to be finite,
and discuss first a special class of states.
A pure quantum state is represented by 
 a ray in an $n$ dimensional complex Hilbert space ${\mathcal H}_n$.

\medskip

{\bf Definition.} 
{\sl Consider a complex vector $|\psi\rangle \in {\mathcal H}_n$,
  normalized as $||\psi||^2=\langle \psi|\psi\rangle=1$ and
  an arbitrary complex  phase, $e^{i \alpha}$, with $\alpha\in [0, 2 \pi]$.
  A {\sl pure quantum state} denotes the equivalence class,
   $|\psi\rangle \sim e^{i \alpha}|\psi\rangle$.}

\medskip

The space of all pure states forms a complex projective space, $\C P^{n-1}$,
of $2(n-1)$ real dimensions. In the simplest case, $n=2$,
often called a single {\sl qubit} (i.e. a {\sl qu}antum {\sl bit}) system, 
this space  forms a sphere, $\C P^{1}=S^2$,
in physics called the {\sl Bloch sphere}.
The set of pure quantum states is continuous,
in contrast to the discrete set of classical pure states - the corners of the
probability simplex.

\smallskip
A hermitian operator $P_{\psi}= |\psi \rangle \langle \psi|=P_{\psi}^2$ 
is a projection operator onto a pure state  $|\psi \rangle$. 
In physics literature  the term `pure state' may denote a state
 $|\psi \rangle$ or the corresponding projector   $P_{\psi}$
and the meaning depends on the context.
This difference is not  relevant in physics,  
since the vectors representing rays  are in one-to-one correspondence
with the projectors.

Any convex combination of such projectors,
$\rho =\sum_{j=1}^k  q_j  |\psi_j \rangle \langle \psi_j|$,
forms a mixture of pure states, 
where $q$ represents a probability vector of an arbitrary length $k$.
Such a mixture, called a {\sl density matrix} or a {\sl mixed state} or
just a {\sl quantum state}, 
can be introduced in a more formal way.

\medskip

{\bf Definition.} 
{\sl A square matrix $\rho$ of order $n$ is
called  a density matrix if it is hermitian, $\rho=\rho^*$,  positive semi-definite, 
$\rho\ge 0$, and normalized, ${\rm Tr}\rho=1$.}

\medskip

Any convex mixture of 1-dimensional projection operators
$P_{\psi}$ satisfies above properties.
Note that the diagonal entries of a density operator
represented in any basis 
are real and  form a classical state - the probability vector $p$ of length $n$
with components $p_i=\rho_{ii}$, $i=1,\dots n$.
Let $\rho=\sum_{j=1}^n \lambda_j |\chi_j\rangle \langle \chi_j|$
be the eigen-decomposition of the state $\rho$, where the eigenvalues
 $\lambda$   form a probability vector.
The rank  $r$ of the state $\rho$ is equal to the number of its positive eigenvalues.
The case $r=1$ corresponds to the projector,
$\rho=|\chi_1\rangle \langle \chi_1|=\rho^2$.
It forms an extremal point of the set of all density matrices
of order $n$, which explains the term `pure state'.
In Euclidean geometry, induced by the Hilbert-Schmidt distance, 
$d_{\rm HS}(X,Y)=||X-Y||$,
the space of all mixed states for $n=2$ forms a solid Bloch ball
with the Bloch sphere consisting of pure states at its boundary.
In higher dimensions $n\ge 3$ there exist points
from the boundary of the set of mixed states which are not pure
-- see e.g. \cite{BZ17}.

\medskip

In some physical systems one can observe an internal structure and 
can identify its subsystems. Assume first, that the system is {\sl bipartite},
so two parties, called $A$ and $B$ are distinguished.
Let ${\mathcal H}_A$ and ${\mathcal H}_B$ denote Hilbert spaces
used to describe subsystems $A$ and $B$, respectively.
Then the  bipartite system $AB$ is described by
a quantum state from the composite Hilbert space
with a tensor product structure,
 ${\mathcal H}_{AB}= {\mathcal H}_A \otimes {\mathcal H}_B$.
 Once both subsystems $A$ and $B$ are well defined and the 
 above splitting  of ${\mathcal H}_{AB}$ is fixed,
one can introduce the notion of 
separable and entangled states  \cite{Bel64,HHHH09},

\medskip

{\bf Definition.} 
{\sl A bipartite pure quantum  state
$|\psi_{AB}\rangle \in   {\mathcal H}_A \otimes {\mathcal H}_B$
 is called separable, if it has the product form,
$|\psi_{AB}\rangle =|\phi_A\rangle \otimes |\phi_B\rangle$,
where  $|\phi_A\rangle  \in {\mathcal H}_A$ and  $|\phi_B\rangle  \in {\mathcal H}_B$.}

\medskip

{\bf Definition.} 
{\sl A bipartite pure quantum  state
$|\psi_{AB}\rangle \in   {\mathcal H}_A \otimes {\mathcal H}_B$
 is called entangled if it is not separable,
  so it is not of the product form.}

\medskip

The above definitions  do not depend on the choice of the
local bases in both subspaces,
but they do depend on the splitting  of ${\mathcal H}_{AB}$
into ${\mathcal H}_A $ and ${\mathcal H}_B$. Let $n$ and $m$ denote
the dimensions of these two subspaces, respectively.
The entanglement is then invariant with respect to the
{\sl local unitary transformations} from the group $U(n) \otimes U(m)$,
but it can change under global unitary transformations,
 $|\psi\rangle \to V|\psi\rangle$, with  $V\in U(nm)$. 

\medskip
 
 Note that a separable pure state represents independent events,
  for which the joint probability of events measured seperately in both subsystems
   has the product form,
while correlated events correspond to entangled states.
 In the general case of density matrices
 the definition of separability is slightly more involved \cite{We89}.
 
\medskip

{\bf Definition.} 
{\sl A bipartite  quantum (mixed) state
$\rho^{AB}$ acting on the composite space ${\mathcal H}_A \otimes {\mathcal H}_B$
 is called separable,  if it can be represented as a convex combination of
 product states,}
 
 \begin{equation}\rho^{AB}_{\rm sep}= \sum_{j=1}^k   q_j  \;  \rho_j^A \otimes \rho_j^B,\end{equation}
 
 \noindent
 {\sl 
 where $q$ is a probability vector of length $k$,
 while $\{\rho_j^A\}$ and $\{\rho_j^B\}$ denote collections of $k$ quantum states
 acting on Hilbert spaces  ${\mathcal H}_A$ and  ${\mathcal H}_B$, respectively.}

\medskip

{\bf Definition.} 
{\sl A bipartite quantum (mixed) state
$\rho^{AB}$ acting on the composite space ${\mathcal H}_A \otimes {\mathcal H}_B$
 is called entangled if it is not separable}.

\medskip
 
 It is easy to see that in the case of pure states, which are extremal
 and cannot be represented by a mixture of other pure states,
 both definitions of entanglement are consistent. Furthermore, 
  entanglement does not depend on the choice
 of the local bases also for mixed states,
  but it depends on the partition of the total system
 into two parts. In the case of a two-qubit system, $n=m=2$,
 the necessary and sufficient conditions for separability 
 of a given density matrix $\rho$ of size $4$ are known
 \cite{HHHH09,BZ17}, but already for $n=m=3$
 the separability problem remains open.

\medskip
 
The above notions are easy to generalize for {\sl multipartite systems}
which consist of $d\ge 3$ subsystems.
For instance,
 a separable pure state of a $d$--partite system 
has the product form,
$|\psi\rangle =|\phi_1\rangle \otimes |\phi_2\rangle \otimes \cdots \otimes |\phi_d\rangle$. 

 \medskip

 \end{document}